\definecolor{orcidlogocol}{HTML}{A6CE39}
\tikzset{
  orcidlogo/.pic={
    \fill[orcidlogocol] svg{M256,128c0,70.7-57.3,128-128,128C57.3,256,0,198.7,0,128C0,57.3,57.3,0,128,0C198.7,0,256,57.3,256,128z};
    \fill[white] svg{M86.3,186.2H70.9V79.1h15.4v48.4V186.2z}
                 svg{M108.9,79.1h41.6c39.6,0,57,28.3,57,53.6c0,27.5-21.5,53.6-56.8,53.6h-41.8V79.1z M124.3,172.4h24.5c34.9,0,42.9-26.5,42.9-39.7c0-21.5-13.7-39.7-43.7-39.7h-23.7V172.4z}
                 svg{M88.7,56.8c0,5.5-4.5,10.1-10.1,10.1c-5.6,0-10.1-4.6-10.1-10.1c0-5.6,4.5-10.1,10.1-10.1C84.2,46.7,88.7,51.3,88.7,56.8z};
  }
}
\newcommand\orcidicon[1]{\href{https://orcid.org/#1}{\mbox{\scalerel*{
\begin{tikzpicture}[yscale=-1,transform shape]
\pic{orcidlogo};
\end{tikzpicture}
}{|}}}}
\title{The Hydrostructure: a Universal Framework for Safe and Complete Algorithms for Genome Assembly}
\author{Massimo Cairo \thanks{Department of Computer Science, University of Helsinki, Finland, \newline
Email:\{shahbaz.khan,sebastian.schmidt,alexandru.tomescu\}@helsinki.fi}
\and Shahbaz Khan \orcidicon{0000-0001-9352-0088} \footnotemark[1]~\thanks{This work was partially funded by the European Research Council (ERC) under the European Union's Horizon 2020 research and innovation programme (grant agreement No.~851093, SAFEBIO).}
\and Romeo Rizzi \orcidicon{0000-0002-2387-0952} \thanks{Department of Computer Science, University of Verona, Italy,
Email: romeo.rizzi@univr.it.}
\and Sebastian Schmidt \orcidicon{0000-0003-4878-2809} \footnotemark[1]~\footnotemark[2]
\and Alexandru~I.~Tomescu \orcidicon{0000-0002-5747-8350} \footnotemark[1]~\footnotemark[2]~\thanks{This work was partially funded by the Academy of Finland (grants No.~322595, 328877).} 
\and Elia~C.~Zirondelli \footnotemark[3]~\thanks{
Department of Mathematics, University of Trento, Italy, Email: eliacarlo.zirondelli@unitn.it.}}
\date{}
\newtheorem{theorem}{Theorem}[section]
\newtheorem{lemma}[theorem]{Lemma}
\newtheorem{definition}[theorem]{Definition}
\newtheorem{remark}[theorem]{Remark}
\newcommand{\sea}{\ensuremath{\mathsf{Sea}}}
\newcommand{\cloud}{\ensuremath{\mathsf{Cloud}}}
\newcommand{\river}{\ensuremath{\mathsf{River}}}
\newcommand{\vapor}{\ensuremath{\mathsf{Vapor}}}
\newcommand{\heart}{\ensuremath{\mathsf{Heart}}}
\newcommand{\cs}[1]{\ensuremath{\ifstrempty{#1}{k}{#1}}-circular safe}
\newcommand{\ls}[1]{\ensuremath{\ifstrempty{#1}{k}{#1}}-st safe}
\newcommand{\cvs}[1]{\ensuremath{\ifstrempty{#1}{k}{#1}}-circular \ensuremath{F}-visible safe}
\newcommand{\lvs}[1]{\ensuremath{\ifstrempty{#1}{k}{#1}}-st \ensuremath{F}-visible safe}
\newcommand{\ccs}[1]{\ensuremath{\ifstrempty{#1}{k}{#1}}-circular \ensuremath{F}-covering safe}
\newcommand{\lcs}[1]{\ensuremath{\ifstrempty{#1}{k}{#1}}-st \ensuremath{F}-covering safe}
\newcommand{\Rsib}[1]{\ensuremath{R_{sib}^+(\ifstrempty{#1}{e}{#1})}}
\newcommand{\head}[1]{\textsc{head}\ifstrempty{#1}{}{\ifstrequal{#1}{(}{(}{(#1)}}}
\newcommand{\tail}[1]{\textsc{tail}\ifstrempty{#1}{}{\ifstrequal{#1}{(}{(}{(#1)}}}
\newcommand{\wt}[1]{\textsc{start}\ifstrempty{#1}{}{\ifstrequal{#1}{(}{(}{(#1)}}}
\newcommand{\wh}[1]{\textsc{end}\ifstrempty{#1}{}{\ifstrequal{#1}{(}{(}{(#1)}}}
\begin{document}
\pagenumbering{gobble} 

\maketitle

\begin{abstract}

\emph{Genome assembly} is a fundamental problem in bioinformatics, requiring to reconstruct a source genome from an \emph{assembly graph} built from a set of \emph{reads} (short strings sequenced from the genome). Typically, in genome assembly, a solution is considered to be an arc-covering walk of the graph. Since assembly graphs admit many such solutions, a common approach is to find what is definitely present in all solutions, known as \emph{safe subsolutions}. For example, most practical assemblers use \emph{unitigs} at the core of their heuristics, namely paths whose internal nodes have unit in-degree and out-degree, and which are clearly safe. The long-standing open problem of finding \emph{all} the safe parts of the solutions was recently solved [RECOMB 2016] yielding a 60\% increase in contig length. This safe and \emph{complete} genome assembly algorithm was followed by other works improving the time bounds, as well as extending the results for different notions of assembly solution. However, all such results typically used very specific approaches, which did not generalise, and as a consequence could not handle practical issues. As such, it remained open whether one can be \emph{complete} also for models of genome assembly of practical applicability. Moreover, despite previous results presenting \emph{optimal} algorithms, they were based on avoiding {\em forbidden structures} (which are NO-certificates), and hence it was open whether there exists any simple YES-certificate to complete the understanding of the problem structure.

In this paper we present a \emph{universal framework} for obtaining safe and complete algorithms unifying the previous results, while also allowing for easy generalisations to incorporate many practical aspects. This framework is based on an entirely new perspective for studying safety, and on a novel graph structure (the \emph{hydrostructure}) highlighting the reachability properties of the graph from the perspective of a walk. The hydrostructure allows for simple characterisations of safe walks in existing and new models for genome assembly.
Moreover, the hydrostructure serves as a simple YES-certificate for {\em all} the studied models. We can directly adapt almost all our characterisations to \emph{optimal} verification algorithms, and \emph{simple} enumeration algorithms. Most of these enumeration algorithms are also improved to optimality using an incremental computation procedure and an existing optimal algorithm for the basic model~[ACM Trans.~Algorithms 2019, ICALP 2021].

On the theoretical side, we consider the hydrostructure as a generalisation of the standard notion of a \textit{cut}, giving a more flexible technique for studying safety of many other types of covering walks of a graph. On the practical side, we believe that the hydrostructure could prove to be a theoretical breakthrough for a fundamental bioinformatics problem, leading also to an improvement of \emph{practical} genome assembly from the point of view of \emph{completeness}.

\end{abstract}

\newpage
\pagenumbering{arabic} 
\setcounter{page}{1}

\section{Introduction}
\label{sec:intro}

Most problems in Bioinformatics are based at their core on some theoretical computational problem. After initial progress based on heuristics, several such Bioinformatics problems witnessed a drastic improvement in their practical solutions as a consequence of a breakthrough in their theoretical foundations. A major example of this is how the FM-index~\cite{FerraginaM00} revolutionised the problem of \emph{read mapping}, being central in tools such as~\cite{li2009soap2,li2009fast,langmead2012fast}. Other such theoretical breakthroughs include computing quartet distance~\cite{DudekG19,WilliamsWWY15} motivated by \emph{phylogenetics}, or fine-grained complexity lower bounds for \emph{edit distance} computation~\cite{BackursI15} motivated by \emph{biological sequence alignment}. However, despite this successful exchange of problems and results between Bioinformatics and Theoretical Computer Science, another flagship Bioinformatics problem, \emph{genome assembly}, is generally lacking similar developments, where {\em all} practical genome assemblers are fundamentally based on heuristics.
With our results we intend to fill exactly this gap
and enable a similar success story for practical genome assemblers, based on a theoretical result at their core.
Throughout this Introduction, we keep the Bioinformatics details at a minimum, and we refer the reader to \Cref{sec:bioinformatics-motivation} for more practical details and motivation.

\vspace{-.3cm}
\paragraph*{State-of-the-art in genome assembly.}
Given a collection of \emph{reads} (short strings sequenced from an unknown source genome), the task is to reconstruct the source genome from which the reads were sequenced. This is one of the oldest problems in Bioinformatics~\cite{peltola83}, whose formulations range from a shortest common superstring of the reads~\cite{peltola83,K92,KM95}, to various models of node- or arc-covering walks in different \emph{assembly graphs}, encoding the overlaps between the reads (such as \emph{de Bruijn graphs}~\cite{Pevzner1989}, or \emph{overlap graphs}~\cite{M05})~\cite{PTW01,MB09,MGMB07,kapun13b,SequencingHybridizationLysov1988,narzisi14,nagarajan2009parametric}. 
In general, most such theoretical formulations of different models of genome assembly are NP-hard~\cite{kingsford2010assembly,nagarajan2009parametric,narzisi14,kapun13b}. However, 
a more fundamental theoretical limitation is that such ``global'' problem formulations inherently admit a large number of solutions~\cite{kingsford2010assembly}, given the large size and complexity of the input data. 
In practice, genome assemblers output only shorter strings that are likely to be {\em correct} (i.e.~are substrings of the source genome)~\cite{nagarajan2013sequence,DBLP:books/cu/MBCT2015,medvedev2019modeling}. Such a strategy commonly uses the assembly graph to find only the paths whose internal nodes have \emph{unit} in- and out-degree (\emph{unitigs}). Since unitigs do not branch they are {\em correct}, and can be computed in linear time. The use of unitigs dates back to 1995~\cite{KM95} and is at the core of most state-of-the-art genome assemblers, for both {\em long reads} (such as wtdbg2~\cite{Ruan:2020aa}), and {\em short reads} (such as MEGAHIT~\cite{li2015megahit}). Even though {\em long reads} are theoretically preferable, due to various practical limitations {\em short reads} are still used in many biomedical applications, such as  the assembly of the SARS-CoV-2 genome~\cite{Wu:2020aa}.

\vspace{-.3cm}
\paragraph{Surpassing the theoretical limitations using safe and complete algorithms.} 
Despite being at the core of the state-of-the-art in both theory and practice, there is no reason why only unitigs should be the basis of correct partial answers to the genome assembly problem.
In fact, various works~\cite{journals/bioinformatics/Guenoche92,boisvert2010ray,nagarajan2009parametric,DBLP:journals/bioinformatics/ShomoronyKCT16,DBLP:journals/bmcbi/LamKT14,BBT13} presented the {\em open question} about the ``assembly limit'' (if any), or formally, {\em what is all that can be correctly assembled from the input reads}, by considering both graph theoretic and non-graph theoretic formulations. 
Unitigs were first generalised in~\cite{PTW01} by considering the paths having internal nodes with {\em unit} out-degree (with no restriction on in-degree). These were later generalised~\cite{MGMB07,jacksonthesis,kingsford2010assembly} evolving the idea of {\em correctness} to the paths of the form $P = P_1eP_2$, such that $e$ is an edge, the nodes of path $P_1$ have unit in-degree, and the nodes of path $P_2$ have unit out-degree (intuitively, $P_1$ is the only way to reach $e$, and $P_2$ is the only way $e$ reaches other nodes).
The question about the ``assembly limit'' was finally resolved in 2016 for a specific genome assembly formulation in a major Bioinformatics venue~\cite{tomescu2017safe} by introducing {\em safe and complete} algorithms for the problem, described as follows. 

Given a notion solution to a problem on a graph $G$ (i.e.~some type of walk in $G$), a partial solution (i.e.~a walk $W$) is called \emph{safe} if it appears in all solutions of the problem (i.e.~$W$ is a subwalk of all solutions on $G$). An algorithm reporting only safe partial solutions is called \emph{safe}. A safe algorithm reporting \emph{all} the safe partial solutions is called \emph{safe and complete}. A safe and complete algorithm outputs {\em exactly} the parts of the source genome that can be correctly reconstructed from the input reads. 
Notions similar to {\em safety} were studied earlier in 
Bioinformatics~\cite{Vingron01071990,Chao01081993,nagarajan2013sequence}, and in other fields, including {\em persistence}~\cite{doi:10.1137/0603052,Costa1994143,DBLP:journals/mmor/Cechlarova98}, \emph{$d$-transversals}~\cite{DBLP:journals/jco/CostaWP11}, \emph{$d$-blockers}~\cite{DBLP:journals/dm/ZenklusenRPWCB09}, and \emph{vital nodes/edges}~\cite{DBLP:journals/networks/BazganFNNS19}.

Given an assembly graph, the most basic notion of a solution (or a source genome) is that of a closed (i.e.~circular) walk covering all nodes or all arcs at least once, thereby explaining their existence in the assembly graph~\cite{tomescu2017safe,nagarajan2009parametric,narzisi14,medvedev2019modeling,nagarajan2013sequence,DBLP:books/cu/MBCT2015,kingsford2010assembly}.
The safe walks for this notion of solution include unitigs and their generalisations described above. Tomescu and Medvedev~\cite{tomescu2017safe} characterised the safe walks w.r.t.~closed arc-covering walks as {\em omnitigs}. On simulated error-free reads where the source genome is indeed a closed arc-covering walk, omnitigs were found to be on average 60\% longer than unitigs, and to contain 60\% more biological information without employing any heuristics.
Moreover,~\cite{tomescu2017safe} presented an $O(m^2n)$-time algorithm\footnote{Trivial analysis using new results about omnitigs proves $O(m^2n)$ time for~\cite{tomescu2017safe}, though not explicitly stated.} finding \emph{all} maximal safe walks for such genome assembly solutions, in a graph with $m$ arcs and $n$ nodes.
Later, Cairo et al.~\cite{DBLP:journals/talg/CairoMART19} improved this bound to $O(mn)$, which they also proved to be optimal by using graphs having $\Theta(mn)$-sized solutions. 
Recently, Cairo et al.~\cite{cairo2020macrotigs} presented a \emph{linear-time} output-sensitive algorithm for computing all maximal omnitigs, using a compact representation of the safe walks, called {\em macrotigs}.

\vspace{-.3cm}
\paragraph{Limitations of the existing theory of safe and complete algorithms.}
Despite presenting optimal algorithms, 
a theoretical limitation of the previous results from \cite{tomescu2017safe,DBLP:journals/talg/CairoMART19,acosta2018safe} is that safety is characterised in terms of \emph{forbidden structures} (i.e.~NO-certificates) for the safety of a walk.
These turned out to be an unnatural view on more advanced models of genome assembly, where only a subset of the omnitigs is safe (see the next subsection).
As such, these characterisations were incomplete and unnatural in the absence of easily verifiable YES-certificates.
To illustrate this, consider for example the classical notion of a \emph{strong bridge} $(x, y)$ (see \Cref{fig:bridge-cut}).
Its NO-certificate is a path from $x$ to $y$ avoiding $(x,y)$, which can thus be seen as a forbidden path. 
The corresponding YES-certificate is a \emph{cut} between a set of nodes $S$ (containing at least the nodes reachable from $x$ without using $(x,y)$), and the remaining nodes $T$ (containing at least the nodes reaching $y$ without using $(x,y)$), such that the only arc crossing the cut is $(x,y)$.
Such a certificate captures much more information about the structure of the graph from the viewpoint of the arc $(x, y)$.
We generalise this idea from single arcs to walks to get a new perspective on safety problems in genome assembly. For that, we use a similar graph structure (now recognizing \emph{safe} walks) which is essentially a generalisation of a cut, and hence a YES-certificate, which is described as follows.

\begin{figure}[h]
    \centering
    \includegraphics[scale=0.8]{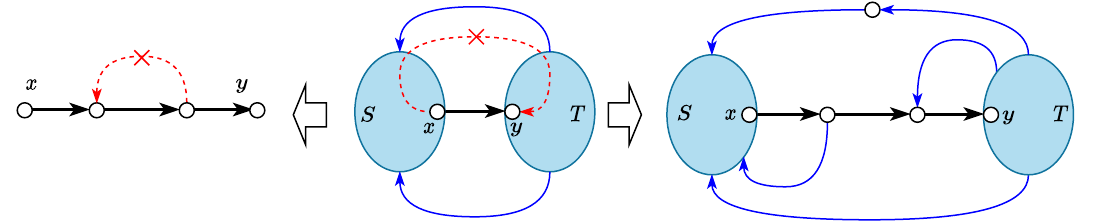}
    \caption{
    \small The perspective of a safe walk (bold black), generalizing a strong bridge. 
    Center: The strong bridge $(x,y)$ has a NO-certificate in form of a forbidden path (red), and a YES-certificate in form of a cut between $S$ and $T$ where $(x,y)$ is the only edge leaving $S$. Left: The NO-certificate of the safety of the walk from $x$ to $y$ is a similar forbidden path (red). Right: The YES-certificate of the safety of the walk from $x$ to $y$ is a graph partition 
    where $T$ is reachable from $S$ only by using the walk from $x$ to $y$ contiguously. 
    }
    \label{fig:bridge-cut}
\end{figure}

Our perspective to study safety distinguishes between the trivial reason for safety, i.e., the {\em covering} constraint of an arc (and the left and right extensions the arc), from the more profound reason arising from the {\em bridge-like} nature of some walks, which generalises the property of a strong bridge.
Such {\em bridge-like} walks are required to be traversed contiguously for reachability between some nodes. Thus, analogous to NO-certificates of a strong bridge (see \Cref{fig:bridge-cut}), 
a bridge-like walk (say from $x$ to $y$) requires the absence of a forbidden path which allows reaching $y$ from $x$ without traversing the whole walk contiguously, as described in previous results~\cite{tomescu2017safe,DBLP:journals/talg/CairoMART19,acosta2018safe,cairo2020macrotigs}. Similarly, analogous to the YES-certificate of a strong bridge (see \Cref{fig:bridge-cut}), a bridge-like walk has a directed {\em cut-like} structure between the set of nodes $S$ reachable from $x$ without traversing the whole walk, and the set of nodes $T$ reaching $y$ without traversing the whole walk. Crossing the cut from $y$ to $x$ uses the remainder of the graph, which are the nodes and edges outside the induced subgraphs of $S$ and $T$. On the other hand, crossing the cut from $x$ to $y$ (or $S$ to $T$) requires traversing the whole walk contiguously.
This partitions the whole graph from the perspective of the walk, reducing the requirement for the contiguous traversal of the walk, to the simple requirement of reaching $y$ from $x$, allowing simpler characterisations of more advanced models.
Moreover, we can now use the same YES-certificate for many different models, as opposed to finding a NO-certificate separately for each model.
Thus, our new perspective of a safe walk as a generalisation of a strong bridge
results in a universal approach for the complete characterisation of safe walks.

\vspace{-.3cm}
\paragraph{Formulation of practically relevant genome assembly models.}
Modeling a genome assembly solution as a single arc-covering walk is extremely limiting in practice, due to the  presence of multiple genomes in the sample (not necessarily circular), sequencing errors or unsequenced genomic regions. See \Cref{sec:bioinformatics-motivation} for further motivation of our definitions below, and a discussion about how to handle practical aspects of the data such as multiplicities (i.e. k-mer abundance).

Most of these practical issues can be handled by considering more flexible theoretical formulations of the problem. Instead of always considering the solution to be a \emph{single} closed arc-covering walk of the assembly graph, we can change the model so that the solution is an \emph{arc-covering} collection of up to $k \geq 2$ closed walks (i.e.~every arc appears in some walk of the collection). 
Further, for addressing \emph{linear} genomes, or unsequenced genomic regions, we further change the model so that the solution is one \emph{open} arc-covering walk from a given node~$s$ to a given node~$t$ (\emph{$s$-$t$ walk}), or an arc-covering collection of up to $k \geq 2$ open $s$-$t$ walks. Moreover, if there is no constraint $k$ on the number of walks in the collection, then we will say that $k = \infty$.

\begin{definition}[\cs{k} walk, \ls{k} walk]
\label{def:safe-walks-circular-linear}
Let $G = V \cup E$ be a graph, let $s,t \in V$ and let $k \geq 1$. A walk $W$ is called \emph{\cs{k}} (or \emph{\ls{k}}) if $W$ is a subwalk of at least one walk of any arc-covering collection of up to $k$ circular walks (or up to $k$ walks from $s$ to $t$).
\end{definition}

\begin{remark}
    A graph admits an arc-covering collection of $k \geq 1$ circular walks if and only if it is a disjoint union of at most $k$ strongly connected graphs. As such, in the circular models we can assume the graph to be strongly connected. In the linear models, we first solve the strongly connected case, and then solve the cases $k = 1,\infty$ for non-strongly connected graphs.
\end{remark}

Further, the notion of genome assembly solution can be naturally extended to handle sequencing errors. For example, the models can be extended so that the collection of walks is required to cover only a \emph{subset} $F$ of the arcs (\emph{$F$-covering}). Another possible extension is to mark the erroneous arcs in $E \setminus F$ as \emph{invisible}, in the sense that they are invisible when we define safety (\emph{$F$-visible}, see below). These not only allow handling errors, but also allow handling even more general notions of genome assembly solutions through simple reductions (see \Cref{rem:subset-covering-visibility-applications}).

\begin{definition}[Subset covering / visible]
\label{def:safe-walk-covering-visible}
Let $G = V \cup E$ be a graph, let $s,t \in V$, $F\subseteq E$, and let $k \geq 1$. A walk $W$ is called:
\begin{itemize}[nolistsep]
    \item \emph{\ccs{k}} (or \emph{\lcs{k}}) if $W$ is a subwalk of at least one walk of every $F$-covering collection of up to $k$ circular walks (or up to $k$ walks from $s$ to $t$).
    \item \emph{\cvs{k}} (or \emph{\lvs{k}}) if the $F$-subsequence of $W$ occurs contiguously in the $F$-subsequence of at least one walk of every arc-covering collection of up to $k$ circular walks (or up to $k$ walks from $s$ to $t$). The \emph{$F$-subsequence} of a walk is obtained by removing its arcs \emph{not in $F$}.
\end{itemize}
\end{definition}

\begin{remark}
\label{rem:subset-covering-visibility-applications}
The subset covering model can also solve a generalisation of the linear models, 
where the walks in the collection can start and end in any node in given \emph{sets} $S$ and $T$, respectively:
set $F = E$, and add new nodes $s$ and $t$ connected to / from all nodes in $S$ and $T$, respectively. Moreover, we can also combine the subset covering and visibility models for some $F,F'\subseteq E$, to get \emph{$F$-covering $F'$-visible safe walks}, for both circular and linear models, and obtain analogous results (see also \Cref{fig:characterisation-overview}). By this, we can also solve the same models in a node-centric formulation, where only (a subset of) the nodes are required to be covered and/or visible: expand each node to an arc and choose only such node-arcs to be covered and/or visible.
\end{remark}

\section{Contributions}
We obtain safe and complete algorithms for a plethora of \emph{natural} genome assembly formulations, as stated in \Cref{def:safe-walks-circular-linear,def:safe-walk-covering-visible} and \Cref{rem:subset-covering-visibility-applications}, with an entirely {\em new perspective}, and a \emph{universal framework} for safety, the \emph{hydrostructure} of a walk.

\vspace{-.3cm}
\paragraph{The Hydrostructure as a Universal Framework for safety.} 
The hydrostructure gives a more structured view of the graph from the perspective of the walk, allowing for simple safety characterisation for \emph{all} models (see \Cref{fig:characterisation-overview}).
We characterise \cs{k} walks for any given $k$ (where $1<k<\infty$), which was mentioned as an open problem in~\cite{acosta2018safe}, and prove the equivalence of the \cs{k} problems for all $k\geq 2$.
We also present the \emph{first} characterisations for the {\em single and multiple linear} models (mentioned as open problems in~\cite{tomescu2017safe}), which are in general harder because the location of $s$ and $t$ affects safety. 
Moreover, the {\em flexibility} of the hydrostructure allows us to generalise these to the more practically relevant {\em subset covering} and {\em visibility} models.

Even though the hydrostructure is developed mainly for such arc-covering models, it reveals a novel fundamental insight into the structure of a graph, which may be of independent interest (see \Cref{sec:hydrostructure}). It captures the bridge-like characteristic of a walk for characterizing safety. This distinguishes the core of a safe walk from its trivially safe extensions on both sides, i.e., along the only way to reach it (its {\em left wing}, if exists), and along the only way forward from it (its {\em right wing}, if exists).
Its water-inspired terminology stems from the standard notions of ``source'' and ``sink'' nodes, but in the present case of a strongly connected graph now mimics the water cycle on Earth~\cite{wiki:waterCycle}. The hydrostructure distinguishes the source part as $\sea$ and the target part as $\cloud$. The internal part of the bridge-like walk forms the $\vapor$, which is the only way water moves from $\sea$ to $\cloud$. On the other hand, water can easily move from $\cloud$ to $\vapor$ and from $\vapor$ to $\sea$. The rest of graph forms the $\river$, which in general serves as an alternate %
route for the water from $\cloud$ to $\sea$. For non-bridge-like walks the hydrostructure trivially reduces to $\vapor$ being the entire graph, and the remaining components being empty. For any walk, we show that its hydrostructure can be easily computed in {\em linear} time, by evaluating the restricted reachability of the end points of the walk.

\begin{figure}
    \centering
    \includegraphics[trim={0 0 0 10},clip,scale=.75]{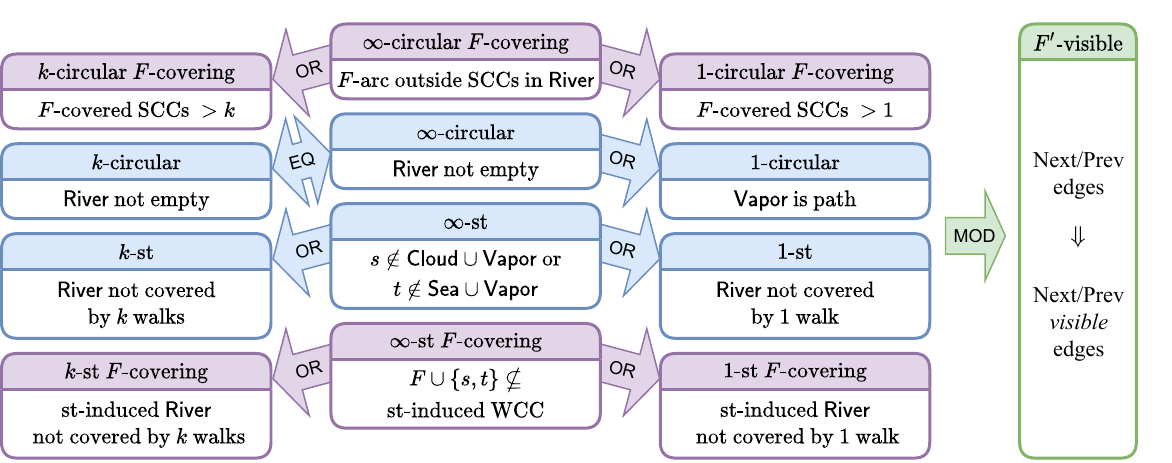}
    \caption{\small %
    An overview of the characterisations of safety for non-trivial walks in all our models. The arrow labeled EQ means equivalent criteria; an arrow labeled OR means that the target characterization is the OR between the criteria of the endpoints of the arrow; the arrow labeled MOD means modification of the source characterization as described.
    For simplicity of presentation, the criteria for trivial walks and wings (for linear models) are not included, despite being simple criteria based on hydrostructure.
    Note that all characterisations imply that the \vapor{} is a path, (otherwise \sea{}, \cloud{} and \river{} would be empty, making all the criteria trivially false) and hence include being \cs{1}.
    }
    \label{fig:characterisation-overview}
\end{figure}

\vspace{-.3cm}
\paragraph{Simple and \emph{full} characterisation for existing and new problems.}
For single circular walks, the previous characterisations identified forbidden paths \cite{tomescu2017safe,DBLP:journals/talg/CairoMART19} resulting only in a NO-certificate. Our characterisation simplifies it to merely $\vapor$ not being the entire graph, which basically signifies that the walk is bridge-like, resulting in easily verifiable both YES- and NO-certificates. For $\infty$-circular safe walks, the previous characterisation further requires the presence of a {\em certificate arc}~\cite{acosta2018safe} forbidding cycles of certain types, resulting in a NO-certificate which is even harder to verify than in the 1-circular model. Our characterisation only adds an additional constraint of having a non-empty $\river$, drastically simplifying the characterisation. Note that it is the same for any given number $k \geq 2$ of closed walks, proving their equivalence. See \Cref{s:closed} for details.

For linear safe models, surprisingly every solution has to be \cs{1}, allowing us to build the characterisation (and hence algorithms) of linear models on top of those for the single circular model. In the single linear model, $\river$ additionally has to be a single path, with some criteria based on the location of $s$~and~$t$. However, the multiple linear model generalises the criteria on the $\river$ to be based on its {\em minimum walk-cover} being of a certain size. Also, we require additional characterisations for the safety of the wings, based on the reachability of $s$ and $t$ in the candidate walk. See \Cref{s:open} for details.
In the subset covering model, the previous characterisations are easily extendable using more concrete criteria involving the corresponding subset $F$. In the subset visibility model, the hydrostructure itself is generalised for the corresponding subset $F$, so that the existing characterisations can be directly applied without any changes. Finally, for all these problems, the hydrostructure itself (or in some cases with additional walk cover\footnote{For simplicity we use the walk cover (a NO-certificate) in our characterisation, which have a simple equivalent YES-certificate using \textit{maximum arc antichain.}}) serves as both the YES- and NO-certificate. See \Cref{sec:subsetCov} for details.

The simplicity of our characterisations motivates us to introduce the problem of {\em verifying} whether a given walk is safe in a model. Despite this problem not being explicitly studied earlier, the previous characterisations~\cite{tomescu2017safe,DBLP:journals/talg/CairoMART19,acosta2018safe} resulted in $O(mn)$ time verification algorithms, which again depend on the corresponding certificate for the model. Recently, Cairo et al.~\cite{cairo2020macrotigs} presented an algorithm which can be adapted to a linear-time verification algorithm for \cs{1} walks, but it uses complex data structures. 
Our characterisations are directly adaptable to linear time {\em optimal} verification algorithms for almost all of our models, using simple techniques such as graph traversal. See \Cref{tab:results} for a comparison.

\begin{table}[t!]
\small 
    \centering
    \begin{tabular}{|c|c|c|c|c|c|}
\hline        
\textbf{Safety} &  \multicolumn{2}{c|}{\textbf{Previous Results}} & \multicolumn{3}{c|}{\textbf{New Results}} \\
\cline{2-6}
\textbf{Problems} & \textbf{Verify} & \textbf{Enumerate} & \textbf{Verify} & \textbf{Enum. Trivial} & \textbf{Enum. Improved}\\ \hline
$1$-circular & $O(mn)$ &  $O(m^2n)$~\cite{tomescu2017safe} & $O(m)^*$ & $O(m^2n)$ & - \\
 & $O(mn)$ & $O(mn)^*$~\cite{DBLP:journals/talg/CairoMART19} &  & & \\
& $O(m)^*$ & $O(m+o)^*$~\cite{cairo2020macrotigs} & & & \\ 
$k$-circular & - & - & $O(m)^*$ & $O(m^2n)$& $O(mn)^*$ \\
$\infty$-circular & $O(mn)$ & $O(m^2n)$~\cite{acosta2018safe} & $O(m)^*$ & $O(m^2n)$ & $O(mn)^*$ \\
\hline
$1$-st & - & - & $O(m)^*$ & $O(m^2n)$  & $O(mn+o)^*$ \\
$k$-st & - & - & $O(mn)^\#$ & $O(m^2n^2)^\#$   & $O(m^2n+o)^\#$ \\
$\infty$-st & - &  - & $O(m)^*$ & $O(m^2n)$ & $O(mn+o)^*$ \\
\hline
    \end{tabular}
    \caption[Our runtime bounds.]{\small A comparison of the previous results for safety problems with the new results trivially obtained from the characterisation, and the improved results. The size of the output is represented by $o$.
    The optimal algorithms are marked by~$^*$.
    The time bounds using a simple minimum walk cover algorithm are marked by~$^\#$, which may be improved\footnotemark{} with faster algorithms for computing the size of a minimum walk cover. The solution for this model ($^\#$) is also limited to {\em strongly connected graphs}.
    All the above models can also be extended to {\em subset covering} using the same bounds and to {\em subset visibility} with an additive $O(mn)$ term.
    }
    \label{tab:results}
\end{table}

\footnotetext{
    Our solutions for \ls{k} for {\em general} $k$ requires the query ``is the size of a minimum walk-cover of a subgraph greater than $k$?''. This can be answered using flows in $f=O(mn)$ time, and maintained during the incremental computation in $g=O(m^2)$ time, resulting in the stated bounds. Considering them as black box algorithms, our bounds for verification, enumeration and improved enumeration algorithm are $O(m+f)$, $O(m^2n+mnf)$ and $O(mn+o+ng)$ respectively.}

\vspace{-.3cm}
\paragraph{Novel techniques to develop optimal enumeration algorithms.}
Our verification algorithms can also be adapted to simple $O(m^2n)$ time algorithms (except for {\em given} \ls{k}) to enumerate all the maximal safe walks. Further, using the optimal \cs{1} algorithm~\cite{DBLP:journals/talg/CairoMART19} and our new characterisation, we not only improve upon \cite{acosta2018safe} for $\infty$-circular walks, but also make our \cs{k} algorithm {\em optimal}.
In order to improve our linear algorithms we present a {\em novel technique} to compute the hydrostructure for all subwalks of a {\em special} walk using incremental computation. Since the linear characterisations are built on top of the \cs{1} walks, we use the concise representation of \cs{1} walks in $O(n)$ special walks~\cite{cairo2020macrotigs}.
At its core, the incremental computation uses the incremental reachability  algorithm~\cite{Italiano86} requiring total $O(m)$ for each special walk. Surprisingly, every node and arc can enter and leave $\river$ exactly once during the incremental computation, allowing its maintenance in total $O(m)$ time. 
This results in {\em optimal} \ls{k} algorithms for $k= 1,\infty$, while for the general $k$ we also require to check if a subgraph can be covered with $k$ walks. Moreover, we prove the optimality of our results (except for \emph{given} $k$) by presenting a family of graphs having the size of all maximal safe walks $\Omega(mn)$. See \Cref{tab:results} for a comparison, and \Cref{sec:algorithms} for further details.

\medskip
Summarising, the hydrostructure is a {\em mathematical tool} which is {\em flexible} for addressing a variety of models (handling different types of genomes, errors, complex or unsequenced regions), and is {\em adaptable} to develop simple yet efficient algorithms. Thus, we believe the various results of this paper can form the theoretical basis of future \emph{complete} genome assemblers. 
We hope that our paper together with~\cite{cairo2020macrotigs} 
can enable the state-of-the-art genomic tools to have advanced theoretical results at their core, as e.g.~the FM-index in read mapping tools, leading to similar success story for genome assembly.

\section{Preliminaries}
\label{sec:preliminaries}

\paragraph{Graph Notations.}
A \emph{graph} is a set $G = V \cup E$, where $V$ is a finite set of \emph{nodes}, $E$ is a finite multi-set of ordered pairs of nodes called \emph{arcs}; given an arc $e = (u,v)$, we say that $u$ is the \emph{tail} node $\tail{}(e)$ of $e$, and $v$ is the \emph{head} node $\head{}(e)$ of $e$.
Self-loops are allowed. 
A \emph{strongly connected component} (SCC) is a maximal subgraph that is strongly connected, i.e. for any two nodes $x$ and $y$ in the SCC, there exists a path from $x$ to $y$. Similarly, a {\em weakly connected component} (WCC) is a maximal subgraph that is weakly connected, i.e., any two nodes $x$ and $y$ in the WCC are connected by an undirected path.
A node is a \emph{source} if it has no incoming arcs, and a \emph{sink} if it has no outgoing arcs.
For $e \in E$, we denote $G \setminus e = V \cup E \setminus \{e\}$.
In the rest of this paper, we assume a fixed strongly connected graph $G= V \cup E$ which is not a cycle~\footnote{Safe walks in a cycle are not properly defined as they can repeat indefinitely.}, with $|V| = n$ and $|E| = m > n$.

A \emph{$w_1$-$w_\ell$ walk} (or simply \emph{walk}) in $G$ is a non-empty alternating sequence of nodes and arcs $W = (w_1, \dots, w_\ell)$, where for all $1 \leq i < \ell$: $\head{}(w_{i}) = w_{i+1}$ if $w_{i+1}$ is a node, and $\tail{}(w_{i+1}) = w_i$ otherwise.
Deviating from standard terminology, if not otherwise indicated, $w_1$ and $w_\ell$ are \emph{arcs}.
We call $\wt{}(W) = w_1$ the \emph{start} of $W$, and $\wh{}(W) = w_\ell$ the \emph{end} of $W$.
$W$ is a \emph{path} if it repeats no node or arc, except that $\wt{}(W)$ may equal $\wh{}(W)$.

A walk $W$ is called \emph{closed} if $\wt{}(W) = \wh{}(W)$ and $\ell > 1$, otherwise it is \emph{open}.
The notation $WW'$ denotes the concatenation of walks $W = (w_1, \dots, w_\ell)$ and $W' = (w'_1, \dots, w'_{\ell'})$, if $(w_1, \dots, w_\ell, w'_1, \dots, w'_{\ell'})$ is a walk. When writing a walk as a concatenation, then lower-case letters denote single nodes or arcs, i.e. $aZb$ denotes the walk $(a, z_1, \dots, z_\ell, b)$ (where $a$ and $b$ are arcs).
Subwalks of walks are defined in the standard manner, where subwalks of closed walks do not repeat the start/end if they run over the end.
A walk $W$ that contains at least one element is \emph{bridge-like} if there exist $x, y \in G$ such that each $x$-$y$ walk contains $W$ as subwalk, and otherwise it is called \emph{avertible}.
Observe that a bridge-like walk is an open path.

A \emph{walk-cover} of a graph is a set of walks that together cover all arcs of the graph, and its size is the number of walks.
The minimum walk-cover (and hence its {\em size}) can be computed in $O(mn)$ time using minimum flows, by reducing the problem to maximum flows (see e.g. \cite{bang2008digraphs}) and applying Orlin's and King's $O(mn)$ time algorithms \cite{orlin2013max,king1994faster}. %

\vspace{-.3cm}
\paragraph{Safety Notations.}
We call a node $v$ a \emph{join node} (or \emph{split node}) if its in-degree (or out-degree) is greater than $1$.
Similarly, an arc $e$ is a \emph{join arc} (or \emph{split arc}) if $\head{e}$ is a join node (or $\tail{e}$ is a split node).
Two arcs $e$ and $e'$ are \emph{sibling arcs} if $\head{}(e) = \head{}(e')$ or $\tail{}(e) = \tail{}(e')$.
A walk $W = (w_1, \dots, w_\ell)$ is \emph{univocal} if at most $w_\ell$ is a split node (if it is a node), and it is \emph{R-univocal} if at most $w_1$ is a join node (if it is a node), and it is \emph{biunivocal} if it is both univocal and R-univocal.
Its \emph{univocal extension} $U(W)$ is $W^lWW^r$ where $W^l\wt(W)$ is the longest R-univocal walk to $\wt(W)$ and $\wt(W)W^r$ is the longest univocal walk from $\wh(W)$.

\begin{figure}[htb]
    \centering
    \includegraphics[scale=1.4]{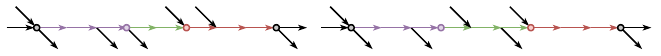}
    \caption{\small A trivial walk on the left and a non-trivial walk on the right.
    The heart is colored green, the left wing violet and the right wing red.
    The left wing is from arc to node and the right wing from node to arc.}
    \label{fig:hearts_and_wings}
\end{figure}

For a walk $W = w_1, \dots, w_\ell$ let $w_i$ be its first join arc or $w_i = w_1$ if $W$ has no join arc, and let $w_j$ be its last split arc or $w_j = w_\ell$ if $W$ has no split arc.
If $i \geq j$, then the \emph{trivial heart} (or simply \emph{heart}) $\heart(W)$ of $W$ is its $w_j$-$w_i$ subwalk, and otherwise the \emph{non-trivial heart} (or simply \emph{heart}) $\heart(W)$ is its $w_i$-$w_j$ subwalk.
A walk with a trivial heart is a \emph{trivial walk}, and a walk with a non-trivial heart is a \emph{non-trivial walk}.
The \emph{left wing} and \emph{right wing} of $W$ are $W^l$ and $W^r$ in the decomposition $W^l\heart(W)W^r$.
See \Cref{fig:hearts_and_wings} for a visualisation of these definitions.

Further, we use the following results from previous works in the literature. The first result was used in~\cite{acosta2018safe,cairo2020macrotigs} (and possibly other previous works) even though not stated explicitly.

\begin{theorem}[restate = prelim, name = ]
    \label{thm:prelim}
    For a strongly connected graph $G$ with $n$ nodes and $m$ arcs, the following hold:
    \begin{enumerate}[label = (\alph*), nosep]
        \item \label{thm:two-pointer-algirithm} \emph{(Two-Pointer Algorithm \cite{acosta2018safe,cairo2020macrotigs})} Given a walk $W$ and a procedure $A$ to verify the safety of its subwalks where all subwalks of a safe walk are safe, then all the maximal safe subwalks of $W$ can be reported in $O(|W|f(m,n))$ time, where each invocation of $A$ requires $f(m,n)$ time.
        \item \label{item:omnitig-properties} \emph{(Omnitig Bounds~\cite{DBLP:journals/talg/CairoMART19,cairo2020macrotigs})} There are at most $m$ maximal \cs{1} walks~\cite{DBLP:journals/talg/CairoMART19} where each has length of at most $O(n)$~\cite{DBLP:journals/talg/CairoMART19} and of which $O(n)$ are non-trivial~\cite{cairo2020macrotigs}, and the total length of all maximal \cs{1} walks in $G$ is $O(mn)$~\cite{DBLP:journals/talg/CairoMART19} (if $G$ is not a cycle).
        \item \label{thm:prelim:georgiadis} {\em (Fault tolerant SCCs \cite{georgiadis2017strong})} We can preprocess $G$ in $O(m)$ time, to report whether $x$ and $y$ are in the same SCC in $G \setminus z$ in $O(1)$-time, for any $x, y, z \in G$. %

    \end{enumerate}
\end{theorem}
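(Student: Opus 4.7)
The theorem collects three results from the literature, so the plan is to sketch the argument of each in turn; none of them require genuinely new ideas beyond a careful reapplication of the cited work.

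For part (a), I would maintain a sliding window $[i,j]$ over the elements of $W$ subject to the invariant that the subwalk $W[i,\dots,j]$ is safe (certified by a prior call to $A$). Starting from $i = j = 1$, in each iteration I query $A$ on $W[i,\dots,j+1]$: if it is safe, I advance $j$; otherwise I record $W[i,\dots,j]$ as a candidate maximal safe subwalk and advance $i$ by one, without re-querying $A$, because the hypothesis ``every subwalk of a safe walk is safe'' guarantees that $W[i+1,\dots,j]$ remains safe. Both pointers move monotonically from $1$ to at most $|W|$, yielding at most $2|W|$ calls to $A$ and hence $O(|W|f(m,n))$ time. Maximality is enforced by discarding any reported window that is strictly contained in the next one, a constant-time check per output.

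For part (b), the bound of at most $m$ maximal \cs{1} walks follows from the fact, proved in \cite{DBLP:journals/talg/CairoMART19}, that each maximal omnitig is uniquely determined by its starting arc, so at most one maximal omnitig per arc exists. The per-walk length bound of $O(n)$ is the structural backbone of the same reference: any omnitig of length $\omega(n)$ would contain a forbidden ``closing'' path, contradicting safety. The sharper bound that only $O(n)$ maximal omnitigs are non-trivial is obtained through the macrotig decomposition of \cite{cairo2020macrotigs}, which shows that every non-trivial heart embeds into one of $O(n)$ canonical walks of total length $O(m+n)$. Summing gives total length $O(mn)$ for maximal \cs{1} walks. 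Reproducing the forbidden-path structural lemma is where the real work would sit, but I would import it as a black box from the cited papers.

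For part (c), I would invoke the data structure of Georgiadis et al.\ \cite{georgiadis2017strong}, whose $O(m)$-time construction builds dominator-tree and loop-nesting-forest structures so that each query ``are $x$ and $y$ in the same SCC of $G\setminus z$?'' reduces to a constant number of ancestor/descendant tests, each answered in $O(1)$ after DFS preorder numbering. No reproving is needed on my side. The main obstacle, predictably, lies in part (b): the per-walk length bound and the count of non-trivial maximal omnitigs both hinge on substantial combinatorics of the forbidden-path characterisation, and a standalone proof would essentially retrace the technical core of \cite{DBLP:journals/talg/CairoMART19,cairo2020macrotigs}; the other two parts are routine sliding-window and black-box arguments, respectively.
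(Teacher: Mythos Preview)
Your proposal is correct and, in fact, goes further than the paper itself: the paper states \Cref{thm:prelim} purely as a collection of cited preliminaries without giving any proof, so your sketches of the two-pointer sliding-window argument, the imported omnitig bounds, and the black-box invocation of \cite{georgiadis2017strong} are exactly the intended justifications. The only very minor quibble is the precise injection used for the ``at most $m$ maximal omnitigs'' bound in part~(b) (the cited paper argues via a particular arc rather than generically ``the starting arc''), but since you explicitly import that lemma as a black box this does not affect correctness.
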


\section{Hydrostructure}
\label{sec:hydrostructure}

The hydrostructure of a walk partitions the strongly connected graph from the perspective of the walk; for a \emph{bridge-like} walk it identifies two parts of the graph separated by the walk, and gives a clear picture of the reachability among the remaining parts. This allows an easy characterisation in problems (such as safety) that inherently rely on reachability. 
It is defined  using the {\em restricted forward and backward reachability} for a walk, as follows.

\begin{definition}
    The \emph{restricted forward and backward reachability} of a walk $W$ is defined as
    \begin{itemize}[nosep]
        \item $R^+(W) = \{x \in G \mid \exists \text{ $\wt{}(W)$-$x$ walk } W' \text{s.t.~} W \text{ is not a subwalk of $W'$}\}$,
        \item $R^-(W) = \{x \in G \mid \exists \text{ $x$-$\wh{}(W)$ walk } W' \text{s.t.~} W \text{ is not a subwalk of $W'$}\}$.
        \end{itemize}
\end{definition}

By definition, all nodes and arcs of $W$ (except for its last arc $\wh{}(W)$, since by our convention the walks start and end in an {\em arc}), are always in $R^+(W)$ (and symmetrically for $R^-(W)$). Further, if a walk is not bridge-like, we additionally have $\wh{}(W) \in R^+(W)$ as we will shortly prove, and $R^+(W)$ extends to the entire graph $G$ due to strong connectivity. Thus, inspired by the similarity between the water cycle on Earth and the reachability among the four parts of the Venn diagram of $R^+(W)$ and $R^-(W)$ (see \Cref{fig:Hydro-venDiagram}), we define the hydrostructure of a walk as follows.

\begin{definition}[Hydrostructure of a Walk]
 \label{def:clouds}
 Let $W$ be a walk with at least two arcs.
 \begin{itemize}[nosep]
    \item
     $\sea(W) = R^+(W) \setminus R^-(W)$.
   \item
     $\cloud(W) = R^-(W) \setminus R^+(W)$.
   \item
     $\vapor(W) = R^+(W) \cap R^-(W)$.
   \item
     $\river(W) = G \setminus (R^+(W) \cup R^-(W))$.
 \end{itemize}
 \end{definition}

\begin{figure}[htb]
    \centering
    \begin{subfigure}[m]{0.48\linewidth}
    \centering
    \includegraphics[trim=0 2 10 1, clip, scale=.85]{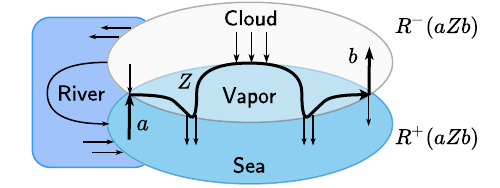}
    \caption{\label{fig:Hydro-venDiagram}}
    \end{subfigure}
    \hspace{1em}
    \begin{subfigure}[m]{0.48\linewidth}
    \centering
    \includegraphics[trim=0 2 0 1, clip, scale=0.85]{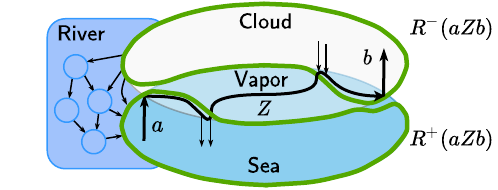}
    \caption{\label{fig:Hydro-properties}}
    \end{subfigure}
    \caption{\small (a): The hydrostructure partitioning of a non-trivial heart $aZb$ in the sets $\sea$, $\cloud$, $\vapor$, and $\river$ using $R^+(aZb)$ and $R^-(aZb)$. It also shows the membership of the arcs, and the reachability among the parts. (b): The hydrostructure for a trivial bridge-like walk $aZb$, with the sea-related and cloud-related SCC shown using {\em thick green} cycles.
    The $\river$ is a DAG of SCCs where the SCCs are blue cycles.
    }
    \label{fig:clouds}
\end{figure}

The reachability among the parts of the hydrostructure mimics the water cycle, which we describe as follows (formally proved later).
For any node or arc in the $\sea$, the only way to reach the $\cloud$ is by traversing the walk through the entire $\vapor$ (similar to the evaporation process), as it is shared by both $R^+(W)$ and $R^-(W)$.
Thus, for a bridge-like walk $W$, the hydrostructure exposes the $\sea$ and the $\cloud$ that are separated by $W$ in the $\vapor$.

On the other hand, a node or an arc from the $\cloud$ can directly reach the $\vapor$ (by dissipation), being in $R^-(W)$, and from the $\vapor$ it can directly reach the $\sea$ (by condensation), being in $R^+(W)$.
Finally, the $\river$ acts as an alternative path from the nodes and arcs in the $\cloud$ to the $\sea$ (by rainfall), since the forward reachability from the $\cloud$ and the backward reachability from the $\sea$ are not explored in $R^+(W)$ and $R^-(W)$.

\paragraph{Properties.}

The strong connectivity of the graph results in the hydrostructure of an {\em avertible} walk to have the entire graph as the $\vapor$, whereas the $\sea$, $\cloud$, and $\river$ are empty. On the other hand, for a bridge-like walk, the $\vapor$ is exactly the internal path of the walk, resulting in the following important property for any walk $W=aZb$. %

\begin{lemma}[restate = cloudcases, name = ]
    \label{lem:cloud-cases}
    For a walk $W$, $\vapor(W)$ is the open path $Z$ iff $W$ is bridge-like, otherwise $\vapor(W)$ is $G$.
\end{lemma}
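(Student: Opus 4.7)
The proof splits naturally on whether $W$ is bridge-like or avertible; by definition these cases are complementary, so proving the two implications bridge-like $\Rightarrow \vapor(W)=Z$ and avertible $\Rightarrow \vapor(W) = G$ suffices (noting that $G \neq Z$ in any strongly connected non-cyclic graph, so the ``iff'' direction comes for free). As a preliminary step, I would establish the universal inclusion $Z \subseteq \vapor(W)$: writing $W = aZb$ with $a = \wt(W)$, $b = \wh(W)$, for any $z \in Z$ the prefix of $W$ from $a$ up to $z$ is an $a$-$z$ walk strictly shorter than $W$, so it cannot contain $W$ as a subwalk and thereby witnesses $z \in R^+(W)$; the symmetric suffix argument gives $z \in R^-(W)$.

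For the avertible case, applying the avertibility condition with $x_0 = a$ (resp.\ $y_0 = b$) and letting the other endpoint range over $G$ immediately yields $R^+(W) = R^-(W) = G$, so $\vapor(W) = G$ as required. For the bridge-like case, I would argue by contradiction, assuming some $u \in \vapor(W)\setminus Z$ and picking walks $W_1$ (from $a$ to $u$) and $W_2$ (from $u$ to $b$), both avoiding $W$. Their concatenation $W^{*} = W_1 W_2$ is a walk from $a$ to $b$, and any putative subwalk-occurrence of $W$ in $W^{*}$ must either be wholly inside $W_1$ or $W_2$ (ruled out by construction) or straddle the splice point at $u$, which would force $u$ to lie in the interior $Z$ of $W$, contradicting $u \notin Z$. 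Hence $W^{*}$ is an $\wt(W)$-$\wh(W)$ walk that avoids $W$.

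To contradict bridge-likeness, let $(x,y)$ be any bridge-like witnesses and let $P$ be a shortest $x$-$y$ walk, which is necessarily a simple path. Since $P$ must contain $W$, we write $P = P_1 W P_2$; because $P$ is a path and $a$, $b$ each sit inside the $W$-factor, they each occur exactly once in $P$, so neither $P_1$ nor $P_2$ contains $a$, $b$, or $W$. Substituting $W \mapsto W^{*}$ yields another $x$-$y$ walk $P' = P_1 W^{*} P_2$. The main obstacle is then to verify that $P'$ is still $W$-free, which contradicts bridge-likeness. The case analysis exhausts the possible locations of a putative $W$-occurrence in $P'$: wholly inside $P_1$ or $P_2$ (impossible, as they lack $a$ and $b$), wholly inside $W^{*}$ (impossible by construction), or straddling one of the two splice points; the straddling cases reduce, using that $W^{*}$ and $W$ share the same first arc $a$ and last arc $b$ and that these arcs are confined to the $W^{*}$-factor of $P'$, to placing $a$ or $b$ at an interior position of $W$, which is ruled out by the shape $W = aZb$. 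This completes the contradiction, establishes $\vapor(W) \subseteq Z$, and together with the preliminary inclusion yields the lemma.
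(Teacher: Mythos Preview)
Your proof is correct and rests on the same two ideas as the paper's: build an $a$-$b$ walk $W^*$ avoiding $W$ out of any $u\in\vapor(W)\setminus Z$, then use a substitution argument. The organisation differs. For avertible $\Rightarrow \vapor(W)=G$ you plug $x=a$ and $y=b$ straight into the definition, which is cleaner than the paper's route (the paper gets this direction only indirectly, via ``$\vapor\neq Z\Rightarrow\vapor=G$'' together with its observation that $b\notin R^+(W)$ when $\vapor=Z$). For the bridge-like direction the paper argues the contrapositive globally: from $W^*$ it claims avertibility by ``replacing every occurrence of $aZb$'' in an arbitrary $x$-$y$ walk, without spelling out why no new occurrence is created; you instead substitute into a single shortest witness path and give the full case analysis for why $P_1W^*P_2$ stays $W$-free. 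So the content is the same, but your substitution step is actually more carefully justified.

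One small gap: your splice argument ``$u$ would lie in the interior $Z$'' does not literally cover $u\in\{a,b\}$, where the concatenation $W_1W_2$ is degenerate. The paper treats these cases separately: if $u=a$ then $a\in R^-(W)$ already hands you an $a$-$b$ walk avoiding $W$ (take $W^*=W_2$), and symmetrically for $u=b$. Add that one line and you are done.
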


Note that the hydrostructure partitions not just the nodes but also the arcs of the graph. The strong connectivity of the graph and the definitions above result in the following properties of the parts of the hydrostructure (see \Cref{fig:Hydro-properties} for the SCCs, and \Cref{fig:Hydro-venDiagram} for reachability).

\begin{lemma}[restate = clouds, name = ]
    \label{lem:clouds}
    The hydrostructure on a bridge-like path $aZb$ exhibits the following properties:
    \begin{enumerate}[label = \alph*., ref = (\alph*), nosep]
        \item \emph{(Safety)} \label{lem:clouds:forwards}
        Every walk from $x \in \sea(aZb)$ to $y \in \cloud(aZb)$ contains $aZb$ as a subwalk.
        
        \item \emph{(Avoidance)} \label{lem:clouds:backwards}
        Every $x \in \cloud(aZb)$ and $y \in \sea(aZb)$ have an $x$-$y$ walk with no subwalk $aZb$.
        
        \item \emph{(Separation)} \label{lem:clouds:separation}
        $a \in \sea(aZb)$ and $b \in \cloud(aZb)$.
        
        \item \emph{(SCCs)} \label{lem:clouds:connectedseacloud}
        If $\sea(aZb)$ is not just $\{a\}$,
        then $\sea(aZb)$ and the prefix of $aZb$ ending in its last split-node form an SCC (denoted as \emph{sea-related SCC}).
        Similarly, if $\cloud(aZb)$ is not just $\{b\}$, then $\cloud(aZb)$ and the suffix of $aZb$ starting in its first join-node form an SCC (denoted as \emph{cloud-related SCC}).
        
        \item \emph{(Reachability)} \label{lem:clouds:connectivity}
        Let $(x, y) \in G\times G$ be an ordered incidence pair (arc-node  or node-arc) in $G$, where $x$ and $y$ are in different components of the hydrostructure of $aZb$.
        Then $(x, y)$ can (only) be associated with the following pairs of components.
        \begin{itemize}[nosep]
           \item  $(\vapor, \cloud)$ always where $y = b$, or $(\sea, \vapor)$ always where $x = a$,
           \item $(\cloud, \vapor)$ or $(\vapor, \sea)$, where both always occur for non-trivial $aZb$,
           \item $(\cloud, \river)$ or $(\river, \sea)$, where both always occur for nonempty $\river$,
           \item $(\cloud, \sea)$, where $aZb$ is univocal or R-univocal. 
        \end{itemize}
    \end{enumerate}
\end{lemma}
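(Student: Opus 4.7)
The plan is to prove the five parts in sequence (c) $\to$ (a) $\to$ (b) $\to$ (d) $\to$ (e), underpinned by one auxiliary fact: if $aZb$ is bridge-like then every $a$-$b$ walk contains $aZb$. I establish this by taking a shortest walk $T^* = T_1 \cdot aZb \cdot T_2$ between the bridge-like witnesses $x^*,y^*$, and noting that minimality forces $T_1$ and $T_2$ to avoid $aZb$ (otherwise splicing at an earlier occurrence would give a strictly shorter witness). Then for any $a$-$b$ walk $R$, the walk $T_1 \cdot R \cdot T_2$ is still an $x^*$-$y^*$ walk and hence contains $aZb$; a straddle at either junction is ruled out since $aZb$ begins with arc $a$ and ends with arc $b$, matching the boundary arcs of $R$, so the occurrence must lie entirely inside $R$.

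With this fact in hand, \emph{Separation} (c) is immediate: $a \in R^+(aZb)$ is witnessed by the trivial one-arc walk $(a)$, and $a \notin R^-(aZb)$ is precisely the auxiliary fact; the case for $b$ is symmetric. For \emph{Safety} (a), I argue by contradiction from an $x$-$y$ walk $P$ avoiding $aZb$. Composing $P$ with an $a$-$x$ walk $P_1$ avoiding $aZb$ (from $x \in R^+$) gives an $a$-$y$ walk that, since $y \notin R^+$, must contain $aZb$; as neither $P_1$ nor $P$ does, $aZb$ straddles at $x$, placing $x$ on the path $aZb$. Symmetrically, composing with a $y$-$b$ walk avoiding $aZb$ (from $y \in R^-$) places $y$ on $aZb$. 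Since $aZb$ is a path, $x$ and $y$ appear on it in some order; whichever order holds, the prefix or suffix of $aZb$ connecting them is a strict subwalk of a path and so cannot contain $aZb$, contradicting either $y \notin R^+$ or $x \notin R^-$.

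For \emph{Avoidance} (b), I exhibit $Q_1 \cdot R \cdot Q_2$ as the desired $x$-$y$ walk, where $Q_1$ is an $x$-$b$ walk avoiding $aZb$ (from $x \in R^-$), $Q_2$ is an $a$-$y$ walk avoiding $aZb$ (from $y \in R^+$), and $R$ is a walk from $\head{b}$ to $\tail{a}$ avoiding $aZb$. Existence of $R$ follows from another shortest-walk argument: if the shortest such walk contained $aZb$, truncating at the first occurrence of arc $a$ would yield a strictly shorter walk from $\head{b}$ to $\tail{a}$ not containing $aZb$. The full concatenation avoids $aZb$ because any straddle at a junction would require a nonempty proper prefix of $aZb$ to end at $\head{b}$ or a nonempty proper suffix to start at $\tail{a}$, which is impossible since $aZb$ is a path visiting its endpoints uniquely.

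The SCC claim (d) then follows by showing, via Avoidance, that every node of $\sea$ is mutually reachable with every node on the prefix of $aZb$ ending at its last split-node through walks avoiding $aZb$; the cloud-related SCC is symmetric. Statement (e) is a case analysis: each listed component pair is realized by combining the component definitions with (c), and each forbidden pair is eliminated using Safety or by contradicting membership in $R^+ \cup R^-$---for example, an arc directly from $\sea$ to $\cloud$ would be a length-one walk avoiding $aZb$, contradicting (a). The main obstacle I anticipate is the auxiliary fact underlying (c): the straddling analysis for $T_1 \cdot R \cdot T_2$ must simultaneously exploit the path structure of $aZb$ at both endpoints. Once it is in place, all subsequent parts follow the same clean compositional pattern.
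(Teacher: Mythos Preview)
Your route through (c), (a), (b) via the auxiliary fact ``every $a$-$b$ walk contains $aZb$'' is sound and genuinely different from the paper's: the paper proves the parts in reverse order (e) $\to$ (d) $\to$ (c) $\to$ (b) $\to$ (a), invoking the incidence catalogue (e) to argue (a) and using the SCC structure (d) to argue (b). Your direct concatenation-and-straddle arguments for (a) and (b) are more elementary and avoid that dependency, which is a real gain.

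The gap is in (d). You claim the sea-related SCC follows ``via Avoidance,'' but Avoidance (b) only produces $\cloud \to \sea$ walks that avoid $aZb$; it does not force such walks to stay inside $\sea(aZb) \cup W^P$, and that is what ``form an SCC'' means here (the induced subgraph must be strongly connected). Concretely, for $x \in \sea(aZb)$ you need an $x$-$a$ path lying \emph{entirely in} $\sea(aZb)$, and the reason one exists is that the only incidence pair leaving $\sea(aZb)$ is $(a,\head(a))$ into $\vapor$---precisely a fact from (e). The paper proves (e) first for exactly this reason and then cites it inside the proof of (d). You can fix your argument either by swapping the order to (e) before (d), or by isolating the single closure statement you need (``if $e \in \sea(aZb)$ and $\head(e) \in \vapor(aZb)$ then $e = a$,'' which follows from $e \notin R^-$ by looking at the $e$-$b$ walk along the suffix of $aZb$ from $\head(e)$) and using that in place of Avoidance.

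Your sketch for (e) is also thin. Eliminating the forbidden pairs by $R^\pm$-closure is the right mechanism, but your illustrative example misidentifies it: a $(\sea,\cloud)$ incidence is not ruled out by Safety (a) directly (Safety concerns walks, and the single step need not itself be a $\sea$-to-$\cloud$ walk in the required sense), but rather by the closure of $R^+$ under forward incidence except at $b$, together with $\tail(b) \in \vapor$. More substantially, the $(\cloud,\sea)$ case carries the side condition that $aZb$ be univocal or R-univocal, and establishing that requires its own argument (the paper shows the $\cloud$-side arc must equal $b$ and then derives the absence of join nodes on $aZb$).
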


\paragraph{Implementation.}
To obtain the hydrostructure of a walk $W$, we need to compute $R^+(W)$ (and symmetrically $R^-(W)$).
Note that $R^+(W)$ for avertible walks is simply $G$, and for bridge-like walks is the graph traversal from $\wt(W)$ in $G\setminus \wh(W)$.
We can identify if $W$ is avertible while performing the traversal, by checking if the traversal reaches $W$ again after leaving it. Thereafter, we can easily compute the $\sea$, $\cloud$, $\vapor$ and $\river$ by processing each arc and node individually (see \Cref{apx:implementation:hydrostructure} for more details).

\begin{theorem}[restate = hydrostructurebasicalgo, name = ]
\label{thm:hydro_algo}
    The hydrostructure of any walk can be computed in $O(m)$ time.
\end{theorem}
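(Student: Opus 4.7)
My plan is to reduce the hydrostructure of $W$ to two restricted reachability queries --- one forward from $\wt(W)$ and one backward to $\wh(W)$ --- each realised by a single BFS/DFS in a slightly modified graph, followed by one linear classification pass that assigns each element of $V \cup E$ to one of $\sea, \cloud, \vapor, \river$ per \Cref{def:clouds}. By \Cref{lem:cloud-cases}, the only case that requires genuine work is when $W = aZb$ is bridge-like, since for avertible $W$ one has $R^+(W) = R^-(W) = G$ and the hydrostructure collapses to $\vapor(W) = G$ at no computational cost.

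The key structural identity I would establish in the bridge-like case is that $R^+(W)$ coincides with the set of elements reachable from $\wt(W)$ in $G \setminus \wh(W)$. The inclusion $\supseteq$ is immediate: a bridge-like walk is a path by \Cref{lem:cloud-cases}, so the arc $\wh(W)$ appears exactly once in $W$, and any walk avoiding $\wh(W)$ cannot possibly contain $W$ as a subwalk. For the reverse inclusion, given $x \in R^+(W)$ witnessed by a walk $W'$ starting with $a$, I would suppose for contradiction that $W'$ uses the arc $b = \wh(W)$ and take the prefix $P$ of $W'$ up to and including its first occurrence of $b$. Then $P$ is an arc-to-arc walk from $a$ to $b$, and by \Cref{lem:clouds}\ref{lem:clouds:separation} we have $a \in \sea(W)$ and $b \in \cloud(W)$, so the Safety property \Cref{lem:clouds}\ref{lem:clouds:forwards} forces $P$ to contain $W$ as a subwalk, which propagates to $W'$ and contradicts $x \in R^+(W)$. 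A symmetric argument identifies $R^-(W)$ with the reverse reachability from $\wh(W)$ in $G \setminus \wt(W)$, and each of the two traversals runs in $O(m)$ time.

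To distinguish the bridge-like from the avertible case within the same traversal, I would piggyback an avertibility test onto the forward BFS: starting from $\wt(W)$ in $G \setminus \wh(W)$, I would first mark the nodes and arcs of $W$ by walking along $W$ itself, then continue the exploration off $W$, flagging any arc of the exploration that enters a node of $W$ from an unmarked source. A careful case analysis shows that some such flag is raised if and only if $W$ is avertible, since a genuine re-entry onto $W$ yields an alternative route that, spliced with a suffix of $W$, produces a $\wt(W)$-$\wh(W)$ walk with no contiguous occurrence of $W$. Once $R^+(W)$, $R^-(W)$, and the bridge-like/avertible flag are known, one linear sweep over $V \cup E$ classifies each element by its membership into $\sea, \cloud, \vapor, \river$ per \Cref{def:clouds}. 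The dominant cost is the two BFS/DFS traversals and the sweep, totalling $O(m)$. The main obstacle I foresee is formalising the avertibility criterion so that re-entry onto $W$ precisely captures avertibility, distinguishing genuine alternative routes from spurious re-entries that are merely continuations along $W$ itself; the rest of the argument is routine reachability on a graph with one arc removed.
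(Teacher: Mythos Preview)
Your proposal is correct and matches the paper's approach almost exactly: compute $R^+(W)$ as the reachability from $\wt(W)$ in $G \setminus \wh(W)$, piggyback an avertibility check on that traversal, do the symmetric computation for $R^-(W)$, and finish with a linear classification pass. The obstacle you flag is resolved in the paper by the precise criterion that $W$ is avertible iff either $W$ is not an open path, or the traversal reaches some arc $e \notin W$ with $\head(e) \in W$; your proof of the identity $R^+(W) = \{x : \wt(W) \text{ reaches } x \text{ in } G \setminus \wh(W)\}$ via \Cref{lem:clouds}\ref{lem:clouds:forwards} and \ref{lem:clouds:separation} is a valid variant of the paper's slightly more direct argument using only \ref{lem:clouds:separation}.
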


\section{Safety in Circular Models}
\label{s:closed}

In the circular models the \emph{heart} of a walk determines its safety, as the univocal extension of a safe walk is naturally safe.
Therefore, every trivial walk is naturally safe because of the covering constraint for any arc in its heart, which can be univocally extended to get the whole walk.

A non-trivial walk $W$ with a bridge-like heart is safe as well, because every circular arc-covering walk contains all bridge-like walks by definition.
If on the other hand $\heart(W)$ is avertible (having $\vapor(W) = G$), then a walk can cover the entire graph without having $\heart(W)$ as subwalk, making it unsafe by definition.
Thus, the characterisation for circular safe walks is succinctly described as follows (see \Cref{apx:closed} for the proofs of this section):

\begin{restatable}[1-Circular]{theorem}{onecircsafe}
\label{th:circularSafe}
A non-trivial walk $W$ is 1-circular safe iff $\vapor(\heart(W))$ is a path.
\end{restatable}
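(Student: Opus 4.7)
My plan is to translate the hydrostructure condition via \Cref{lem:cloud-cases}: $\vapor(\heart(W))$ is a path iff $\heart(W)$ is bridge-like, reducing the theorem to the claim that $W$ is 1-circular safe iff $\heart(W)$ is bridge-like. I will then split this into a wing reduction and a core characterisation for the heart.

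First I would show $W$ is 1-circular safe iff $\heart(W)$ is 1-circular safe. The forward direction is immediate. For the reverse, I take any arc-covering closed walk $C$ containing an occurrence of $\heart(W)$. By the definition of heart, every arc of the left wing $W^l$ appears in $W$ before the first join arc of $W$ and is therefore a non-join arc, so the head of each such arc has in-degree~$1$. Tracing $C$ backward from $\tail(\wt(\heart(W)))$ is thus uniquely forced and reproduces $W^l$ step by step; a symmetric argument using split arcs reproduces $W^r$ after the heart. Hence $C$ contains $W^l \cdot \heart(W) \cdot W^r = W$ as a subwalk.

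For the forward part of the reduced claim, suppose $\heart(W)$ is bridge-like, witnessed by a pair $x, y$ for which every $x$-to-$y$ walk contains $\heart(W)$. Any arc-covering closed walk $C$ visits every node by strong connectivity and coverage, so reading $C$ cyclically I can extract an $x$-to-$y$ subwalk, which must contain $\heart(W)$; hence so does $C$.

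For the reverse part, assuming $\heart(W) = aZb$ is avertible I must build an arc-covering closed walk $C^{*}$ that avoids $\heart(W)$ as a subwalk. By avertibility, every pair $u, v$ admits a $u$-to-$v$ walk avoiding $\heart(W)$, and these will serve as building blocks; in particular the arcs internal to $Z$ can still be covered by separate excursions through such avertible walks. I would then combine these pieces and iteratively eliminate any residual occurrence of $\heart(W)$ via an avertible detour, while showing no new occurrence is introduced at a splice point. The main obstacle is exactly this combination step: a naive concatenation of avertible walks may accidentally reconstruct $\heart(W)$ across boundaries. I expect to overcome this either by a well-founded descent on the number of occurrences of $\heart(W)$ in the current walk, or by exploiting $\vapor(\heart(W)) = G$ to simultaneously route all required traversals through the graph so that no concatenation can close up the heart.
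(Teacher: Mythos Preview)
Your overall structure matches the paper's: both reduce via \Cref{lem:cloud-cases} to ``$\heart(W)$ is bridge-like'', both handle the wings by (R-)univocality, and both argue the bridge-like $\Rightarrow$ safe direction by extracting an $x$-$y$ subwalk from an arbitrary arc-covering closed walk. The genuine gap is in your avertible $\Rightarrow$ unsafe direction, which you yourself flag as ``the main obstacle'' without actually resolving it.

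The paper's construction for that direction exploits precisely the \emph{non-triviality} hypothesis, which you never use. Since $\heart(W)=aZb$ is a non-trivial heart, $a$ is a join arc and $b$ is a split arc, so there exist a sibling $a'$ of $a$ with $\head(a')=\head(a)$ and a sibling $b'$ of $b$ with $\tail(b')=\tail(b)$. Avertibility then gives a walk $P$ between these siblings that avoids $aZb$. Now take any arc-covering closed walk and replace every occurrence of $aZb$ by $aZPZb$. This preserves arc-coverage (both $aZ$ and $Zb$ are still present), and no copy of $aZb$ can survive or be created: the first $Z$ is followed by the first arc of $P$, a sibling $\neq b$; the second $Z$ is preceded by the last arc of $P$, a sibling $\neq a$; $P$ itself is $aZb$-free; and since $Z$ is an open path, $a$ occurs only at the start of $aZ$, so no other alignment of $aZb$ is possible.

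Your proposed ``well-founded descent'' cannot terminate without exactly this splice-point guard. If your detour is an arbitrary avertible $a$-to-$b$ walk, you lose coverage of the internal arcs of $Z$; if instead you insert a detour while keeping $aZ$ and $Zb$, you must force its first arc to differ from $b$ and its last arc to differ from $a$, or else $aZb$ reappears at the boundary and the occurrence count need not decrease. The sibling arcs $a',b'$ are what provide that guarantee --- and once you have them, no iteration is needed, the replacement works in one shot. So the missing idea is not the descent mechanism but the concrete choice of splice points coming from non-triviality.
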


Now, for $k \geq 2$, a single circular arc-covering walk can be repeated to get $k$ circular arc-covering walks, implying that every \cs{k} walk is \cs{1}.
For the \cs{k}ty of a non-trivial walk $W$, an added issue (see~\Cref{fig:examplecircular}) is that two different circular walks can cover the sea and cloud-related strongly connected components.
However, this is not possible if these components do not entirely cover the graph, i.e. the \river{} is not empty.
Thus, we get the following characterisation.
 
\begin{restatable}[$k$-Circular]{theorem}{kcircsafe}
\label{thm:kcircsafe}
A non-trivial walk $W$ is $k$-circular safe for $k \geq 2$, iff $\vapor(\heart(W))$ is a path and $\river(\heart(W))$ is non-empty.
\end{restatable}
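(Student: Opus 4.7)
}
The plan is to prove both directions by leveraging the $1$-circular characterization (\Cref{th:circularSafe}) for the ``\vapor{} is a path'' condition, and using the reachability properties in \Cref{lem:clouds} for the ``\river{} is non-empty'' condition. Throughout, write $\heart(W)=aZb$, and use the fact that the univocal extension before $a$ and after $b$ is forced in every covering walk, so containing $\heart(W)$ is equivalent to containing $W$. Hence it suffices to argue about $\heart(W)$.

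For the forward direction, suppose $W$ is \cs{k}{} with $k\ge 2$. First, any single closed arc-covering walk can be ``used'' $k$ times (or padded with degenerate trivial copies) to form an arc-covering collection of up to $k$ circular walks, so \cs{k}{}ty implies \cs{1}{}ty, and by \Cref{th:circularSafe} we get that $\vapor(\heart(W))$ is a path. Next, assume for contradiction that $\river(\heart(W))$ is empty. By \Cref{lem:clouds}\ref{lem:clouds:connectedseacloud} the sea-related SCC contains $\sea(aZb)$ together with the prefix of $aZb$ ending at $\tail(b)$ (thus including $a$ and all interior arcs of $Z$ but not $b$), and symmetrically for the cloud-related SCC (including $b$ and all interior arcs but not $a$). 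Since both are SCCs, each admits a closed walk covering all its arcs; together they cover every arc of $G$, because $\sea\cup\cloud\cup\vapor = G$ when $\river$ is empty. Neither of these two closed walks contains both $a$ and $b$, so neither contains $\heart(W)$, contradicting \cs{k}{}ty (for any $k\ge 2$). Hence $\river(\heart(W))$ is non-empty.

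For the backward direction, suppose $\vapor(\heart(W))$ is a path (so $\heart(W)=aZb$ is bridge-like by \Cref{lem:cloud-cases}) and $\river(\heart(W))$ is non-empty. Take any arc-covering collection $\mathcal{C}$ of at most $k$ circular walks and pick an arc of $\river(\heart(W))$ (if $\river$ contains only nodes, use an arc incident to such a node, which by \Cref{lem:clouds}\ref{lem:clouds:connectivity} must itself lie in \river, \cloud{} or \sea{} adjacent to \river); let $C\in\mathcal{C}$ be a walk covering it. By \Cref{lem:clouds}\ref{lem:clouds:connectivity}, the only cross-component incidences involving $\river$ are $(\cloud,\river)$ and $(\river,\sea)$, so traversing $C$ cyclically forces it to enter $\river$ from $\cloud$ and leave $\river$ towards $\sea$. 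In particular $C$ visits both $\sea(aZb)$ and $\cloud(aZb)$, so it contains a subwalk from a node/arc of $\sea$ to one of $\cloud$; by \Cref{lem:clouds}\ref{lem:clouds:forwards} this subwalk contains $aZb=\heart(W)$, and therefore $C$ contains $W$.

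The main obstacle, and the reason the $k\ge 2$ case differs from $k=1$, is the forward step where we must build a concrete $2$-walk counterexample: we need the two SCCs of \Cref{lem:clouds}\ref{lem:clouds:connectedseacloud} to partition the arcs correctly (modulo the shared interior of $Z$) so that two closed arc-covering walks exist whose union covers $G$ yet neither contains $\heart(W)$. Getting the covering accounting right --- in particular that the interior arcs of $Z$ belong to both SCCs while $a$ lies only in the sea-related SCC and $b$ only in the cloud-related SCC --- is the delicate combinatorial step; once this is established, the rest of the argument reduces to routine reachability reasoning via \Cref{lem:clouds}.
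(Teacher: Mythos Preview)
Your approach is essentially the paper's. In the forward direction you, like the paper, reduce to \cs{1}ty for the first condition and, when $\river$ is empty, cover $G$ by two closed walks living in $\sea\cup\vapor$ and $\cloud\cup\vapor$ respectively; in the backward direction both arguments pick an arc at the $\cloud$/$\river$ interface and use \Cref{lem:clouds}\ref{lem:clouds:connectivity} to force a $\sea\to\cloud$ traversal, hence $aZb$ as subwalk via \Cref{lem:clouds}\ref{lem:clouds:forwards}.

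One point to tighten in the forward step: your appeal to \Cref{lem:clouds}\ref{lem:clouds:connectedseacloud} for the sea- and cloud-related SCCs carries the hypotheses $\sea(aZb)\neq\{a\}$ and $\cloud(aZb)\neq\{b\}$, and these can fail even for a non-trivial heart (precisely when $a$ is the unique split-sibling of $b$, respectively $b$ the unique join-sibling of $a$). The paper handles this explicitly: if, say, $\cloud(aZb)=\{b\}$ then $b$ must be the join-sibling of $a$, so $\cloud\cup\vapor=Zb$ is itself a cycle and hence strongly connected; the symmetric observation covers $\sea(aZb)=\{a\}$. Add this one-line case distinction and your plan is complete.
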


Notice that our characterisation does not distinguish between $k$ when $k \geq 2$, implying that all the problems of $(k \geq 2)$-circular safety are equivalent.
These characterisations can be directly adapted for an optimal verification algorithm, by computing the hydrostructure for the given walk in linear time. This also results in $O(m^2n)$ time enumeration algorithms using the {\em two pointer} algorithm on a simple circular arc-covering walk of length $O(mn)$ (see \Cref{apx:closed:implementation}).
Moreover, using the optimal $O(mn)$ time \cs{1} algorithm~\cite{DBLP:journals/talg/CairoMART19}, the maximal \cs{k} walks can also be {\em optimally} enumerated in $O(mn)$ time, by computing the hydrostructure for $O(n)$ non-trivial \cs{1} walks (\Cref{thm:prelim}~\ref{item:omnitig-properties}) and using an interesting property of \cs{1} walks that are not \cs{k}.
See \Cref{apx:closed:implementation} for more details on the implementation.

\begin{figure}[t]
    \centering
    \centerline{
    \begin{subfigure}[m]{0.25\linewidth}
    \centering
    \includegraphics[trim=0 30 40 0, clip,width=\textwidth]{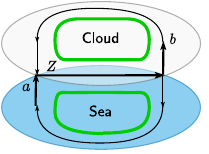}
    \caption{\label{fig:examplecircular}}
    \end{subfigure}
    \hspace{2em}
    \begin{subfigure}[m]{0.40\linewidth}
    \centering
    \includegraphics[trim=2 9 27 0, clip,width=\textwidth]{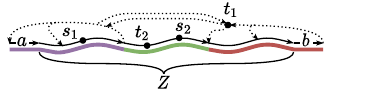}
    \caption{}
    \label{fig:linear-trivial-wings}
    \end{subfigure}
    }
    \caption{\small (a) In the absence of the $\river$, the two cycles (shown in thick green) can cover the cloud-related and sea-related SCCs of a non-trivial $\heart$ without having $aZb$ as subwalk.
    (b) The heart (green) and the wings (violet and red) of an avertible walk $W = aZb$:
    $W$ is unsafe for $s = s_1$ and $t = t_1$, because $\heart(W)$ is $s_1$-$t_1$ suffix-prefix covered in $W$ as it is part of the proper suffix of $W$ starting from $s_1$;
    $W$ is safe for $s = s_2$ and $t = t_2$, because the middle of $\heart(W)$ is not $s_2$-$t_2$ suffix-prefix covered in $W$, since it is contained neither in the suffix starting in $s_2$ nor in the prefix ending in $t_2$.
    }
    \label{fig:cirular}
\end{figure}

\begin{restatable}{lemma}{kcircularsubwalks}
\label{lem:multisafe-subwalks}
Let $k \geq 2$.
For a \cs{1} walk $W = X \heart(W) Y$ where $\heart(W)= aZb$, if $W$ is not \cs{k} then $XaZ$ and $ZbY$ are \cs{k} walks.
\end{restatable}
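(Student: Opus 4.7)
The plan is to verify that $V := XaZ$ satisfies the \cs{k} characterisation of \Cref{thm:kcircsafe}, with $ZbY$ following by the symmetric argument on the reverse graph. Since $W$ is \cs{1} but not \cs{k}, \Cref{th:circularSafe} and \Cref{thm:kcircsafe} together give that $\vapor(aZb)$ is a path (write $aZb = a, u_1, e_1, \dots, u_m, b$) while $\river(aZb) = \emptyset$, so $G = \sea(aZb) \cup \vapor(aZb) \cup \cloud(aZb)$. Because $b$ is a split arc of $W$, the node $\tail{}(b) = u_m$ has out-degree at least $2$; fix a non-$b$ outgoing arc $d$ of $u_m$. By \Cref{lem:clouds}\ref{lem:clouds:connectivity}, $\head{}(d) \in \sea(aZb)$, since only $b$ provides a vapor-to-cloud incidence.

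The first join arc of $V$ is $a$, inherited from $W$ since $X$ carries no join arc. Hence $\heart(V)$ is either trivial---in which case $V$ is \cs{k} by the standard univocal-extension argument for trivial walks recalled at the beginning of \Cref{s:closed}---or it has the form $H = a, u_1, e_1, \dots, u_j, e_j$ for some $j \leq m-1$, where $e_j$ is either the last split arc of $V$ inside $Z$ or, when $V$ has no split arc at all, the arc $e_{m-1}$ forced by the heart definition. In the non-trivial case I check the two conditions of \Cref{thm:kcircsafe} for $H$.

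For $\vapor(H)$ being a path, it suffices by \Cref{lem:cloud-cases} to show $H$ is bridge-like, i.e., every $a$-$e_j$ walk $W'$ contains $H$. Trace $W'$ backwards from its final $e_j$: the interior vapor nodes have in-degree $1$ inside $\vapor$, so the last unbroken run of $\vapor$-arcs leading into $e_j$ is entered at some $u_i$ with $i \leq j$. If $i = 1$ and the entering arc is $a$, then $W'$ contains $H$ directly. Otherwise $W'$ visits $\cloud$ at some point and therefore transits from $\sea$ to $\cloud$, which by \Cref{lem:clouds}\ref{lem:clouds:forwards} forces $aZb$ as a subwalk; since $H$ is a prefix of $aZb$, $W'$ still contains $H$.

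Finally, I claim $d \in \river(H)$. First, $d \notin R^+(H)$: reaching $u_m$ from $a$ forces either a full traversal of $Z$ (yielding the subwalk $aZ \supseteq H$) or a transit through $\cloud$ (forcing $aZb \supseteq H$ by \Cref{lem:clouds}\ref{lem:clouds:forwards}). Second, $d \notin R^-(H)$: any walk from $d$ to $e_j$ begins at $\head{}(d) \in \sea(aZb)$, and reaching the vapor arc $e_j$ from $\sea$ requires re-entering $\vapor$ through the unique $\sea$-to-$\vapor$ arc $a$ (or through a $\sea$-to-$\cloud$ transit that again contains $aZb$), after which the bridge-like argument of the previous paragraph gives $H$ as a subwalk. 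Hence $\river(H) \supseteq \{d\}$, so $V$ is \cs{k} by \Cref{thm:kcircsafe}. The main obstacle is the $R^-(H)$ step---ruling out devious detours from $d$ back to $e_j$---which is handled by the pairwise incidence restrictions of \Cref{lem:clouds}\ref{lem:clouds:connectivity} combined with $\river(aZb) = \emptyset$. The statement for $ZbY$ is obtained by reversing all arcs of $G$ and rerunning the same reasoning on the reversed walk.
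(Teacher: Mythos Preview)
Your proof is correct and follows essentially the same strategy as the paper: pick a split-sibling of $b$ (your $d$, the paper's $b'$) and show it lies in $\river(\heart(XaZ))$ by arguing separately that it is outside both $R^+$ and $R^-$ of that heart, using the hydrostructure of $aZb$. Two minor remarks: your separate verification that $H$ is bridge-like is redundant, since $d \in \river(H)$ already forces $\vapor(H) \neq G$; and although you highlight $\river(aZb) = \emptyset$ as the ``main obstacle'', neither your argument nor the paper's actually uses it---the incidence restrictions of \Cref{lem:clouds}\ref{lem:clouds:connectivity} alone suffice, so the conclusion in fact holds for every non-trivial \cs{1} walk regardless of whether it is \cs{k}.
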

\section{Safety in Linear Models}
\label{s:open}

In this section we characterise the linear models for strongly connected graphs.
In the corresponding appendix we also extend them to non-strongly connected graphs.
Compared to the circular models, in the linear models the wings of a walk are not naturally safe, because $s$ or $t$ might occur in a wing. This allows a candidate solution to stop at $s$ or $t$, covering the graph without having the entire wing as a subwalk. However, only a \cs{1} walk $W$ can be \ls{k}. Otherwise, using an arc-covering closed walk that does not contain $W$, we can construct an arc-covering $s$-$t$ walk by repeating the arc-covering walk, once starting from $s$ and once ending at $t$.

We handle the wings of non-trivial \ls{k} walks separately.
Our characterisation has three conditions that force a walk to reach from $\sea$ to $\cloud$ of a bridge-like walk, and a fourth condition that makes only trivial walks safe.
For a trivial walk, an additional sufficient condition for safety is if there is an arc in its heart that can only be covered by using the whole walk, i.e. it cannot be covered by using only a proper suffix starting in $s$ or a proper prefix ending in $t$.
Formally we say that an arc $e$ in the heart of a trivial walk $W$ is \emph{$s$-$t$ suffix-prefix covered} if it is part of a proper prefix of $W$ that starts in $s$ or part of a proper suffix of $W$ that ends in $t$ (see \Cref{fig:linear-trivial-wings} for an example).
We get the following characterisation (see \Cref{apx:linear} for the proofs of this section).

\begin{theorem}[name = $k$-st Safe Trivial Walks and Non-Trivial Hearts, restate = linear]
    \label{thm:linear}
    A trivial walk $W$ (or a non-trivial heart $W$)
    is \ls{k} iff it is a single arc or
    \begin{enumerate}[label = (\alph*), nosep]
        \item $\river(W)$ cannot be covered with $k$ walks, or \label{thm:linear:river}
        \item $s \notin R^-(W)$, or \label{thm:linear:r-}
        \item $t \notin R^+(W)$, or \label{thm:linear:r+}
        \item $W$ is trivial and some arc $e$ in $\heart(W)$ is not $s$-$t$ suffix-prefix covered in $W$. \label{thm:linear:suffix-prefix}
    \end{enumerate}
\end{theorem}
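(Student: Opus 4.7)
The plan is to prove both directions of the iff using the hydrostructure of $W$ (in the trivial case) or of $\heart(W)$ (in the non-trivial heart case). The single-arc case is immediate since every arc-covering $k$-collection must use that arc, which trivially contains itself as a subwalk. For sufficiency I would show that each of (a)--(d) individually forces $W$ to appear as a subwalk of some walk in every covering collection of up to $k$ $s$-$t$ walks. For necessity I would explicitly construct an arc-covering collection of up to $k$ $s$-$t$ walks none of which contains $W$, under the assumption that (a)--(d) all fail and $W$ is not a single arc.

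For sufficiency, (b) and (c) follow almost immediately from the definitions of $R^-$ and $R^+$: if $s \notin R^-(W)$, in any arc-covering $k$-collection some walk $Q$ must carry the arc $\wh{}(W)$; the $s$-$\wh{}(W)$ prefix of $Q$ contains $W$ as subwalk by the definition of $R^-(W)$, and hence $Q$ does too; (c) is symmetric using $\wt{}(W)$. For (d), the trivial walk $W$ has a biunivocal $\heart(W)$, so any walk through an arc $e \in \heart(W)$ is forced backward to the last split arc of $W$ and forward to the first join arc of $W$, and univocal extension then forces the full $W$ unless the extension is truncated at $s$ on the left or at $t$ on the right, which the not-suffix-prefix-covered hypothesis precisely forbids. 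For (a), suppose for contradiction $s$-$t$ walks $Q_1,\dots,Q_{k'}$ with $k' \leq k$ cover $G$ with none containing $W$; by the Safety property of \Cref{lem:clouds}, no $Q_i$ transitions directly from $\sea$ to $\cloud$, so the maximal $\river$-subwalks of all $Q_i$ cover $\river(W)$; the plan is to reorganise them into at most $k$ walks of $\river$ by amalgamating consecutive $\river$-visits of a single $Q_i$, using the DAG-of-SCCs structure of $\river$ exposed by the Reachability property, which contradicts the assumption that $\river(W)$ needs more than $k$ walks.

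For necessity, assume (a)--(d) fail and $W$ is not a single arc. Since (a) fails, fix a walk-cover $R_1,\dots,R_{k''}$ of $\river(W)$ with $k'' \leq k$. Since (b) and (c) fail, fix an $s$-$\wh{}(W)$ walk $P_s$ and a $\wt{}(W)$-$t$ walk $P_t$, neither containing $W$. Because $W$ is not a single arc, by the SCC property of \Cref{lem:clouds} both $\sea(W)$ together with the appropriate prefix of $W$, and $\cloud(W)$ together with the appropriate suffix of $W$, form SCCs, and the Reachability property supplies the $\cloud \to \river$ and $\river \to \sea$ incidences needed when $\river \neq \emptyset$. I would splice each $R_i$ into an $s$-$t$ walk of the schematic form ``from $s$ along $P_s$ to $\wh{}(W)$ inside $\cloud$, around the cloud-SCC to an entry of $R_i$, along $R_i$, out of $\river$ into the sea-SCC, around to $\wt{}(W)$, along $P_t$ to $t$'', and for the remaining $k - k''$ walks use the same skeleton while skipping $\river$. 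For the trivial case, since (d) fails every arc of $\heart(W)$ is $s$-$t$ suffix-prefix covered, so such arcs are absorbed by attaching a proper prefix of $W$ ending in $t$ or a proper suffix of $W$ starting in $s$ to one of the constructed walks without ever creating a full contiguous $W$-subwalk. Intra-SCC flexibility covers every arc of $\sea$ and $\cloud$, and partial $\vapor$-traversals cover the arcs of $W$ itself up to but not including the forbidden full contiguous occurrence.

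The main obstacle I anticipate is twofold. First, the amalgamation step for sufficiency of (a): I must formalise carefully why the many $\river$-segments across $Q_1, \dots, Q_{k'}$ can be organised into at most $k$ walks covering $\river(W)$, likely by exploiting the DAG-of-SCCs structure of $\river$ together with the flow-based interpretation of minimum walk-cover used in the preliminaries. Second, in the necessity construction I must simultaneously guarantee that every arc is covered, at most $k$ walks are produced, and no walk contains $W$ as a contiguous subwalk even while parts of $W$'s wings and prefixes/suffixes are freely reused; edge cases requiring separate sub-arguments include $\river(W) = \emptyset$ (so $k'' = 0$ and all arcs of $\sea \cup \vapor \cup \cloud$ must be covered via intra-SCC and partial $\vapor$-traversals alone), minimal hearts with $\sea(W) = \{\wt{}(W)\}$ or $\cloud(W) = \{\wh{}(W)\}$, and the distinction between a trivial $W$ (where condition (d) enters and $\heart(W)$ is biunivocal) versus a non-trivial heart (where (d) is vacuously absent).
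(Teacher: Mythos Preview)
Your overall structure matches the paper's, but there are two places where you diverge from the paper's argument in ways worth noting.

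For sufficiency of \labelcref{thm:linear:river}, you are over-complicating things. You correctly note that no $Q_i$ avoiding $W$ can pass from $\sea(W)$ to $\cloud(W)$, but then you plan to ``amalgamate consecutive $\river$-visits of a single $Q_i$'' via a DAG-of-SCCs/flow argument. This amalgamation is unnecessary: the very observation you already made implies that each $Q_i$ enters $\river(W)$ \emph{at most once}. Indeed, by \Cref{lem:clouds}~\labelcref{lem:clouds:connectivity} one can only enter $\river(W)$ from $\cloud(W)$ and only leave it to $\sea(W)$; a second entry would therefore require a $\sea$-to-$\cloud$ subwalk in between, forcing $W$ by \Cref{lem:clouds}~\labelcref{lem:clouds:forwards}. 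Hence the (at most) $k$ walks $Q_1,\dots,Q_{k'}$ contribute at most $k$ maximal $\river$-segments, which already form a walk-cover of $\river(W)$ of size at most $k$---the desired contradiction. This is exactly how the paper argues it, in one sentence.

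For necessity, your splicing construction is essentially the paper's for the bridge-like case, but you do not cleanly separate out the \emph{avertible} case. When $W$ (or $\heart(W)$ in the non-trivial case) is avertible, $\vapor(W)=G$ and $\sea(W)=\cloud(W)=\river(W)=\emptyset$, so \Cref{lem:clouds}~\labelcref{lem:clouds:connectedseacloud} gives you no SCCs to work with and your schematic collapses. The paper handles this by a separate argument: for a trivial avertible $W$, use \Cref{lem:vapor-only-clouds-covering} to get a closed arc-covering walk in $G\setminus\heart(W)$ and attach the $s$-suffix and $t$-prefix of $W$ to it; for a non-trivial avertible heart, invoke \Cref{th:circularSafe} to get a closed arc-covering walk avoiding $W$ and unroll it into an $s$-$t$ walk. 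Your listed edge case ``$\river(W)=\emptyset$'' is not the same thing---a bridge-like walk can also have empty $\river$---so you should make the avertible/bridge-like split explicit up front, as the paper does.
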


Note that only \cs{1} walks can fulfill this characterisation: a trivial walk is always \cs{1}, and a non-trivial heart that is not \cs{1} fulfills none of the four criteria.

Further, note that Condition~\labelcref{thm:linear:river} also makes the wings safe, since they are not in the \river{} by \Cref{lem:clouds}~\labelcref{lem:clouds:connectedseacloud}.
Additionally if $s$ is in the left wing, we can cover the prefix of the left wing before $s$ without traversing the whole walk in two ways.
Firstly, if the prefix before $s$ is repeated in the suffix of the whole walk from $s$.
Note that this case only happens if $s$ is also in the heart, i.e. the left wing extends back into the heart, which makes the \sea{} of the heart a path and the \river{} empty.
Secondly, if $t$ prematurely ends the traversal before covering the whole walk, i.e. $t$ is present in the residual walk or the \sea{}.
We get the following characterisation.

\begin{theorem}[name = Wings of Non-Trivial $k$-st Safe Hearts, restate = linearntwings]
    \label{thm:linear-nt-wings}
    A non-trivial \cs{1} walk $W^lWW^r$, where $W$ is its heart, $W^l$ its left and $W^r$ its right wing, is \ls{k} iff $W$ is \ls{k} and
    \begin{enumerate}[label = (\alph*), nosep]
        \item $\river(W)$ cannot be covered with $k$ walks, or \label{thm:linear-nt-wings:antichain}
        \item either both of $s \in W^l$ and $t \in W^r$ are false, or exactly one is true with:
        \begin{itemize}[nosep]
            \item if $s \in W^l$, then $t \notin R^+(W) \cup W^r$ and $s \notin W$, and
            \item if $t \in W^r$, then $s \notin R^-(W) \cup W^l$ and $t \notin W$. 
        \end{itemize}
        \label{thm:linear-nt-wings:st}
    \end{enumerate}
\end{theorem}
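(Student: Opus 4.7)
The plan is to prove the equivalence in two directions.

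For the forward direction ($\Rightarrow$), first $W$ being \ls{k} follows immediately because $W$ is a subwalk of $W^lWW^r$ and safety is inherited by subwalks. For the (a)-or-(b) part I would argue by contrapositive: if both (a) fails (so $\river(W)$ admits a cover $\mathcal{R}$ by at most $k$ walks) and (b) fails (so either both $s\in W^l$ and $t\in W^r$, or exactly one wing contains its endpoint while the paired auxiliary clause fails), then I construct an explicit $k$-cover of $G$ by $s$-$t$ walks in which some walk contains $W$ yet no walk contains $W^lWW^r$. Each walk of $\mathcal{R}$ is extended to an $s$-$t$ walk by splicing detours through $\sea \to \vapor \to \cloud$ and $\cloud \to \river$, whose existence is guaranteed by \Cref{lem:clouds}. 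The failure of (b) then lets me truncate at least one wing; for instance, with both $s\in W^l$ and $t\in W^r$, a single walk that starts at $s$ inside $W^l$, traverses $W$, and ends at $t$ inside $W^r$ covers $W$ but not the full wings, while the orphaned wing arcs are absorbed into the inflated river-cover walks. The other failure modes are handled symmetrically using the shortcuts provided by $t\in R^+(W)\cup W^r$ or $s\in W$.

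For the reverse direction ($\Leftarrow$), fix any $k$-cover $\mathcal{C}$ of $G$. If (a) holds, let $M > k$ be the minimum walk-cover number of $\river(W)$. The maximal river sub-walks of the walks in $\mathcal{C}$ together cover $\river(W)$, so they total at least $M > k$, and by pigeonhole some walk $P\in\mathcal{C}$ contains at least two river sub-walks. Between any two consecutive river sub-walks, $P$ must transit from $\sea$ to $\cloud$, which by \Cref{lem:clouds} forces a complete traversal of $W$. Since this transit is strictly interior to $P$, the R-univocality of $W^l$ forces $P$ to reach $\wt(W)$ through the whole of $W^l$, and the univocality of $W^r$ forces $P$ to leave $\wh(W)$ through the whole of $W^r$. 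If instead (b) holds, the \ls{k} of $W$ gives a walk $P\in\mathcal{C}$ containing $W$; when $s\notin W^l$ and $t\notin W^r$, the same univocality arguments apply directly to $P$. When exactly one of $s\in W^l$, $t\in W^r$ holds, say $s\in W^l$ with $t\notin R^+(W)\cup W^r$ and $s\notin W$, the first arc of $W^l$ must still be covered by some $P'\in\mathcal{C}$; by the R-univocality of $W^l$ and by $s\notin W$, $P'$ must detour from $s$ through the sea-related SCC back to the start of $W^l$, traverse it fully, and continue into $W$ (it cannot terminate beforehand because $t\notin R^+(W)$ means $t$ is unreachable without traversing $W$); then $t\notin W^r$ combined with univocality of $W^r$ forces $P'$ through all of $W^r$.

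The main obstacle is the necessity construction: it requires a careful case split on how (b) can fail, and in each case the bad cover must cover \emph{every} arc (including the orphaned wing prefix or suffix and all of $\river(W)$) using at most $k$ $s$-$t$ walks while keeping $W^lWW^r$ out of every walk. The sufficiency for (a) rests on the pigeonhole together with the strict bridge-like structure of \Cref{lem:clouds}, whereas the delicate part of the sufficiency for (b) is verifying that the auxiliary clauses $t\notin R^+(W)\cup W^r$ and $s\notin W$ are precisely what rules out a shortcut to $t$ that would let $P'$ terminate before completing both wings.
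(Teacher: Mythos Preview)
Your approach is essentially the same as the paper's: both argue sufficiency of (a) via a pigeonhole on river sub-walks forcing an interior $\sea\to\cloud$ transit (hence $W$, and then the wings by R-univocality/univocality), sufficiency of (b) via the wings being univocal extensions together with the auxiliary clauses ruling out premature termination, and necessity by explicitly building a $k$-cover that truncates a wing in each failure mode of (b). The paper's proof is at the same outline level as yours; the one place where your phrasing could be tightened is the claim that $P'$ ``traverses $W^l$ fully'' after entering it, which follows not from R-univocality alone but from combining $t\notin R^+(W)$ (forcing an eventual traversal of $W$) with the backward trace through $W^l$ from $\tail(\wt(W))$, exactly as the paper also sketches.
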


Note that the \ls{k} problems are equivalent for all $k \geq m$.
Additionally, the maximal output size of the linear models is $O(n^3)$.
We can check a walk for \ls{k}ty in $O(m+f(m,n))$ where $f(m,n)$ is the time required to check the walk-cover.
And using the two-pointer algorithm on each \cs{1} walk, we can enumerate all maximal \ls{k} walks in $O(m^2n + mnf(m,n))$ time.
Recall from the preliminaries that $f(m,n) \in O(mn)$, and further observe that for $k \geq m$ or $k = 1$ we get $O(m)$.
See \Cref{apx:linear:implementation} for more details on the implementation.

\section{Incremental Computation of the Hydrostructure}
\label{sec:algorithms}

We have earlier described the optimal verification algorithms for various safety models using the hydrostructure. However, to report all maximal linear safe solutions optimally, we additionally need a more efficient computation of the hydrostructure for all the relevant walks. This is achieved by incremental computation of $R^+(\cdot)$, $R^-(\cdot)$ and the $\river$ as follows.

\paragraph{Incremental Maintenance of Forward and Backwards Reachability.}
\label{sub:incremental-hydrostructure}

Because of symmetry it suffices to describe how to compute $R^+(W)$.
Consider that for a bridge-like walk $aZb$, $R^+(aZb)$ is $aZ$ plus everything that is reachable from outgoing arcs of $aZ$ other than $b$ without entering $aZ$.
This allows us to describe $R^+(aZb)$ as the union of subsets {\em reachable} from the outgoing {\em siblings} of arcs on $Zb$, denoted as the {\em sibling reachability} \Rsib{e} of an arc $e \in Zb$, and formally defined as $\Rsib{e} := \{x \in G \mid \exists \text{ $\tail{}(e)$-$x$ walk in } G \setminus e\}$.
Formally we get the following (see \Cref{apx:implementation:incremental-hydrostructure-proofs} for the proofs of this paragraph).

\begin{lemma}[restate = incrementalcloudsrepresentation, name = ]
    \label{lem:simplified-clouds-computation}
    Given a bridge-like walk $aZb$ it holds that $R^+(aZb) = aZ \cup \bigcup_{\text{arc } e \in Zb} \Rsib{e}$.
\end{lemma}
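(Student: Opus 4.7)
The equality splits into two containments, which I prove separately.

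The direction $(\supseteq)$ is the easier one. For $x \in aZ$, the proper prefix of $aZb$ ending at $x$ is an $a$-$x$ walk strictly shorter than $aZb$, so it cannot contain $aZb$ as a subwalk, and hence $x \in R^+(aZb)$. For $x \in \Rsib{e}$ with $e \in Zb$, take a $\tail(e)$-$x$ walk $P$ in $G\setminus e$ and prepend the prefix $Q$ of $aZb$ running from $a$ to the node $\tail(e)$ (so $Q$ itself does not use $e$). The concatenation $QP$ starts with $a$, ends at $x$, and avoids the arc $e$. Since $e$ is an arc of $aZb$, no subwalk of $QP$ can equal $aZb$, so $x \in R^+(aZb)$.

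The substance of the proof is the direction $(\subseteq)$. Fix $x \in R^+(aZb)\setminus aZ$ and a walk $W'$ from $a$ to $x$ that does not contain $aZb$ as a subwalk; I will exhibit an arc $e \in Zb$ witnessing $x \in \Rsib{e}$. Let $r$ be the largest position of $W'$ such that $W'[r] \in \{a, z_1, \dots, z_k\}$, i.e., an arc of $aZb$ other than $b$; such $r$ exists because $W'[1]=a$. By maximality, no arc from $\{a, z_1, \dots, z_k\}$ occurs in $W'[r+1..]$, though the arc $b$ may still occur. Let $e$ be the arc of $aZb$ immediately after $W'[r]$, so $e \in Zb$ and $\tail(e) = \head(W'[r])$. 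If $W'[r] \neq z_k$, then $e \in \{z_1, \dots, z_k\}$ is distinct from $b$, so the suffix $W'[r+1..]$---even if it uses $b$---lies entirely in $G\setminus e$ and provides a $\tail(e)$-$x$ walk witnessing $x \in \Rsib{e}$.

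The remaining sub-case is $W'[r]=z_k$, forcing $e=b$; here, if the suffix $W'[r+1..]$ avoids $b$, we directly get $x \in \Rsib{b}$, but otherwise the suffix uses $b$ and we need a different $e$. The plan for this sub-case is to trace $W'$ backwards from the last occurrence of $b$: let $q$ be the largest integer with the property that the $q+1$ positions of $W'$ ending at this last $b$ carry arcs $z_{k-q+1}, z_{k-q+2}, \dots, z_k, b$ consecutively. Since $aZb$ is not a subwalk of $W'$ and $W'[1]=a$, the position immediately preceding this block must exist and cannot carry the $aZb$-arc that would extend the block one step further (that arc being $z_{k-q}$ for $q<k$, and $a$ for $q=k$). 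Hence this preceding arc is a non-$aZb$ arc ending at $\tail(z_{k-q+1})$. I then assemble a $\tail(z_{k-q+1})$-$x$ walk in $G\setminus z_{k-q+1}$ from: the suffix of $W'$ after the last $b$ (which already lies in $G\setminus aZb$), a $\head(b)$-to-$\tail(z_{k-q+1})$ connection supplied by the bridge-like hypothesis on $aZb$ (which rules out pathological re-entries into $aZ$), and an appropriate deviation from $\tail(z_{k-q+1})$ using the sibling arcs identified above, giving $x \in \Rsib{z_{k-q+1}}$. I expect this backward-tracing sub-case to be the main obstacle: the bridge-like hypothesis is essential here, and small examples (e.g., graphs with self-loops at internal nodes of $aZb$) show that the lemma fails for non-bridge-like walks.
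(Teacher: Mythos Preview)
Your $(\supseteq)$ direction is essentially the paper's argument. Your $(\subseteq)$ direction takes a different, more elementary route than the paper: the paper invokes the structural fact that $\sea(aZb)\cup\vapor(aZb)$ contains an SCC (Lemma~4.3(d)) to obtain a clean $a$--$x$ path whose last exit from $\vapor$ identifies the arc $e$, whereas you work directly with an arbitrary witness walk $W'$ and look for the last occurrence of an arc of $aZ$. Your approach is perfectly viable and arguably more self-contained, since it avoids the SCC lemma.

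However, your ``remaining sub-case'' ($W'[r]=z_k$ with $b$ occurring in the suffix) is not actually handled: the plan you sketch---tracing backwards from the last $b$, then using ``a $\head(b)$-to-$\tail(z_{k-q+1})$ connection supplied by the bridge-like hypothesis''---is vague, and bridge-likeness does not directly hand you such a connection; it only says that \emph{every} walk between some pair contains $aZb$. Fortunately, this sub-case is \emph{vacuous}. Since $aZb$ is bridge-like, Lemma~4.3(c) gives $b\in\cloud(aZb)$, i.e.\ $b\notin R^+(aZb)$. If your witness walk $W'$ contained $b$, its prefix ending at $b$ would be an $a$--$b$ walk not containing $aZb$ as a subwalk, placing $b$ in $R^+(aZb)$---a contradiction. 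Hence $W'$ avoids $b$ entirely, and your main case (the suffix $W'[r+1..]$ avoids $e$, whether $e\in\{z_1,\dots,z_k\}$ by maximality of $r$, or $e=b$ because $W'$ avoids $b$) already finishes the proof. Replace the backward-tracing paragraph with this one-line observation and the argument is complete.
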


Additionally, since subwalks of bridge-like walks are bridge-like, the following allows us to compute \Rsib{\cdot} incrementally along a bridge-like walk.

\begin{lemma}[restate = incrementalcloudscomputation, name = ]
    \label{lem:incremental-clouds}
    Let $aZb$ be a bridge-like walk where $b$ is a split-arc and $e$ a split-arc after $a$.
    Then
    \begin{enumerate}[label = (\alph*), nosep]
        \item $\Rsib{e} \subseteq \Rsib{b}$ and \label{lem:incremental-clouds:subset}
        \item $\Rsib{b} = \Rsib{e} \cup \{x \in G \mid \exists \text{ $e$-$x$ walk in } G \setminus (\Rsib{e} \cup \{b\})\}$. \label{lem:incremental-clouds:reachability}
    \end{enumerate}
\end{lemma}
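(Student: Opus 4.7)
The plan is to prove both parts by direct reachability arguments using two structural facts. First, for bridge-like $aZb$, \Cref{lem:cloud-cases} gives $\vapor(aZb)=Z$ exactly, and by \Cref{lem:clouds}~\labelcref{lem:clouds:connectivity} the only arc from $\vapor$ to $\cloud$ is $b$ and the only arc from $\sea$ to $\vapor$ is $a$; in particular $Z$ has no chord arcs. Second, $\Rsib{e}$ is forward-closed under non-$e$ arcs: if $v\in\Rsib{e}$ and $(v,w)$ is an arc other than $e$, then $w\in\Rsib{e}$ (extend the witness walk by $(v,w)$, still in $G\setminus e$). Equivalently, any walk that avoids $e$ and enters $\Rsib{e}$ stays in $\Rsib{e}$ thereafter.

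For part~(a), I would take $x\in\Rsib{e}$ with witness walk $P:\tail{}(e)\to x$ in $G\setminus e$ and first argue that $P$ necessarily avoids $b$. Leaving $\tail{}(e)\in\vapor$ without using $e$ forces entering $\sea$ via a sibling of $e$; from $\sea$ one can only re-enter $\vapor$ through $a$ and walk along $Z$, but cannot pass the missing $e$, so $\tail{}(b)$ is unreachable in $G\setminus e$ and $b$ is never used by $P$. Hence $P$ lies in $G\setminus\{e,b\}$. Next I would build a walk $Q$ from $\tail{}(b)$ to $\tail{}(e)$ in $G\setminus b$ by leaving $\tail{}(b)$ via a non-$b$ sibling (guaranteed by $b$ being a split-arc), landing in $\sea$, routing to $\tail{}(a)$ through the sea-related SCC of \Cref{lem:clouds}~\labelcref{lem:clouds:connectedseacloud}, and following $aZ$ to $\tail{}(e)$. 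The concatenation $QP$ certifies $x\in\Rsib{b}$.

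For part~(b), let $S$ denote the set $\{x\mid\exists\text{ $e$-$x$ walk in }G\setminus(\Rsib{e}\cup\{b\})\}$. The inclusion $\Rsib{e}\cup S\subseteq\Rsib{b}$ uses part~(a) for $\Rsib{e}$, and for $x\in S$ I would prepend the walk $Q$ above (using $e\neq b$ so $e$ remains available) to a witness, obtaining a walk from $\tail{}(b)$ to $x$ in $G\setminus b$. For the reverse inclusion, take $x\in\Rsib{b}\setminus\Rsib{e}$ with witness $Q':\tail{}(b)\to x$ in $G\setminus b$. If $Q'$ avoided $e$ entirely, the sea-related confinement from part~(a) would place all reachability from $\tail{}(b)$ in $G\setminus\{e,b\}$ inside $\Rsib{e}$, forcing $x\in\Rsib{e}$ --- a contradiction. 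So $Q'$ uses $e$; splitting at its last occurrence gives $Q'=Q_1\cdot e\cdot Q_2$ with $Q_2:\head{}(e)\to x$ avoiding $e$. Since $\head{}(e)\notin\Rsib{e}$ (the only arc of $\vapor=Z$ entering $\head{}(e)$ is $e$ itself) and $Q_2$ avoids $e$, the forward-closure prevents $Q_2$ from entering $\Rsib{e}$, for otherwise $Q_2$ would stay in $\Rsib{e}$ and end at $x\in\Rsib{e}$. Therefore $e\cdot Q_2$ is an $e$-$x$ walk in $G\setminus(\Rsib{e}\cup\{b\})$, witnessing $x\in S$.

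The most delicate step will be the repeated confinement of walks from $\tail{}(e)$ in $G\setminus e$ (and from $\tail{}(b)$ in $G\setminus\{e,b\}$) to $\sea$ together with the prefix of $aZ$ ending at $\tail{}(e)$; this rests on $\vapor=Z$ being chord-free and on the exclusivity of $a$ and $b$ at the $\sea$-$\vapor$ and $\vapor$-$\cloud$ boundaries. Degenerate configurations where $\sea=\{a\}$ (so the sea-related SCC of \Cref{lem:clouds}~\labelcref{lem:clouds:connectedseacloud} is not formed) should be checked separately: a non-$b$ sibling of $\tail{}(b)$ still lands in $\sea$, and strong connectivity combined with the bridge-like structure yields a walk back to $\tail{}(a)$ avoiding $b$.
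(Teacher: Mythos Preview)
Your argument is essentially correct and close in spirit to the paper's, but there is a small gap in the reverse inclusion of part~(b). You claim that if the witness $Q'\colon \tail(b)\to x$ in $G\setminus b$ avoids $e$ entirely, then confinement forces $x\in\Rsib{e}$. This fails for $x=\tail(b)$: the trivial walk witnesses $\tail(b)\in\Rsib{b}$, yet $\tail(b)\notin\Rsib{e}$ since it lies on $Z$ strictly after $\tail(e)$. Reachability from $\tail(b)$ in $G\setminus\{e,b\}$ is $\{\tail(b)\}\cup\Rsib{e}$, not $\Rsib{e}$ alone, so the contradiction does not fire in this boundary case. The patch is immediate---the $Z$-segment from $e$ to $\tail(b)$ is an $e$-$\tail(b)$ walk avoiding both $\Rsib{e}$ and $b$, so $\tail(b)\in S$ directly---but as written the case is missed.

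The paper sidesteps this boundary issue by a slightly different device in the same direction: rather than arguing that the witness must already contain $e$, it first \emph{prepends} the $Z$-suffix from $e$ to $\tail(b)$ to the witness, producing an $e$-$x$ walk $W'$ in $G\setminus b$ that is guaranteed to begin at $e$; it then extracts a suffix of $W'$ starting at some occurrence of $e$ and avoiding $\Rsib{e}$ (your forward-closure argument, applied to the last such occurrence, is precisely what justifies that extraction). For part~(a) the paper is more concise: it invokes \Cref{lem:simplified-clouds-computation} to place $x\in R^+(aZb)$ and hence obtain an $a$-$x$ walk avoiding $b$ immediately, whereas you re-derive the needed confinement from \Cref{lem:clouds} directly. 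Both routes work; the paper's is shorter, yours more self-contained. Finally, your worry about the degenerate case $\sea=\{a\}$ is unnecessary here: since $b$ is a split-arc, its sibling already lies in $\sea$, so $\sea\neq\{a\}$ under the lemma's hypotheses.
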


Hence, we can compute all $R^+_{sib}(\cdot)$ using a single interrupted traversal. We number the nodes and arcs in $W$ and then annotate each node and arc $x$ with the first arc $e \in W$ for which it is in \Rsib{e}.
Afterwards we can decide if a node or arc $x$ is in $R^+(aZb)$ of a given subwalk $aZb$ by checking if $x$ enters \Rsib{\cdot} before or with $b$, or if $x \in aZ$.

\paragraph{Incremental Maintenance of the River Properties.}
\label{sub:incremental-river}

When using the two-pointer algorithm after computing the hydrostructure incrementally, we can maintain properties of the $\river$ used throughout the paper (\Cref{sec:subsetCov}).
By \Cref{lem:incremental-clouds} each node and arc enters $R^+(\cdot)$ (and leaves $R^-(\cdot)$) at most once during an execution of the two-pointer algorithm on a bridge-like walk.
We can therefore sort the nodes and arcs of $G$ into buckets in $O(m)$ time according to when they enter and leave the \river{}.
The existence of the \river{} can then be maintained in $O(m)$ time for a complete execution of the two-pointer algorithm.
Moreover, it is possible to maintain the in-degree and out-degree of each node and arc in the \river{} as well as the number of \emph{sources} (in-degree zero), \emph{sinks} (out-degree zero) and \emph{violators} (in-degree or out-degree at least two), and hence determine if the \river{} is a path. Further, \Cref{lem:incremental-clouds} also implies that entire SCCs of the \river{} (using both $\Rsib{\cdot}$ and $R^-_{sib}(\cdot)$) enter and leave the \river{} at most once during the execution of the two pointer algorithm. Hence, the SCCs of the \river{} can also be identified in overall $O(m)$ time. Finally, the size of a minimum walk-cover can also be maintained after each insertion or deletion of an edge from the \river{} using the {\em folklore} fully dynamic update algorithm for maximum flows~\cite{GuptaK18} in $O(m)$ time per update. Since each edge enters and leaves the \river{} at most once maintaining the size of the minimum walk-cover takes total $O(m^2)$ time.

\paragraph{Applications.}

The above incremental computation techniques result in improving the $O(m^2n)$ term of runtimes of linear problems by a factor of $m$ using amortisation as follows.
We execute the two-pointer algorithm on all maximal \cs{1} walks (which we get in $O(mn)$ time from~\cite{DBLP:journals/talg/CairoMART19}), but only maintain the hydrostructure and \river{} properties for non-trivial and bridge-like trivial \cs{1} walks (there are only $O(n)$ of these by \Cref{thm:max-omnitig-bounds} and they can be found in total $O(mn)$ time by \Cref{thm:bridge-like-check}~\labelcref{thm:bridge-like-check:check}).
For non-trivial \cs{1} walks, instead of maintaining the hydrostructure for the whole walk, we maintain it for the heart.
We describe how we handle trivial avertible walks in the same time bounds as non-trivial walks in \Cref{apx:linear:avertible-linear-implementation}.
Checking the walk-cover can be done as described above for $k = 1$ (maintain if the \river{} is a path) or $k \geq m$ (always false) in $O(m)$ time per maximal bridge-like walk.
Recall that the output size is potentially larger than $O(mn)$ resulting in the following.

\begin{theorem}
    \label{thm:improved-linear-runtimes}
    Given a strongly connected graph with $m$ arcs and $n$ nodes, all the maximal \ls{k} walks can be reported in $O(mn + n \cdot f(m,n) + |out|)$ time, where $f(m,n)$ is the time required for maintaining the size of the walk-cover ($f(m,n) = O(m)$ for $k = 1$ or $k \geq m$, and $O(m^2)$ otherwise), and $|out|$ is the size of the output.
\end{theorem}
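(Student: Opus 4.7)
The plan is to combine the incremental hydrostructure maintenance just developed with the linear characterisations from \Cref{thm:linear} and \Cref{thm:linear-nt-wings}, using as a backbone the $O(n)$ non-trivial maximal \cs{1} walks guaranteed by \Cref{thm:prelim}~\ref{item:omnitig-properties}. Since every \ls{k} walk is \cs{1} (as observed at the start of \Cref{s:open}), it suffices to sweep only over the maximal \cs{1} walks, all of which are computed in $O(mn)$ time by~\cite{DBLP:journals/talg/CairoMART19}. Trivial avertible \cs{1} walks are handled within the stated bounds via \Cref{apx:linear:avertible-linear-implementation}, so the main work concentrates on the $O(n)$ non-trivial and trivial-bridge-like \cs{1} walks, the latter identifiable in total $O(mn)$ time as remarked before the theorem statement.

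For each such backbone walk $W^*$, I run the two-pointer algorithm of \Cref{thm:prelim}~\ref{thm:two-pointer-algirithm}, maintaining the hydrostructure of the current candidate's heart incrementally via \Cref{lem:simplified-clouds-computation} and \Cref{lem:incremental-clouds}. By these lemmas each node and arc enters \Rsib{} (and symmetrically $R^-_{sib}(\cdot)$) at most once during the sweep, giving total $O(m)$ hydrostructure work per $W^*$. In parallel I track exactly the \river{} data required by \Cref{thm:linear} and \Cref{thm:linear-nt-wings}: bucket indices of node/arc entries and exits, in- and out-degree counters together with source, sink and violator counts (so the query ``\river{} is a path'' takes $O(1)$), the SCC partition of the \river{} updated only at bucket boundaries, and, for $1 < k < m$, the minimum walk-cover size of the \river{} maintained by the fully dynamic flow structure described earlier. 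Since each arc enters or leaves the \river{} at most once per sweep, the total walk-cover cost per sweep is $f(m,n)$, with $f(m,n) = O(m)$ when $k = 1$ or $k \geq m$ and $f(m,n) = O(m^2)$ in general.

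At each two-pointer step I then evaluate the criteria of \Cref{thm:linear} and \Cref{thm:linear-nt-wings}: the \river{}-covering condition via the maintained walk-cover data, the reachability predicates $s \in R^-(W)$ and $t \in R^+(W)$ via the bucket membership of $s$ and $t$ in the current hydrostructure, and the wing-and-endpoint conditions of \Cref{thm:linear-nt-wings}~\ref{thm:linear-nt-wings:st} together with the suffix-prefix condition of \Cref{thm:linear}~\ref{thm:linear:suffix-prefix} via precomputed position markers of $s$ and $t$ on $W^*$ and its univocal extension, each in $O(1)$ amortised. Summing over all $O(n)$ backbone walks gives $O(mn)$ for hydrostructure and constant-time bookkeeping, $O(n \cdot f(m,n))$ for walk-cover maintenance, and $|out|$ for emitting the safe walks, matching the claimed bound. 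The main obstacle I anticipate is arguing that the wing-side case analysis in \Cref{thm:linear-nt-wings}~\ref{thm:linear-nt-wings:st} --- in particular the degenerate case where $s$ extends back into the heart and collapses the \river{} to empty --- can indeed be checked at each two-pointer step in amortised $O(1)$ without an additional sweep; a secondary concern is ensuring that the fully dynamic walk-cover structure is invoked only at the $O(m)$ \river{} boundary events per sweep, so that its contribution is absorbed into $n \cdot f(m,n)$ rather than inflated by the $\Theta(mn)$ micro-steps of the two-pointer algorithm.
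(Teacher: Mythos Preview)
Your proposal is correct and follows essentially the same approach as the paper: compute the maximal \cs{1} walks in $O(mn)$ time, run the two-pointer algorithm on each, maintain the hydrostructure incrementally only for the $O(n)$ non-trivial and bridge-like trivial backbone walks (handling trivial avertible walks separately via \Cref{apx:linear:avertible-linear-implementation}), and charge $f(m,n)$ per backbone for the walk-cover maintenance. The two concerns you flag---the $O(1)$ amortised cost of the wing-side checks and the restriction of the dynamic flow updates to the $O(m)$ \river{} boundary events per sweep---are exactly the implementation points the paper leaves at the same level of detail, so your caution is well placed but does not indicate a gap.
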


\begin{figure}[thb]
    \centering
    \includegraphics[trim = 0 2 0 4, clip, scale=1]{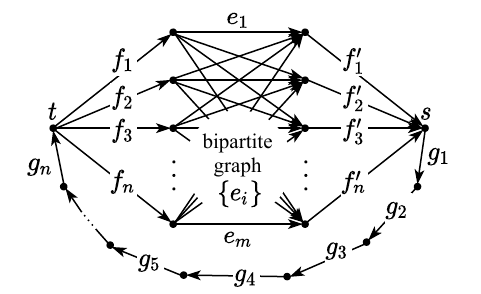}
    \caption{\small A graph class with $\Theta(n)$ nodes and $m \in \Omega(n) \cap O(n^2)$ arcs.
    The bipartite graph has $m$ arcs, resulting in total $\Theta(m)$ arcs.
    For each $i \in \{1, \dots, m\}$, $W_{i} = g_1 \dots g_n f_{e_i} e_{i} f'_{e_i} g_1 \dots g_n$ is \cs{k} and \ls{k} for each $k \geq 1$, and the total length of all $W_{i}$ is $\Omega(mn)$.
    This example is adapted from~\cite{DBLP:journals/talg/CairoMART19}.}
    \label{fig:runtime-tightness-example}
\end{figure}

The runtimes of $O(mn)$ are optimal because there exists a set of graphs (see \Cref{fig:runtime-tightness-example}) with safe walks in each model of total length $\Omega(mn)$.
As mentioned in \Cref{apx:linear:implementation}, the output size of the linear models is bounded by $O(n^3)$.

\section{Safety for Subset Covering and Subset Visibility}
\label{sec:subsetCov}

In this section we give a short overview of safety in $F$-covering and $F$-visible models where $F \subseteq E$.
For a full discussion and formal characterisation of these models, see \Cref{sec:subsetCovApx,sec:subsetVisApx}.

\paragraph{Subset covering.}
For the purpose of this section we denote an arc in $F$ by \emph{$F$-arc} and a subgraph or walk containing an $F$-arc with the predicate \emph{$F$-covered}.
Similar to how only subwalks of 1-circular safe walks can be safe in $E$-covering models, in subset covering only subwalks of walks that are safe in an $E$-covering model can be safe in the corresponding $F$-covering model.
In the $F$-covering models, single arcs in $E \setminus F$ are not trivially safe.
So in order to characterise their safety, we split them using a dummy node, since the hydrostructure is only defined on walks with at least two arcs.
Furthermore, if $F$ is empty, then no walk is safe, so in this section we assume $F \neq \emptyset$.

In the \emph{circular models}, a trivial walk is safe if its heart is $F$-covered.
Further, for bridge-like walks (including trivial walks), we define the \emph{$F$-covered SCCs} as the SCCs of the \river{} as well as the sea- and cloud-related SCCs (see \Cref{fig:Hydro-properties}) if they are $F$-covered.
Then a bridge-like walk is safe if any solution walk needs to leave an $F$-covered SCC, since then it needs to complete a full cycle through the hydrostructure, forcing it to have $W$ as subwalk.
This is required if there is an $F$-arc outside an SCC in the \river{} or the number of $F$-covered SCCs is greater than $k$.
In all other cases, the walk is not safe.

In the \emph{linear models}, we define the restricted reachability of a walk $W$ from the point of view of $s$ and $t$ denoted by $R^+_s(W)$ and $R^-_t(W)$.
We define $R^+_s(W)$ as the subgraph that is reachable from $s$ without using $W$ as subwalk and symmetrically $R^-_t(W)$ as the subgraph that reaches $t$ without using $W$ as subwalk.
Further, we refer to the set $R^+_s(W) \cap R^-_t(W)$ as the \emph{st-induced subgraph of $W$} and to $R^+_s(W) \cap R^-_t(W) \cap \river(W)$ as the \emph{st-induced \river{} of $W$}.
Lastly, we define the \emph{$F$~walk-cover} of a graph as a set of walks that cover all $F$-arcs in that subgraph.

With these definitions, we get a characterisation similar to that for the corresponding $E$-covering model.
We again characterise the non-trivial wings separately, but here we describe only the characterisation of trivial walks and non-trivial hearts.
For the latter two, the conditions on the walk cover of the \river{} and the $s$-$t$ suffix-prefix covering of an arc in the trivial heart are tightened:
we substitute the walk cover with an $F$ walk-cover, and the arc in the trivial heart with an $F$-arc.
The condition on the location of $s$ and $t$ is tightened by excluding the case where all $F$-arcs can be covered by $s$-$t$ walks without having $W$ as subwalk.

For the verification and enumeration algorithms, the sets $R^+_s(W)$ and $R^-_t(W)$ can be computed similarly to $R^+(W)$ and $R^-(W)$ in the same asymptotic time, and the $F$ walk-cover can be computed similarly to the walk cover as well, again in the same asymptotic time.
So all runtime bounds for $E$-covering models apply also to the $F$-covering models.

\paragraph{Subset visibility.}
In subset visibility, we limit the solution of a model to its visible arcs.
Thus we define $vis_F(W)$ as the subsequence of a walk $W$ which belongs to $F$.
Note that this does not change the solution of the problem, but only its representation, which is now limited to the visible set $F$.
So if a walk $W$ is a solution in an $E$-visible model, then $vis_F(W)$ is a solution in the corresponding $F$-visible model.
Abusing notation we also simply say that $W$ is a solution for the $F$-visible model of the problem.

To employ the hydrostructure for subset visibility, we relax the definition of the restricted reachabilities of a walk $W$.
We define $R^+_F(W)$ as everything reachable by a walk $W'$ from $\wh(W)$ such that $vis_F(W)$ is not a subwalk of $vis_F(W')$, and symmetrically $R^-_F(W)$ as everything that reaches $\wt(W)$ by a walk $W'$ such that $vis_F(W)$ is not a subwalk of $vis_F(W')$.
With this modified definition of the hydrostructure, the condition for the \vapor{} to be a path in \Cref{lem:cloud-cases} relaxes to the visible \vapor{} being a path.
Similarly, the other properties of the hydrostructure are adapted by considering $vis_F(W)$ instead of $W$.

In subset visibility, arcs are considered adjacent if they are connected by an invisible path.
This new \emph{visible} adjacency can be precomputed in $O(mn)$ time, such that all algorithms for $E$-visible models can be directly adapted to $F$-visible models with an added expense of $O(mn)$ time.

\section{Conclusion}
\label{sec:conclusion}

The issue of finding a safe and \emph{complete} genome assembly algorithm (for a natural problem formulation, namely that of a closed arc-covering walk) has been open for two decades, and recently solved in~\cite{tomescu2017safe}. On genome assembly graphs constructed from perfect data, omnitigs led to significantly longer strings than those of unitigs, proving the potential usefulness of a rigorous mathematical approach to genome assembly. However, because of many practical aspects of the read data (linear chromosomes, missing coverage, errors and diploidy) it remained open whether this theoretical breakthrough can also be made practical. A further issue with the existing results~\cite{tomescu2017safe,cairo2020macrotigs,acosta2018safe} is their very problem-specific techniques, which seem extremely difficult to generalise.

In this paper we presented an entirely {\em new perspective} to characterise safe solutions for such covering problems, with a unified methodology, having many implications. First, the hydrostructure of a walk highlights important reachability properties of a graph from the perspective of the safe walk. This results in simpler and {\em complete} characterisations (in the form of a YES-certificate) of the previously studied types of covering walks. Second, these characterisations are directly adaptable to optimal verification algorithms (except for \ls{k} with $k\neq 1,\infty$ in non-strongly connected graphs), and simple enumeration algorithms.
We also improved our \cs{k} and \ls{k} (for $k=1,\infty$) algorithms to optimality using the optimal \cs{1} algorithm~\cite{DBLP:journals/talg/CairoMART19} and the incremental computation of the hydrostructure, respectively. Third, the flexibility of the hydrostructure allowed us to easily handle further practical aspects. Namely, we easily extended the \emph{characterisations} for the subset covering model, for all \cs{k} and \ls{k} problems. Moveover, we easily generalised the \emph{hydrostructure} for the subset visibility model, so that the existing characterisations can be applied directly without changes. Lastly, even though the hydrostructure is developed mainly for arc-covering walks motivated by genome assembly, it reveals a novel fundamental insight into the structure of a graph, which can be of independent interest.

We deem that the extensive array of covering problems considered this paper, together with the simple characterisations and enumeration algorithms, proves that the hydrostructure is an ideal toolkit for obtaining safe, \emph{complete} and \emph{practical} genome assembly algorithms. As such, it could lead to another theoretical breakthrough with significant practical improvements, in the same way as the FM-index~\cite{FerraginaM00} is now at the core of start-of-the-art read mapping tools.

\subsubsection*{Acknowledgments}

We thank Manuel A. Cáceres for helpful comments on the manuscript's introduction.

\bibliographystyle{plain}
\bibliography{bibliography}

\newpage
\appendix

\section{Bioinformatics Motivation\label{sec:bioinformatics-motivation}}

Assuming we are sequencing a single circular genome (as when sequencing a single bacterium), the most basic notion of a solution (or a source genome) is that of a walk in the assembly graph covering all nodes or all arcs at least once, thereby explaining their existence in the assembly graph~\cite{tomescu2017safe}.
Even if the earlier works (mentioned in \Cref{sec:intro} on ``global'' genome assembly formulations) include various \emph{shortest} versions of such walks (e.g.~Eulerian, Chinese postman, or even Hamiltonian), it is widely acknowledged~\cite{nagarajan2009parametric,narzisi14,medvedev2019modeling,nagarajan2013sequence,DBLP:books/cu/MBCT2015,kingsford2010assembly} that a shortest walk misses repeats in the source genome. Moreover, safe walks for \emph{all} arc-covering walks are also trivially safe for more specialised types of walks. Both of these facts motivate the lack of constraints of the solution walks, except than to require that they are arc-covering.

However, such theoretical formulation of genome assembly using a single closed arc-covering walk uses the following assumptions. (i) All the reads are sequenced from a \emph{single circular} genome, such that these reads have (ii) \emph{no errors}, and (iii) \emph{no missing coverage} (e.g.~every position of the genome is covered by some read). However, these are very strong and impractical assumptions that are violated by real input data sets, as we explain next:

\begin{itemize}
\item Assumption (i) is not practical for several reasons. 
When sequencing a bacterium~\cite{charlebois1999organization}, one indeed obtains reads from a \emph{single circular} genome. However, when sequencing all bacteria in an environmental sample (as in \emph{metagenomics}~\cite{metagenomics}), the reads originate from \emph{multiple circular} genomes. In case of a virus~\cite{Gelderblom:1996aa}, the reads originate from a \emph{single linear} genome. Finally, when sequencing eukaryotes such as human~\cite{brown2002human}, the reads originate from \emph{multiple linear} chromosomes. 

\item Assumption (ii) is not practical because the sequencing process introduces errors in the reads, with error rates ranging from 1\% for short reads, up to 15\% for long reads. Such read errors produce certain known structures in the assembly graph~\cite{nagarajan2013sequence}, such as \emph{tips} (short induced paths ending in a sink), and \emph{bubbles} (two induced paths of the same length, and with the same endpoints, see also their generalisation to \emph{superbubbles}~\cite{IliopoulosKMV16,GartnerMS18,PatenERNGH18}), which are usually handled in an initial \emph{error-correction} stage using heuristics (e.g.~tips are removed, and bubbles are ``popped'' by removing one of the parallel paths). Such bubbles can also arise from a correct read position in diploid genomes (such as human), but on which the mother's copy of the chromosome differs from the father's. Popping bubbles is favorable in this setting in order to obtain longer assemblies (since unitigs would otherwise be broken up by the endpoints of the bubble). Moreover, in a de Bruijn graph, arcs that appear very few times (having low {\em abundance}) in the input reads are heuristically removed, as they are assumed to be errors.
But in practice, the assembly graph's topology might give evidence for the correctness of these arcs, especially in scenarios where low abundance is common~\cite{li2015megahit}, so a more accurate removal strategy is likely to result in a better assembly. This applies even to state-of-the-art long read assemblers like wtdbg2~\cite{Ruan:2020aa} which simply removes low abundance arcs.
\item Assumption (iii) is not true due to the practical limitations of current sequencing technologies. Thus, since not all parts of the genome can always be read, even if the sample contains a {\em single circular} genome, the reads appear as if they were sequenced from \emph{multiple linear} genomic sequences. 
\end{itemize}

\begin{figure}
    \centering
    \includegraphics{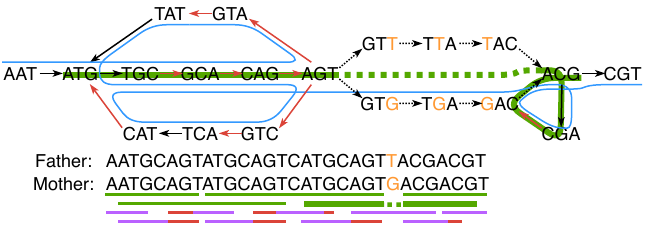}
    \caption{\small Assume a ``full'' read coverage scenario in which reads of length at least 4 have been sequenced starting from each position a diploid genome (i.e.~from both the mother and the father copy of a chromosome). An \emph{arc-centric} de Bruijn graph of order 3 has all substrings of length 3 of the reads as nodes, and all substrings of length 4 as arcs (from their length-3 prefix to their length-3 suffix). The thin blue line depicts the mother genome as a walk in this graph. Orange marks a position in which the two haplotypes differ; this induces a bubble in the de Bruijn graph (dotted arcs). We assume that the red arcs have low abundance in the read dataset for some natural reasons, despite being correct. But they would still be removed by current heuristic error-correction strategies, even though our models show evidence for their correctness.
    We set the dotted arcs to invisible, and model the genome as an $s$-$t$ walk not required to cover the red and dotted arcs, where $s = \textsf{AAT}$ and $t = \textsf{CGT}$. Below the mother's haplotype, we mark in green its maximal \ls{1} substrings. The thick safe walk of the mother's haplotype is also depicted in the graph (in green as well). We show in magenta the unitigs of the graph aligned to the mother haplotype. Additionally, the red parts of the unitigs would be missing if the red arcs were removed.
    }
    \label{fig:example-model-variants}
\end{figure}

The genome assembly models from \Cref{sec:intro} handle all such issues flexibly and theoretically solid by considering \emph{collections} of closed or open walks. In the subset covering models, the arcs not required to be covered can be those in tips, bubbles and those with low abundance in the reads. However, merely making them avoidable (and not removing them) can break the safety around such regions. Hence, another possible extension is to mark such parts of the graph as invisible. For example, marking bubbles as invisible prevents disrupting the safety of their flanking regions. 
See \Cref{fig:example-model-variants} for an example of these models.

Note that in practice k-mer multiplicities are available with the genome data, and these could be used to compute even longer safe paths when integrated into our models.
However this is beyond the scope of our paper, as they address a fundamentally different graph problem, and may result in the following major theoretical problems.

In case we use multiplicities to impose a coverage-goal for each arc, specifying exactly how often a valid solution walk must contain an arc, then the resulting theoretical formulation as a graph problem changes to a Eulerian cycle, rather than simply an edge covering cycle. Practically, due to errors in the data and varying abundance of different genomic regions, we may end up with ``non-Eulerian'' nodes, i.e. nodes that must be entered more or less often than they must be exited.
Using another formulation, the above issue can be addressed by allowing an interval of multiplicities for each arc, where the existence of a valid solution walk obeys these interval restrictions.
However, this can be reduced\footnote{Assign the closed interval $[0, 1]$ to each arc, and $[1,1]$ to each node $v$ extended as an arc $(v_{in},v_{out})$, where all incoming arcs of $v$ are incident on $v_{in}$ and the outgoing arcs of $v$ leave from $v_{out}$.} to the Hamiltonian circuit problem which is NP-Hard.
Our models address the information on k-mer abundances using subset covering and visibility, loosing out on some information to avoid modifying the underlying graph problem.

\section{Omitted Hydrostructure Proofs}
\label{apx:omitted-proofs}

\cloudcases*
\begin{proof}
    We decompose $W$ into $aZb$, where $a$ and $b$ are arcs.
    We first prove that if $\vapor(aZb)$ is not exactly the open path $Z$, then $\vapor(aZb) = G$ and $aZb$ is avertible.
    Assume that $Z$ is not an open path, or $\vapor(aZb)$ contains a node or arc $x \notin Z$.
    We distinguish three cases:
    \begin{itemize}[nosep]
        \item 
        If $Z$ is not an open path, then it contains a cycle which can be removed to create a walk from $a$ to $b$ not containing $aZb$.
    
        \item
        If $x=a$ (or $x=b$) then by definition $R^-(aZb)$ (or $R^+(aZb)$) contains a walk from $a$ to $b$ that does not contain $aZb$.

        \item
        If $x \notin \{a, b\}$, then by definition of $R^+(aZb)$ and $R^-(aZb)$, there are paths from $a$ to $x$ and from $x$ to $b$ that do not contain $aZb$.
        And since $x$ is not in $aZb$, these paths can be joined together to create a walk from $a$ to $b$ that does not contain $aZb$.
    \end{itemize}
    In each case, there is a walk from $a$ to $b$ that does not contain $aZb$.
    And since $b$ is the last element of this walk, by strong connectivity this walk can be extended within $G$ to reach any node or arc in $G \setminus Z$ without containing $aZb$.
    By definition, the walk $Z$ is in $\vapor(aZb)$ as well.
    So $R^+(aZb) = G$, and by symmetry one can prove that $R^-(aZb) = G$, so $\vapor(aZb) = G$.
    This also makes $aZb$ avertible, since, for any pair $x, y \in G$, every occurrence of $aZb$ in an $x$-$y$ walk can be replaced with an $a$-$b$ walk that does not contain $aZb$.
    
    We now prove that if $\vapor(aZb)$ is an open path equal to $Z$, then $aZb$ is {\em bridge-like}. 
    Since $Z$ is an open path, $b\notin Z$ else it would repeat the node $\tail(b)$. However, $b\in R^-(aZb)$ by definition, so we have $b\notin R^+(aZb)$, which means $aZb$ is bridge-like because of the pair $(a,b)$.
\end{proof}

\clouds*
\begin{proof}
    We prove the statements in reverse order for simplicity.
    \begin{itemize}
        \item[\labelcref{lem:clouds:connectivity}]
        We first prove all the incident pairs involving the  $\river(aZb)$, and then the residual pairs involving the $\vapor(aZb)$, and finally between $\sea(aZb)$ and $\cloud(aZb)$.
        
        \begin{itemize}
            \item
            
            We show that $x,y \in (\river, \vapor)$ and $x,y \in (\river, \cloud)$ do not exist.
            Since $x \in \river(aZb)$, it holds that $x \notin R^+(aZb)$ and therefore $x \neq a$.
            Thus, $y\in R^-(aZb)$ (and hence $\vapor(aZb)$ or $\cloud(aZb)$), implies $x\in R^-(aZb)$ because of $(x,y)$, as prepending $x$ ($\neq a$) does not make $aZb$ a subwalk.
            This contradicts that $x\in \river(aZb)$ by definition.
            By symmetry $x,y \in (\vapor, \river)$ and $x,y \in (\sea, \river)$ do not exist as well.

            \item
            For $x,y \in (\vapor, \cloud)$ we necessarily have $y = b$, because $x\in R^+(aZb)$ would otherwise add $y\in R^+(aZb)$, removing it from $\cloud(aZb)$. When $y= b$ we get $aZb$ as a subwalk, preventing $R^+(aZb)$ from covering $y$.
            By symmetry, $x,y \in (\sea, \vapor)$ implies $x = a$.

            \item
            We show that $x,y \in (\sea, \cloud)$ implies $x = a$ and $y=b$,
            which contradicts our definition of $aZb$ as being from arc to arc.
            This is because $x\in R^+(aZb)$ would add $y\in R^+(aZb)$ if $y\neq b$, as described above, and symmetrically $y\in R^-(aZb)$ implies $x= a$.
            
            Let $x,y\in (\cloud,\sea)$.
            We prove that if $x$ is an arc then $aZb$ is R-univocal (by symmetry if $y$ is an arc then $aZb$ is univocal), and hence trivial.
            Then $x \in R^-(aZb)$ but $y \notin R^-(aZb)$, so $x = b$ because each other arc is covered by $R^-(aZb)$ using its head.
            Assume for a contradiction that $aZb$ contains a join node.
            Then by strong connectivity an arc $e$ entering $aZb$ at this join node needs to be (forwards) reachable from $y=\head(b)$, let $W$ be such a walk of minimum length.
            But then, since $y \notin R^-(aZb)$, $W$
            (being a $b$-$e$ walk) needs to have $aZb$ as subwalk, which contradicts that it is minimal.
        \end{itemize}
        
        Where it applies, the necessity of the connections follows from {\em strong connectivity} of the graph since no other connections exist, as proved above.

        \item[\labelcref{lem:clouds:connectedseacloud}]
        Let $W^P$ be the prefix of $aZb$ ending in its last split node.
        We show that if $\sea(aZb) \neq \{a\}$, then for every node and arc $x \in \sea(aZb)$ with $x \neq a$ there is a path $P$ in $\sea(aZb)$ containing $x$ which starts from an arc $e$ with $\tail(e) \in \vapor(aZb)$ and ends in $a$. Notice that $\tail(e)$ is a split-node, since $P$ \emph{diverges} from $aZb$ at $e$, ensuring that $\tail(e)$ is in $W^P$.
        This will prove the first part of the statement because it allows to form a cycle through $a$ and $x \neq a$ within $W^P \cup \sea(aZb)$, because the path would leave $aZb$ latest at its last split node to enter $\sea(aZb)$ 
        (using \Cref{lem:clouds}~\ref{lem:clouds:connectivity}).
        And this will also prove the second part by symmetry.
        
        For a node or an arc $x \neq a$ in $\sea(aZb)$, by definition there is an $a$-$x$ walk inside $R^+(aZb)$ that does not have $aZb$ as subwalk.
        Starting from $\head(a) \in \vapor(aZb)$ this walk exits $\vapor(aZb)$ into the $\sea(aZb)$ (since $R^+(aZb)=\vapor(aZb) \cup \sea(aZb)$), with an arc $e$ having $\tail(e) \in \vapor(aZb)$. %
        By strong connectivity, there is also a path from $x$ to $a$. 
        Notice that both the paths from $e$ to $x$ and from $x$ to $a$ are completely within the $\sea(aZb)$ because the only arc leaving $\sea(aZb)$ is $a$, by \Cref{lem:clouds}~\ref{lem:clouds:connectivity}.
        
        This also proves that there exists such a circular path for $a$ if there exists some $x\in \sea(aZb)\neq a$.
        For $\sea(aZb)=a$, we have no split nodes on $aZb$ because $R^+(aZb)$ does not leave $aZb$. This makes $W^P=\emptyset$ resulting in no SCC.

        \item[\labelcref{lem:clouds:separation}]
        By definition, $a \in R^+(aZb)$ and $a \notin R^-(aZb)$ when $aZb$ is bridge-like, resulting in $a \in \sea(aZb)$. Symmetrically, we prove $b \in \cloud(aZb)$.

        \item[\labelcref{lem:clouds:backwards}]
        By strong connectivity, there always exists an $x$-$y$ path.
        If such a path contains $aZb$, then consider its prefix, say $P$, that ends at the first occurrence of $a$.
        Notice that if $\sea(aZb) = \{a\}$, then $y = a$, making $P$ a path from $x$ to $y$ without having $aZb$ as subwalk.
        Otherwise, by \Cref{lem:clouds}~\labelcref{lem:clouds:connectedseacloud}, there is a path from $a$ to every such $y \in \sea(aZb)$ which does not contain $b$, which when appended to $P$ gives us the desired walk with no subwalk $aZb$.

        \item[\labelcref{lem:clouds:forwards}]
        The only way to reach $\sea(aZb)$ from $\cloud(aZb)$ is through $\vapor(aZb)$ (by \Cref{lem:clouds}~\labelcref{lem:clouds:connectivity}) so every path from $\sea(aZb)$ to $\cloud(aZb)$ passes through $\vapor(aZb)$.
        Further, $\vapor(aZb)$ can only be entered from $\sea(aZb)$ through $a$ (by \Cref{lem:clouds}~\labelcref{lem:clouds:connectivity}) and can only be exited to $\cloud(aZb)$ through $b$ (by \Cref{lem:clouds}~\labelcref{lem:clouds:connectivity}), and every such path contains the entire $aZb$ as subwalk (by \Cref{lem:cloud-cases}).
        \qedhere
    \end{itemize}
\end{proof}
\section{Safety in Circular Models}
\label{apx:closed}

As described in \Cref{s:closed}, since trivial walks are naturally safe the characterisations are stated only for non-trivial walks.

\onecircsafe*
\begin{proof}
$(\Rightarrow)$
Let $\vapor(\heart(W))$ not be a path.
Let $aZb := \heart(W)$, $a'$ be a sibling join arc of $a$ and $b'$ be a sibling split arc of $b$.
Then, by \Cref{lem:cloud-cases}, $aZb$ is avertible.
We prove that $aZb$ is not \cs{1} by constructing a circular arc-covering walk $\mathcal{C}$ that does not contain $aZb$.
Let $P$ be an $a'$-$b'$ walk, such that $aZb$ is not a subwalk of $P$, which exists exactly because $aZb$ is avertible (if it does not exist, then $a'$ and $b'$ would make $aZb$ a bridge-like walk).
Let $\mathcal{C'}$ be an arbitrary circular arc-covering walk of $G$. If $\mathcal{C'}$ does not contain $aZb$ the claim holds, otherwise consider $\mathcal{C}$ obtained from $\mathcal{C'}$, by replacing every $aZb$ subwalk with $aZPZb$.
Concluding, since $aZb$ is not \cs{1}, it holds that $W$ is not \cs{1}.

$(\Leftarrow)$
Let $\vapor(\heart(W))$ be a path.
Then $\vapor(\heart(W))$ is not the whole $G$.
Moreover, by \Cref{lem:cloud-cases}, $\heart(W)$ is bridge-like, so there are $x, y \in G$ such that each $x$-$y$ walk has $\heart(W)$ as a subwalk.
Therefore, since every circular arc-covering walk $\mathcal{C}$ in $G$ contains both $x$ and $y$ and hence a subwalk from $x$ to $y$, $\heart(W)$ is a subwalk of $\mathcal{C}$ (and hence \cs{1}).
Further, since $\cal C$ is circular, the whole $W$ is \cs{1}, as $\heart(W)$ can only be traversed by also passing through its univocal extension.
\end{proof}

\kcircsafe*
\begin{proof}
$(\Rightarrow)$
We first prove that $W$ is not \cs{k} if 
$\vapor(\heart(W))$ is not a path or $\river(\heart(W))$ is empty.
\begin{itemize}
    \item If $\vapor(\heart(W))$ is not a path, then using \Cref{th:circularSafe} $W$ is not \cs{1}, or there is a circular arc-covering walk $\mathcal{C}$ not containing $W$.
    Thus, a collection of $k$ copies of $\mathcal{C}$ is also arc-covering not containing $W$, proving that $W$ is not \cs{k}.
    
    \item If $\vapor(\heart(W))$ is a path, and $\river(\heart(W))$ is empty, we can prove that $W$ is not \cs{k} for $k\geq 2$ because for a non-trivial $W$ it holds that $\cloud(\heart(W)) \cup \vapor(\heart(W))$ and $\sea(\heart(W)) \cup \vapor(\heart(W))$ are each strongly connected. Thus, two different walks can separately cover them as $\river(\heart(W))$ is empty.
    
    We only prove that $\cloud(\heart(W)) \cup \vapor(\heart(W))$ is strongly connected as the other follows by symmetry. Since $W$ is non-trivial, $\heart(W)$ starts with a join arc $a$ and ends with a split arc $b$. 
    If $\cloud(\heart(W))=\{b\}$ then clearly it is the join sibling of $a$, making $\cloud(\heart(W)) \cup \vapor(\heart(W))$ a cycle and hence strongly connected. Otherwise, if $\cloud(\heart(W))\neq\{b\}$ then the strong connectivity follows by \Cref{lem:clouds}~\labelcref{lem:clouds:connectedseacloud}.
\end{itemize}

$(\Leftarrow)$
If $\vapor(\heart(W))$ is a path then $\heart(W)$ is bridge-like by \Cref{lem:cloud-cases}.
If $\river(\heart(W))$ is not empty, since the graph is strongly connected we can enter $\river(\heart(W))$ only from $\cloud(\heart(W))$ (by \Cref{lem:clouds}~\labelcref{lem:clouds:connectivity}) %
using some arc. Hence, every walk covering this arc requires a subwalk from $\river(\heart(W))$ to $\cloud(\heart(W))$, which passes through $\sea(\heart(W))$ (by \Cref{lem:clouds}~\labelcref{lem:clouds:connectivity}). Using \Cref{lem:clouds}~\labelcref{lem:clouds:forwards} we get that this subwalk necessarily contains the walk $W$ making it \cs{k}.
\end{proof}

We also use the following property for our efficient algorithm to compute \cs{k} walks. 

\kcircularsubwalks*
\begin{proof}
By symmetry of the statement, we only prove that if $W$ is a 1-circular safe walk of $G$ that is not $k$-circular safe then $XaZ$ is a $k$-circular safe walk of $G$. This also implies that $Q$ is non-trivial otherwise it is implicitly \cs{k}.
Further, being a subwalk of the \cs{1} $W$ makes $Q := XaZ$ also \cs{1}.
Let $\heart(Q) = aZ^*b^*$, which is either $aZ$ or a prefix of $aZ$. 
Since $aZb$ is a non-trivial heart, there exists a sibling split arc $b'$ of $b$.
We claim that for any such sibling $b'$, we have $b' \in \river(aZ^*b^*)$, which is equivalent to $b' \notin R^+(aZ^*b^*)\cup R^-(aZ^*b^*)$. 

Now, $aZ$ is bridge-like (since $aZb$ is bridge-like), so each $\wt{aZ}$-$\wh{aZ}$ walk has $aZ$ as subwalk.
Since $\wh{aZ} = \tail(b')$ it follows that each $a$-$b'$ walk has $aZ$ as subwalk, and hence its prefix $aZ^*b^*$. Thus, $b' \notin R^+(aZ^*b^*)$. Further, we know that $b'\in \sea(aZb)$ as it is surely in $R^+(aZb)$ (by definition) and not in $\vapor(aZb)$ (for bridge-like $aZb$ by \Cref{lem:cloud-cases}). So any path from $b'$ to $b^*\in \vapor(aZb)$ surely enters $\vapor(aZb)$ through $a$ (by \Cref{lem:clouds}~\ref{lem:clouds:connectivity}). Hence it has $aZ^*b^*$ as a subwalk since $\vapor(aZb)$ is a path and $aZ^*b^*$ is a prefix of $aZb$. Thus, $b'\notin R^-(aZ^*b^*)$.

Thus $b' \notin (R^+(aZ^*b^*) \cup R^+(aZ^*b^*))$ implying that $b'\in \river(aZ^*b^*)$, and by \Cref{thm:kcircsafe} that $aZ^*b^*$ is \cs{k}.
Therefore, since $\heart(XaZ) = aZ^*b^*$, we conclude that $XaZ$ is \cs{k}. By symmetry $ZbY$ is also \cs{k}.
\end{proof}

\subsection{Implementation}
\label{apx:closed:implementation}
The characterisation can be directly adapted to get a verification algorithm for evaluating whether a given walk is safe in $1$-circular and $k$-circular models. A trivial walk $W$ is easily identified in $O(|W|)$ time by checking for univocal walks. For a non-trivial walk $W$ the hydrostructure can be computed in linear time (\Cref{thm:hydro_algo}) and the characterisation can be verified in $O(|W|)$ time resulting in an $O(m)$ time verification algorithm for both \cs{1} and \cs{k} walks.

Since every trivial walk is \cs{1} and \cs{k}, they can be enumerated by computing univocal extensions of each arc in $O(mn)$ time. For reporting all non-trivial maximal \cs{1} and \cs{k} walks, we use the two pointer algorithm on a candidate solution.  A candidate solution can be computed by arranging the $m$ arcs in some circular order, and adding the shortest path between adjacent arcs to complete the closed arc-covering walk of size $O(mn)$. To get an arc-covering solution made up of $k$ closed walks, we can have $k$ \emph{identical} copies of this $O(mn)$-size walk. Now, using the two-pointer algorithm (\Cref{thm:prelim}~\labelcref{thm:two-pointer-algirithm}) on one of these identical walks, with the above linear time verification algorithm we can enumerate all non-trivial maximal \cs{1} and \cs{k} walks in $O(m^2n)$ time. 

Alternatively, non-trivial \cs{k} walks can computed using non-trivial \cs{1} walks in $O(mn)$ time. This is because the amount of non-trivial \cs{1} walks is $O(n)$ (\Cref{thm:prelim}~\ref{item:omnitig-properties}). Computing their hydrostructure in $O(m)$ time either verifies them to be \cs{k} or its two subwalks as \cs{k} (using \Cref{lem:multisafe-subwalks}). Hence, using the $O(mn)$ time optimal \cs{1}  algorithm~\cite{DBLP:journals/talg/CairoMART19} we get the following result.
\begin{theorem}
    Given a strongly connected graph with $m$ arcs and $n$ nodes, all the maximal \cs{k} walks can be reported in $O(mn)$ time.
\end{theorem}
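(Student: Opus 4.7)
The plan is to reduce the enumeration of maximal \cs{k} walks to that of maximal \cs{1} walks, exploiting three existing pieces of machinery: the optimal $O(mn)$-time \cs{1} enumeration algorithm of~\cite{DBLP:journals/talg/CairoMART19}, the hydrostructure-based characterisation of \Cref{thm:kcircsafe}, and the structural \Cref{lem:multisafe-subwalks}. The key observation is that every \cs{k} walk is also \cs{1} (any \cs{k}-covering collection, when one of its walks is repeated $k$ times, yields a single circular arc-covering walk containing it), so every maximal \cs{k} walk is a subwalk of some maximal \cs{1} walk. This means the $O(n)$ non-trivial maximal \cs{1} walks guaranteed by \Cref{thm:prelim}~\ref{item:omnitig-properties}, together with the trivial maximal \cs{1} walks, give the complete list of candidates to process.

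First I would run the $O(mn)$-time algorithm of~\cite{DBLP:journals/talg/CairoMART19} to produce all maximal \cs{1} walks. Trivial maximal \cs{1} walks are emitted unchanged, since trivial walks are safe in every circular model. For each of the $O(n)$ non-trivial maximal \cs{1} walks $W = X\heart(W) Y$ with $\heart(W) = aZb$, I compute the hydrostructure of $\heart(W)$ in $O(m)$ time via \Cref{thm:hydro_algo}; since $W$ is \cs{1}, \Cref{th:circularSafe} already guarantees that $\vapor(\heart(W))$ is a path, so by \Cref{thm:kcircsafe} the only remaining test is whether $\river(\heart(W))$ is non-empty, a constant-overhead check once the hydrostructure is built. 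If it is non-empty, $W$ is \cs{k} and I emit $W$; otherwise, by \Cref{lem:multisafe-subwalks} I emit the two subwalks $XaZ$ and $ZbY$, both of which are \cs{k}. Summed over $O(n)$ non-trivial maximal \cs{1} walks, this phase costs $O(mn)$.

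Finally I would verify that after discarding any emitted candidate strictly contained in another emitted candidate (a sweep whose total cost is bounded by the $O(mn)$ total length of all maximal \cs{1} walks, again by \Cref{thm:prelim}~\ref{item:omnitig-properties}), the remaining set equals the collection of maximal \cs{k} walks. Correctness follows because any maximal \cs{k} walk $W'$ is \cs{1} and hence lies inside a maximal \cs{1} walk $W^*$: if $W^*$ is \cs{k} then maximality forces $W' = W^*$; otherwise no proper extension of $XaZ$ or $ZbY$ can be \cs{k} (such an extension would contain $\heart(W^*)$ plus a river-witness, making $W^*$ itself \cs{k} by \Cref{thm:kcircsafe}), so $W' \in \{XaZ, ZbY\}$. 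The main obstacle I anticipate is precisely this maximality bookkeeping, since the same \cs{k} walk could in principle appear as a subwalk of several maximal \cs{1} walks; handling this cleanly amounts to verifying that the deduplication sweep is linear in the total emitted length, which fits within the $O(mn)$ budget.
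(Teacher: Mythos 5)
Your proposal is correct and follows essentially the same route as the paper: run the optimal $O(mn)$-time \cs{1} algorithm, emit trivial walks directly, and for each of the $O(n)$ non-trivial maximal \cs{1} walks build the hydrostructure of its heart in $O(m)$ time, using \Cref{thm:kcircsafe} to test safety and \Cref{lem:multisafe-subwalks} to fall back to the two subwalks $XaZ$ and $ZbY$ when the \river{} is empty. Your additional deduplication/maximality sweep is a reasonable refinement of a point the paper leaves implicit.
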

\section{Safety in Linear Models}
\label{apx:linear}

Our characterisation employs the notion of {\em $s$-$t$ suffix-prefix covered} for the safety of trivial walks. It describes whether an arc of a walk is in its suffix starting at $s$ or in its prefix ending at $t$. This notion thus identifies arcs that cannot be covered without covering the whole walk, rendering it safe. We formally define it using {\em proper $s$-suffix} and {\em $t$-prefix} as follows.

\begin{definition}[Proper $s$/$t$-Suffix/Prefix]
    For a walk $aZb$, the \emph{proper $s$-suffix (or $t$-prefix)} is its maximal suffix starting in $s$ (or prefix ending in $t$) or the empty sequence if $s \notin Z$ (or $t \notin Z$). Recall that $Z$ starts and ends with a node.
\end{definition}

\begin{definition}[$s$-$t$ Suffix-Prefix Covering]
    An arc $e\in aZb$ is \emph{$s$-$t$ suffix-prefix covered} in $aZb$ if $e$ belongs to the $s$-suffix or to the $t$-prefix of $aZb$.
\end{definition}

Furthermore, we use the following property of the trivial avertible walks in our main proof.

\begin{lemma}
    \label{lem:vapor-only-clouds-covering}
    For a trivial avertible walk $aZb$ in a strongly connected graph $G$, the graph $G \setminus \heart(aZb)$ is also strongly connected.
\end{lemma}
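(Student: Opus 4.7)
The plan is to show that any $x, y \in G \setminus \heart(aZb)$ can be connected by a walk avoiding every arc of $\heart(aZb)$, combining the rigid internal structure of the heart with the avertibility of $aZb$. The first step is to establish the following chain property: writing $aZb = w_1 \cdots w_\ell$ with heart $w_j \cdots w_i$ (so $j \leq i$), every internal node of the heart has in-degree and out-degree exactly $1$ in $G$, because the arcs $w_{j+1}, \ldots, w_i$ are not split arcs and $w_j, \ldots, w_{i-1}$ are not join arcs. Iterating this forcing in both directions, I would show that any walk $W'$ in $G$ that uses some heart arc $w_k$ must contain a consecutive subwalk $w_p \cdots w_q$ with $p \leq k \leq q$, where $p = 1$ or $W'$ starts at $\tail{w_p}$, and $q = \ell$ or $W'$ ends at $\head{w_q}$.

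Next I would use avertibility (equivalently, $\vapor(aZb) = G$ by \Cref{lem:cloud-cases}) to prove that both wings are non-empty ($j \geq 2$ and $i \leq \ell - 1$). Since $a \in R^-(aZb)$, there exists an $a$-to-$b$ walk $V$ not containing $aZb$; if $j = 1$, the forward chain from $w_1$ onwards is fully solid, forcing $V = w_1 \cdots w_\ell = aZb$ (as its last arc must be $b = w_\ell$), a contradiction, and $i \leq \ell - 1$ follows symmetrically. Now, for $x, y \in G \setminus \heart(aZb)$ with $x \notin \{\tail{w_2}, \ldots, \tail{w_j}\}$ and $y \notin \{\head{w_i}, \ldots, \head{w_{\ell-1}}\}$, I would take the walk $W'$ from $x$ to $y$ given by avertibility; if $W'$ used any heart arc, the chain property would force either $p > 1$ (giving $x = \tail{w_p}$ in the forbidden left set) or $q < \ell$ (giving $y = \head{w_q}$ in the forbidden right set), contradicting the choice of $x, y$. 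Hence $W'$ uses no heart arc and lies entirely in $G \setminus \heart(aZb)$.

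For the remaining endpoint configurations I would route through the wings. If $x = \tail{w_{k''}}$ for some $k'' \in [2, j]$, I follow the non-heart left-wing arcs $w_{k''} \cdots w_{j-1}$ to $\tail{w_j}$ in $G \setminus \heart(aZb)$, and then take a non-heart out-arc $e'$ of $\tail{w_j}$ (which exists because $\tail{w_j}$ is a split node). The in-degree-$1$ property of the left-wing-closure nodes $\head{w_m}$ for $m \in [1, j-1]$ (whose sole in-arc is $w_m$) lets me conclude that $\head{e'}$ cannot lie in the forbidden left set, so the main case applies to reach $y$. A symmetric construction via a non-heart in-arc of $\head{w_i}$ followed by the right-wing arcs $w_{i+1} \cdots w_{k^*}$ handles $y$ in the forbidden right set, and composing these covers all $x, y \in G \setminus \heart(aZb)$. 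The main obstacle I anticipate is making the chain property precise enough that the case split in the third step is exhaustive, and then verifying that the detour through the wings in the final step genuinely escapes into the generic region; the latter relies on the avertibility assumption to rule out pathological configurations in which all non-heart out-arcs from $\tail{w_j}$ loop back into the left-wing closure.
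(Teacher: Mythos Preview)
Your plan is correct and is substantially more explicit than the paper's own argument. The paper dispatches the lemma in two sentences: after noting that the biunivocal heart can be removed without leaving any remaining arc with a missing endpoint, it invokes avertibility to obtain, for every pair $x,y$ of non-heart nodes, an $x$-$y$ walk not containing $aZb$, and immediately concludes that $G\setminus\heart(aZb)$ is strongly connected. The tacit step is that such a walk already avoids every heart element; your chain property together with the case split on whether $x$ lies in the left-wing closure (and symmetrically for $y$) is exactly what is needed to make that step rigorous.

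Regarding the obstacle you flag at the end: the pathological configuration where every non-heart out-arc $e'$ of $\tail(w_j)$ lands back in the left-wing closure $\{\head(w_1),\ldots,\head(w_{j-1})\}$ is ruled out by strong connectivity alone, not by avertibility. Each node $\head(w_m)$ with $1\le m\le j-1$ has in-degree~$1$ with sole in-arc $w_m$, so $\head(e')=\head(w_m)$ would force $e'=w_m$ and hence $\tail(w_j)=\tail(w_m)$. For $m\ge 2$ this gives $\head(w_{j-1})=\head(w_{m-1})$, and in-degree~$1$ then forces $w_{j-1}=w_{m-1}$, i.e.\ $m=j$, a contradiction. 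For $m=1$ we get $\tail(w_1)=\head(w_{j-1})$, so $\tail(w_1)$ also has in-degree~$1$; then the left-wing nodes form a cycle all of whose vertices have in-degree~$1$ in $G$, hence are unreachable from outside, contradicting that $G$ is strongly connected and not a cycle. So $\head(e')$ is guaranteed to lie outside the left forbidden set, your detour genuinely escapes into the generic region, and the case analysis closes. The paper's proof buys brevity; yours buys a self-contained argument that does not leave this step to the reader.
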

\begin{proof}
    Observe that since $\heart(aZb)$ is from arc to arc and $\heart(aZb)$ is a biunivocal path, removing $\heart(aZb)$ from $G$ does not remove any endpoint of remaining arcs.
    For any two nodes $x,y$ in $G \setminus \heart(aZb)$, since $aZb$ is avertible there is an $x$-$y$ (or $y$-$x$) walk that does not have $aZb$ as a subwalk. Thus, the graph $G\setminus \heart(aZb)$ is strongly connected.
\end{proof}

We now prove our characterisation of \ls{k} walks, by first giving a characterisation that works for trivial walks and non-trivial hearts. For simplicity of presentation, we will separately characterise the safety of the wings of non-trivial walks. 

\linear*
\begin{proof}
    If $W$ is a single arc, it is trivially safe.
    Otherwise, let $W=aZb$ where $a$ and $b$ are arcs. 
    
    \noindent $(\Rightarrow)$
    When the conditions~\labelcref{thm:linear:river,thm:linear:r-,thm:linear:r+,thm:linear:suffix-prefix} are false, we show that $aZb$ is not safe as follows.
    Note that if \labelcref{thm:linear:suffix-prefix} is false, then at least one of $s$ and $t$ are in $aZb$.
    We distinguish the cases where the walk is avertible or bridge-like.
   
    When $aZb$ is avertible, if it is trivial, then by \Cref{lem:vapor-only-clouds-covering} there exists an arc-covering circular walk $\mathcal{W}$ in $G \setminus \heart(aZb)$ because it is a strongly connected component.
    This circular walk can be extended to get a linear $s$-$t$ walk starting from a suffix of $aZb$ and (or) ending at a prefix of $aZb$.
    Since all arcs in $\heart(aZb)$ are $s$-$t$ suffix-prefix covered in $aZb$, this walk covers all the arcs of $\heart(aZb)$ without having $aZb$ as a subwalk.
    Thus, $aZb$ is not \ls{k}.
    On the other hand, if $aZb$ is an non-trivial heart, then $aZb$ is not \cs{1} by \Cref{th:circularSafe}. Hence, there exists a circular arc-covering walk in $G$ that does not have $aZb$ as subwalk.
    Such a walk can be repeated such that it starts in $s$ and ends in $t$, forming an arc-covering $s$-$t$ walk that does not have $aZb$ as subwalk.
    Thus, $aZb$ is not \ls{k}.
    
    For bridge-like $aZb$ we describe a set of $k$ $s$-$t$ walks that together cover $G$, where none has $aZb$ as subwalk.
    Since $s \in R^-(aZb)$ and $t \in R^+(aZb)$, an $s$-$t$ walk can cover the cloud-related SCC and the sea-related SCC (\Cref{lem:clouds}~\labelcref{lem:clouds:connectedseacloud}), necessarily covering $\cloud(aZb)$ and $\sea(aZb)$.
    For a non-trivial heart it then also covers $\vapor(aZb)$.
    For trivial walks, all arcs in $\heart(aZb)$ can be covered by the proper $s$-suffix and (or) $t$-prefix of $aZb$, since they are $s$-$t$ suffix-prefix covered in $aZb$.
    Finally, $\river(aZb)$ can be completely covered by the $k$ walks between $\cloud(aZb)$ and $\sea(aZb)$, because \labelcref{thm:linear:river} is false.
    Hence, $G$ can be covered without going from $\sea(aZb)$ to $\cloud(aZb)$ and, therefore, without having $aZb$ as a subwalk.
    Thus, $aZb$ is not \ls{k}.

    \noindent $(\Leftarrow)$
    If $\river(aZb)$ cannot be covered with $k$ walks, then to cover $\river(aZb)$ at least one walk needs to enter $\river(aZb)$ twice.
    This ensures that it has a subwalk from $\sea(aZb)$ to $\cloud(aZb)$ and hence $aZb$ (by \Cref{lem:clouds}~\labelcref{lem:clouds:forwards}).
    If $s \notin R^-(aZb)$, then the only way to cover $b$ (and $\cloud(aZb)$) is through a walk from $\sea(aZb)$ to $\cloud(aZb)$, necessarily having $aZb$ as subwalk (by \Cref{lem:clouds}~\labelcref{lem:clouds:forwards}). This can be symmetrically proven for $t\notin R^+(aZb)$.
    Finally, if $aZb$ is a trivial walk with an arc in $\heart(aZb)$ that is not $s$-$t$ suffix-prefix covered in $aZb$, then it can only be covered by using $aZb$ as a whole.
\end{proof}

\begin{lemma}
    \label{lem:wings-in-sea-cloud}
    A non-trivial \cs{1} walk $W^lWW^r$ with $W = \heart(W^lWW^r)$ has $W^l$ in $R^+(W)$ and $W^r$ in $R^-(W)$.
\end{lemma}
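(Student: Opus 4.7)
The plan is to prove $W^l \subseteq R^+(W)$, and then obtain $W^r \subseteq R^-(W)$ by the fully symmetric argument. If $W^l$ is empty the inclusion is vacuous, so assume $W^l$ is non-empty and write $W = aZb$ with $a = \wt(W)$ and $b = \wh(W)$. Because $W^l W W^r$ is $1$-circular safe, \Cref{th:circularSafe} gives that $\vapor(W)$ is a path, so by \Cref{lem:cloud-cases} $W$ is bridge-like and $\vapor(W) = Z$ is the open internal path of $W$. In particular $W$ itself is a path, so $\tail(a)$ does not appear in $Z$.

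The first step is to set up a strongly connected subgraph sitting inside $R^+(W)$ in which to build reachability walks. Since $W$ is a non-trivial heart, $b$ is a split arc, so the last split node of $W$ is $\tail(b)$ and the prefix of $W$ ending there is exactly $aZ$. Non-triviality together with \Cref{lem:clouds}~\labelcref{lem:clouds:connectivity} guarantees a $(\vapor,\sea)$ incidence pair in $G$; since $\tail(a) \notin \vapor = Z$, such a pair is impossible when $\sea(W) = \{a\}$, and hence $\sea(W) \supsetneq \{a\}$. By \Cref{lem:clouds}~\labelcref{lem:clouds:connectedseacloud} the sea-related SCC, which equals $\sea(W) \cup aZ$, is then strongly connected. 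This SCC lies entirely inside $R^+(W)$: by definition $\sea(W) \subseteq R^+(W)$, and the remark following \Cref{def:clouds} states that every element of $W$ except $\wh(W)=b$ is automatically in $R^+(W)$, so $aZ \subseteq R^+(W)$ as well.

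The second step is an induction along $W^l$ driven by its R-univocality. Decompose $W^l a = (e_1, v_1, e_2, v_2, \ldots, e_j, v_j, a)$ where the $e_i$ are arcs, the $v_i$ are nodes and $v_j = \tail(a)$; R-univocality forces each internal node $v_i$ to have in-degree exactly one in $G$. The node $\tail(a)$ lies in the sea-related SCC (as an endpoint of the arc $a$), so strong connectivity provides a walk inside the SCC from $a$ to $\tail(a)$. The last arc of this walk must be an in-arc of $v_j$ in $G$, and the only such arc is $e_j$, forcing $e_j$ to lie in the SCC and hence in $R^+(W)$. Its tail $v_{j-1}$ is therefore also in the SCC, and the same forced-in-arc step applied at $v_{j-1}$ puts $e_{j-1}$ in the SCC. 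Iterating from $j$ down to $1$, every arc $e_i$ and every internal node $v_i$ of $W^l$ ends up in $\sea(W) \cup aZ \subseteq R^+(W)$, which is the desired inclusion.

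The step I expect to need the most care is the base case, that $\tail(a) \in R^+(W)$: a naive proof would simply say ``$\tail(a) \in \sea(W)$'', but that would be circular in the context of the lemma, since $\sea(W)$ is defined via $R^+(W)$. The clean fix is the structural observation above, namely that $\sea(W) \neq \{a\}$ follows purely from non-triviality and the guaranteed $(\vapor,\sea)$ incidence pair, without using any property of $W^l$; once this is secured, the sea-related SCC is available for free and the forced-in-arc induction runs without circularity. The inclusion $W^r \subseteq R^-(W)$ then follows by the analogous argument using the cloud-related SCC $\cloud(W) \cup Zb$, using that $a$ being a join arc makes the first join node of $W$ equal to $\head(a)$, and using the univocality of $W^r$ to force unique out-arcs instead of unique in-arcs.
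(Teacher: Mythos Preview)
Your argument is correct and follows the same approach as the paper: establish that $\sea(W)\neq\{a\}$ from non-triviality, invoke \Cref{lem:clouds}~\labelcref{lem:clouds:connectedseacloud} to obtain the sea-related SCC inside $R^+(W)$, and then use R-univocality to pull $W^l$ back into that SCC. The paper compresses your explicit back-induction along $W^l$ into the single sentence ``every R-univocal walk ending in $\wt(W)$ is in $\sea(W)\cup\vapor(W)=R^+(W)$'', and it also asserts $\sea(W)\neq\{\wt(W)\}$ directly from non-triviality without spelling out the $(\vapor,\sea)$-incidence argument you give; so your write-up is simply a more detailed rendering of the same proof.
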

\begin{proof}
    Since $W^lWW^r$ is non-trivial, $W$ is non-trivial and therefore $\sea(W) \neq \{\wt(W)\}$ from which by \Cref{lem:clouds}~\labelcref{lem:clouds:connectedseacloud} follows that $\sea(W)$ with some part of $\vapor(W)$ forms an SCC.
    But then, every R-univocal walk ending in $\wt(W)$ is in $\sea(W) \cup \vapor(W) = R^+(W)$ and $W^l$ is a subwalk of at least one of these walks.
    By symmetry, $W^r$ is in $R^-(W)$.
\end{proof}

\linearntwings*
\begin{proof}
    If the $\river(W)$ has cannot be covered with $k$ walks, then to cover it using only $k$ solution walks at least one walk enters $\river(W)$ twice.
    By \Cref{lem:clouds}~\labelcref{lem:clouds:forwards,lem:clouds:connectivity} this walk makes $W^lWW^r$ safe as its subwalk necessarily goes from $\sea(W)$ to $\cloud(W)$ which completely contain the wings (by \Cref{lem:wings-in-sea-cloud}). 
    
    Now, the wings are also safe if $s\notin W^l$ and $t\notin W^r$, as they are univocal extensions.
    They can also be safe with $s\in W^l$, if we are required to separately cover the prefix of the left wing before $s$,
    and the location of $t$ does not allow the walk to end prematurely before covering the entire $W^lWW^r$.
    Note that if $s\in W^l\cap W$ such a prefix would be covered starting from $s$ itself, negating the above requirement making the wings not safe.
    Further, covering the entire $W^lWW^r$ requires such a walk to have a subwalk from $\sea(W)$ to $\cloud(W)$ without ending before completing the right wing. This can be ensured by $t\notin R^+(W)\cup W^r$, else it can also end in $\vapor(W)$, $\sea(W)$ (as $W$ is non-trivial) or before completely covering $W^r$.
    By symmetry we have the corresponding criteria for $t\in W^r$.
\end{proof}

\subsection{Non-Strongly Connected Graphs}

In contrast to the circular models, linear models do not require the graph to be strongly connected to admit a solution.
Our hydrostructure (and hence the characterisations described earlier) handle only strongly connected graphs, but we can extend them to a graph $G$ which is not strongly connected but admits a solution.
Consider the SCCs of $G$ and observe that single nodes can also be SCCs.
We refer to an arc as {\em inter-SCC} if its endpoints belong to different SCCs of $G$, otherwise we refer to it as \emph{intra-SCC arc}.
Trivially, in the 1-linear model, the graphs have the following structure as a single $s$-$t$ walk must cover the whole graph:

\begin{lemma}
    \label{lem:1-linear-graph-structure}
    Let $G = V \cup E$ be a graph, and let $s,t \in V$. It holds that $G$ admits an $s$-$t$ arc-covering walk if and only if the DAG of SCCs of $G$ is a path, such that between any two SCCs consecutive in this path there is exactly one inter-SCC arc, $s$ is in the (only) source SCC and $t$ in the (only) sink SCC.
\end{lemma}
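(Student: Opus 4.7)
}

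The plan is to prove both directions by exploiting the basic fact that any walk traverses SCCs in topological order and can never re-enter an SCC it has left.

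For the forward direction ($\Rightarrow$), suppose $W$ is an $s$-$t$ arc-covering walk. First I would observe that the sequence of SCCs that $W$ visits forms a simple sequence $S_{i_1}, S_{i_2}, \dots, S_{i_\ell}$ in the DAG of SCCs: once $W$ crosses an inter-SCC arc from an SCC $A$ into a different SCC $B$, it can never return to $A$ (otherwise $A$ and $B$ would be in the same SCC). Next, since $W$ covers all arcs, every SCC $S$ of $G$ must be visited by $W$ (otherwise the intra-SCC arcs of $S$, or the trivially single node $S$ itself if it has no intra-SCC arc, would be uncovered, noting that even isolated nodes must be visited because $G = V \cup E$ and we must reach $s$ and $t$). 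Then, since every inter-SCC arc of $G$ must also be covered, and the SCCs appear in $W$ in a fixed linear order, every inter-SCC arc $e$ with $\tail(e) \in S_{i_j}$ and $\head(e) \in S_{i_{j'}}$ forces $j' = j+1$ (it cannot jump forward across an SCC and go back, nor backward). This shows simultaneously that the DAG of SCCs has no ``shortcut'' arcs (so it is a path $S_{i_1} \to \dots \to S_{i_\ell}$) and that between consecutive SCCs on this path there is exactly one inter-SCC arc (two parallel inter-SCC arcs cannot both be traversed, as $W$ crosses each consecutive SCC boundary exactly once). Finally, $s$ must lie in $S_{i_1}$ (the unique source of the path) and $t$ in $S_{i_\ell}$ (the unique sink), since $W$ starts at $s$ and ends at $t$.

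For the backward direction ($\Leftarrow$), suppose the DAG of SCCs is a path $S_1 \to S_2 \to \dots \to S_k$ with $s \in S_1$, $t \in S_k$, and a unique inter-SCC arc $e_j$ from $S_j$ to $S_{j+1}$ for each $1 \le j < k$. I would construct $W$ by concatenating, in order, an arc-covering $s$-$\tail(e_1)$ walk inside $S_1$, then $e_1$, then an arc-covering $\head(e_1)$-$\tail(e_2)$ walk inside $S_2$, then $e_2$, and so on, ending with an arc-covering $\head(e_{k-1})$-$t$ walk inside $S_k$. The existence of an arc-covering walk between two specified nodes inside a strongly connected subgraph is standard: take any closed walk covering all arcs (build one by concatenating, for every arc $e$, the arc $e$ with a shortest path back to the start node) rooted at the starting node, then append a shortest path to the ending node. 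For single-node SCCs with no intra-SCC arcs, the ``walk'' inside degenerates to the node itself, so the construction still works. The resulting $W$ is an $s$-$t$ walk, and it covers every intra-SCC arc (by construction inside each $S_j$) and every inter-SCC arc (these are exactly $e_1, \dots, e_{k-1}$), hence is arc-covering.

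The main obstacle is the careful bookkeeping in the forward direction to simultaneously rule out shortcut inter-SCC arcs and multiple parallel inter-SCC arcs, and to argue that \emph{every} SCC (including isolated ones with no arcs) must be visited; this rests cleanly on the single observation that a walk's SCC trace is monotone in the DAG topological order.
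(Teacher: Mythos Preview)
Your argument is correct and is exactly the natural one. The paper does not actually prove this lemma; it is introduced with the word ``Trivially'' and left without justification, so your write-up simply fills in what the authors considered self-evident.

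One small wrinkle: your claim that ``even isolated nodes must be visited because $G = V \cup E$'' is not quite right --- an arc-covering walk need not touch a truly isolated node (one with no incident arcs at all). But this is an edge case in the lemma \emph{statement} rather than in your proof: if such a node exists (and is not $s$ or $t$), the forward direction of the lemma itself fails, since the DAG of SCCs then has an extra source/sink component. The paper implicitly excludes this degeneracy, and once every single-node SCC has at least one incident inter-SCC arc (which must then be covered by $W$), your argument goes through cleanly.
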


By processing the SCCs and the inter-SCC arcs of non-strongly connected graphs separately, we can reduce the characterisations and algorithms for the 1-linear model in non-strongly connected graphs to the 1-linear model in strongly connected graphs.
For that we define the \emph{entry points} of an SCC as $s$ or nodes that have an incoming intra-SCC arc, and the \emph{exit points} of an SCC as $t$ or nodes that have an outgoing intra-SCC arc.
Note that each SCC of a graph that admits a solution to the 1-linear model has exactly one entry point and exactly one exit point, by \Cref{lem:1-linear-graph-structure}.
This allows us to decompose the graph into a sequence $D := (S_1, e_1, S_2, \dots, e_{\ell - 1}, S_\ell)$ where each $S_i$ is an SCC and each arc $e_i$ is an inter-SCC arc.
We call this the \emph{sequence decomposition} of the graph.
Observe that each $S_i$ is of one of the following three forms: it can be a single node, a cycle of at least one arc, or a nonempty graph that is not a cycle.
Single nodes are a special case, as they allow to join their predecessor and successor arcs into a single safe walk.
Cycles are a special case as well, as our characterisations on strongly connected graphs are defined only for non-cycles, so we need to handle these separately as well.%
Finally, for a nonempty SCC $S_i$ that is not a cycle, it holds that the \ls{1} walks for $G$, that are completely contained in $S_i$, are also safe for single $s'$-$t'$ arc-covering walks in $S_i$, where $s'$ and $t'$ are the entry and exit points of $S_i$, respectively.
The sequence structure allows us to decompose solution walks into their subwalks through each of the elements in the sequence.
Note that the sequence structure also implies an order in which the arcs in the graph need to be covered.
This allows us to generalise the univocal extension for inter-SCC arcs to completely describe safe walks that contain inter-SCC arcs.
Additionally, these also have all safe walks in circular SCCs as subwalks (as proven in \Cref{thm:notsc-1-linear} below), solving this special case.

We define the \emph{inter-SCC univocal extension} $U'(e)$ of an inter-SCC arc $e$ as a walk $XYZ$ where $Y$ is from arc to arc and $X$ and $Z$ are within single SCCs and three further conditions hold.
First, $Y$ contains $e$ and contains only inter-SCC arcs and single-node SCCs.
Second, $X$ is the longest walk ending in $\wt(Y)$ that is R-univocal in the SCC of $\tail(\wt(Y))$ (ignoring inter-SCC arcs) and that visits the entry point of its SCC at most twice, and if it visits it twice then the first occurrence is $\wt(X)$.
And third, $Z$ is the longest walk starting in $\wh(Y)$ that is univocal in the SCC of $\head(\wh(Y))$ (ignoring inter-SCC arcs) and that visits the exit point of its SCC at most twice, and if it visits it twice then the second occurrence is $\wh(Z)$.
We get the following characterisation:

\begin{theorem}[1-st Safe Walks in General Graphs]
    \label{thm:notsc-1-linear}
    In a non-strongly connected graph $G$ that admits a solution to the single linear model for given $s$ and $t$, a walk is \ls{1} iff:
    \begin{enumerate}[label = (\alph*), nosep]
        \item it is a subwalk of the inter-SCC univocal extension of an inter-SCC arc, or \label{thm:notsc-1-linear:inter}
        \item it is 1-s't' safe in an SSC $S$ of $G$, where $s'$ is the entry point of $S$, and $t'$ is the exit point of $S$. \label{thm:notsc-1-linear:intra}
    \end{enumerate}
\end{theorem}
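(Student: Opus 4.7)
The plan is to first invoke \Cref{lem:1-linear-graph-structure} to obtain the sequence decomposition $(S_1, e_1, S_2, \ldots, e_{\ell-1}, S_\ell)$ and to observe that the set of $s$-$t$ arc-covering walks of $G$ is in bijection with the tuples $(W_1, \ldots, W_\ell)$, where each $W_i$ is an $s_i'$-$t_i'$ arc-covering walk of $S_i$ (with $s_i' = \head(e_{i-1})$ for $i \geq 2$, $s_1' = s$, and dually for $t_i'$); the solution walk is the concatenation $W_1 e_1 W_2 \cdots e_{\ell-1} W_\ell$. The key is that these choices are independent: any combination of local walks gives a valid solution. This lets me reduce the global safety question to local questions on each $S_i$ and on the forced behavior across the inter-SCC arcs. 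I would then split on whether $W$ uses any inter-SCC arc.

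For case \labelcref{thm:notsc-1-linear:intra}, suppose $W$ lies entirely in some $S_i$. Since the choices of $W_j$ for $j \neq i$ cannot affect whether $W$ is a subwalk of the concatenation, $W$ is 1-st safe in $G$ iff $W$ is a subwalk of every $s_i'$-$t_i'$ arc-covering walk of $S_i$, which is exactly 1-$s_i'$-$t_i'$ safety in the (strongly connected) $S_i$. If $S_i$ is a single node or a cycle these are degenerate cases which can be treated directly; otherwise the earlier characterisations (\Cref{thm:linear,thm:linear-nt-wings}) apply to $S_i$.

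For case \labelcref{thm:notsc-1-linear:inter}, let $e_a$ be the first and $e_b$ be the last inter-SCC arc of $W$. For each intermediate SCC $S_k$ with $a < k \leq b$, the subwalk of $W$ inside $S_k$ (between $e_{k-1}$ and $e_k$) is forced to coincide in every solution iff $S_k$ is a single node; otherwise any two distinct $s_k'$-$t_k'$ arc-covering walks of $S_k$ yield different intermediate walks and falsify safety of $W$. This exactly forces the middle block $Y$ to consist of inter-SCC arcs and single-node SCCs, matching the definition of $U'(e_a)$. The portion of $W$ before $e_a$ lies in $S_a$ and must be a forced suffix of $W_a$ ending at $\wt(e_a)$; such forced suffixes are precisely the R-univocal walks in $S_a$ (ignoring inter-SCC arcs, which are determined already). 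Symmetrically the portion of $W$ after $e_b$ is a forced univocal walk in $S_b$. Assembling these yields $W$ as a subwalk of $U'(e_a)$.

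The main obstacle will be proving that the entry/exit-point clause in the definition of $U'(e)$ — that $X$ visits the entry point of its SCC at most twice, and if twice then $\wt(X)$ is the first occurrence — exactly captures which R-univocal extensions are actually forced (and dually for $Z$). Necessity: if the proposed $X$ violates the clause, I would construct an $s_a'$-$t_a'$ arc-covering walk of $S_a$ whose suffix ending at $\wt(e_a)$ fails to contain $X$, by rerouting the walk at the extraneous visit to $s_a'$ using the fact that $s_a'$ must have more than one intra-SCC option there (otherwise no ambiguity exists at all). Sufficiency: under the clause, every $s_a'$-$t_a'$ arc-covering walk of $S_a$ has $X$ as a suffix ending at $\wt(e_a)$, since R-univocality forces arcs step-by-step backwards from $\wt(e_a)$ and the entry-point clause aligns the stopping rule with the requirement that the walk actually \emph{starts} at $s_a'$. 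Symmetric reasoning handles $Z$, completing both directions.
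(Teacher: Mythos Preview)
Your overall structure---decompose $G$ via \Cref{lem:1-linear-graph-structure}, use the independence of the local factors $W_i$ to reduce intra-SCC safety to 1-$s_i'$-$t_i'$ safety, and argue the forced shape of a safe walk crossing inter-SCC arcs---matches the paper's approach closely. The handling of intermediate SCCs (forced to be single nodes) is correct.

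There is, however, a genuine gap exactly where you flag the main obstacle. Your necessity argument for the entry-point clause asserts that ``$s_a'$ must have more than one intra-SCC option there (otherwise no ambiguity exists at all).'' This is false: take $S_a$ a cycle. Then $s_a'$ has exactly one intra-SCC incoming arc, yet there \emph{is} ambiguity, because arc-covering $s_a'$-$t_a'$ walks can wind around the cycle any number $\geq 1$ of times. The freedom you need to exploit is not an extra intra-SCC arc at $s_a'$ but the fact that $s_a'$ is the \emph{entry} point: the local walk $W_a$ \emph{begins} there, so any R-univocal backward extension that would require a third occurrence of $s_a'$ in the suffix can be defeated by the minimal walk that visits $s_a'$ only twice. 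The paper's argument (stated symmetrically for $Z$) makes this explicit: if the extension ends at a genuine join node, two incoming choices kill safety; if instead the R-univocal extension would continue past a second visit to the entry point while that node is \emph{not} a join in $S_a$, then every node of $S_a$ has a unique intra-SCC predecessor and $S_a$ is a cycle, and one exhibits two solution walks differing by one extra lap.

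Your sufficiency sketch (``the entry-point clause aligns the stopping rule with the requirement that the walk actually starts at $s_a'$'') also hides the key step. You need a concrete reason why, at the \emph{first} visit to $s_a'$ along the forced extension, the walk cannot already be at its start. The paper's argument (again for $Z$) is that the unique intra-SCC arc leaving the exit point has not yet been covered at the first visit, so the walk must continue; dually, the unique intra-SCC arc entering $s_a'$ forces the first occurrence of $s_a'$ not to be the start of $W_a$. Finally, your dismissal of cycle SCCs in case \labelcref{thm:notsc-1-linear:intra} as ``degenerate cases which can be treated directly'' is too quick: the paper in fact routes safe walks inside cycle SCCs through case \labelcref{thm:notsc-1-linear:inter}, showing they are subwalks of the relevant inter-SCC univocal extension.
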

\begin{proof}
    We start by handling case \ref{thm:notsc-1-linear:intra}.
    If $W$ is \ls{1} in an SCC of $G$ that is not a cycle where $s$ is the entry point of $S$ and $t$ is the exit point of $S$, then $W$ is part of each solution walk in $G$ and therefore safe.
    On the other hand, if $W$ is not \ls{1} in its SCC, then it can be avoided by a solution walk in $G$, and is therefore unsafe.
    This proves the theorem for walks that are in an SCC that is not a cycle, so we exclude those walks for the rest of the proof.
    
    To prove that $W$ is safe if it is a subwalk of the inter-SCC univocal extension $U'(e) = XYZ$ of an inter-SCC arc $e$, we prove that $XYZ$ is safe.
    By definition $Y$ is biunivocal and therefore safe.
    We prove the claim that after traversing $Y$ each solution walk needs to traverse $Z$, from which the safety of $XYZ$ follows by symmetry.
    By definition, $Z$ is in a single SCC $S$ and $\wt(Z)$ is its unique entry point.
    If $Z$ does not contain the exit point of $S$, then the claim holds.
    On the other hand, if $Z$ does contain the exit point of $S$, then it can ignore it the first time: by the strong connectivity of $S$ the exit point has an outgoing arc within $S$, and this arc cannot be covered before reaching the exit point for the first time, but it must be covered before exiting $S$.
    If $Z$ contains the exit point a second time, then by definition $Z$ ends there, so the exit point cannot end $Z$ early.
    Thus, after traversing $XYZ$ each solution walk needs to traverse~$Z$, which proves the safety of $YZ$, and by symmetry the safety of the entire $XYZ$.
    
    Now it remains to prove that there are no other safe walks.
    We first consider walks that contain an inter-SCC arc, and then all remaining walks.
    
    If $W$ contains an inter-SCC arc $e$, but is not a subwalk of its inter-SCC univocal extension $O(e) = XYZ$, then without loss of generality we can assume that a proper prefix of $W$ has $\wh(Y)Z$ as subwalk (the case where a proper suffix of $W$ has $X\wt(Y)$ as subwalk is symmetric).
    Let $S$ be the SCC of $Z$.
    Then $W$ is unsafe if $Z$ ends in a node $v$ that has at least two outgoing arcs in $S$: since $S$ is strongly connected, when a solution walk first reaches $v$ it can continue with either of the outgoing arcs in $S$ and then cover the rest of $S$.
    On the other hand, if $Z$ ends in the exit point of $S$ that is not a split node in $S$, then, by definition, $Z$ contains it twice.
    But then, all nodes in $S$ have exactly one outgoing intra-SCC arc, and therefore it is a cycle.
    So between the two occurrences of the exit point in $Z$, $S$ is completely covered, proving that there is a solution walk that exits $S$ after traversing $\wh(Y)Z$, in addition to a solution walk that traverses $S$ one more time after traversing $\wh(Y)Z$.
    Therefore, $W$ is not safe.
    
    If $W$ contains no inter-SCC arc, is not a subwalk of any inter-SCC univocal extension and is not fully contained in an SCC different from a cycle, then it is contained in a single SCC $S$ that is a cycle.
    Let $U'(e) = XYZ$ be the inter-SCC univocal extension of the inter-SCC arc $e$ that enters $S$.
    By definition, $Z$ is inside $S$, and since $W$ is not a subwalk of this inter-SCC univocal extension, it is not a subwalk of $Z$.
    But $Z$ is the maximal safe subwalk in $S$, since there is a solution walk that traverses $\wh(Y)Z$ and then exits $S$, as proven above.
    Therefore, $W$ is not safe.%
\end{proof}

In the $\infty$-linear model, there exists a simple reduction from non-strongly connected graphs to strongly connected graphs.

\begin{theorem}[$\infty$-st Safe Walks in General Graphs]
    \label{thm:notsc-infty-linear}
    A walk is \ls{\infty} in a non-strongly connected graph $G$ that admits a solution to the infinite linear model for given nodes $s$ and $t$ iff the walk is \ls{\infty} in $G \cup \{(t, s)\}$, which is strongly connected.
\end{theorem}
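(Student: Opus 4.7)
The plan is to go in both directions, using the simple observation that a walk $W$ in $G$ never contains the added arc $(t,s)$, so that subwalk relations are preserved under the natural ``cut at $(t,s)$'' and ``glue at $(t,s)$'' operations that relate arc-covering walk collections in $G$ and in $G' := G \cup \{(t,s)\}$.

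First I would verify that $G'$ is strongly connected. Since $G$ admits a solution to the $\infty$-linear model, every arc of $G$ lies on some $s$-$t$ walk, and hence every node $v$ is both reachable from $s$ and reaches $t$ in $G$. For any $u,v \in V$, we then have an $u$-$t$ walk in $G$, the arc $(t,s)$, and an $s$-$v$ walk in $G$, whose concatenation is a $u$-$v$ walk in $G'$. Thus $G'$ is strongly connected, so the $\infty$-linear safety notion on $G'$ is the one already characterised earlier in the paper.

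Next I would prove the biconditional by contrapositive in each direction. For $(\Rightarrow)$, suppose $W$ is not \ls{\infty} in $G'$, witnessed by an arc-covering collection $\mathcal{C}'$ of $s$-$t$ walks in $G'$, none of which contains $W$ as a subwalk. Split each walk of $\mathcal{C}'$ at every occurrence of $(t,s)$, obtaining an arc-covering collection $\mathcal{C}$ of $s$-$t$ walks in $G$ (the arc $(t,s)$ is removed, but it need not be covered in $G$). Because $W$ is a walk in $G$, it does not use the arc $(t,s)$, so any occurrence of $W$ inside a walk of $\mathcal{C}'$ must lie entirely within one of the pieces produced by the split. Hence no walk of $\mathcal{C}$ contains $W$, so $W$ is not \ls{\infty} in $G$. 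For $(\Leftarrow)$, suppose $W$ is not \ls{\infty} in $G$, witnessed by an arc-covering collection $\mathcal{W}_1, \dots, \mathcal{W}_k$ of $s$-$t$ walks in $G$ none of which contains $W$. To also cover the new arc $(t,s)$, form the collection $\{\mathcal{W}_1 (t,s) \mathcal{W}_2,\; \mathcal{W}_3, \dots, \mathcal{W}_k\}$ (or, if $k=1$, $\{\mathcal{W}_1(t,s)\mathcal{W}_1\}$), which is an arc-covering collection of $s$-$t$ walks in $G'$. Again, since $W$ does not use $(t,s)$, any occurrence of $W$ as a subwalk of $\mathcal{W}_1(t,s)\mathcal{W}_2$ must lie entirely in $\mathcal{W}_1$ or entirely in $\mathcal{W}_2$, contradicting the choice of $\mathcal{W}_i$. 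Hence $W$ is not \ls{\infty} in $G'$.

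The only delicate point, and the one I would double-check, is the ``subwalks cannot straddle the cut'' argument: it relies precisely on $W$ being a walk in $G$ and hence not using the arc $(t,s)$, together with the fact that our walks start and end in arcs (so cutting at $(t,s)$ yields walks whose starting and ending arcs still belong to $G$). No other subtlety arises, and no hydrostructure machinery is needed for this reduction.
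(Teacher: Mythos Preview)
Your proposal is correct and follows essentially the same approach as the paper: transform solution collections between $G$ and $G' = G \cup \{(t,s)\}$ by splitting at $(t,s)$ in one direction and concatenating through $(t,s)$ in the other, then observe that since $W$ lives in $G$ it cannot use $(t,s)$, so the subwalk relation is preserved. The paper's own proof is terser (it just states the two constructions and asserts the conclusion), whereas you spell out the contrapositive in each direction and make explicit the ``cannot straddle $(t,s)$'' point that the paper leaves implicit; both arguments are the same in substance.
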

\begin{proof}
    Since $G$ can be covered by a set of $s$-$t$ walks, it holds that $G \cup \{(t, s)\}$ is strongly connected.
    For each set of solution walks in $G$, a set of solution walks in $G \cup \{(t, s)\}$ can be constructed by repeating one of the walks through $(t, s)$.
    Similarly, for each set of solution walks in $G \cup \{(t, s)\}$, a set of solution walks in $G$ can be constructed by splitting all walks that contain $(t, s)$.
    Therefore, a walk in $G$ is \ls{\infty} in $G$ if and only if it is \ls{\infty} in $G \cup \{(t, s)\}$.
\end{proof}

Using \Cref{thm:notsc-1-linear}, the \ls{1} algorithm for strongly connected graphs (see \Cref{thm:improved-linear-runtimes}) can be extended trivially to get an algorithm with the same time bounds for non-strongly connected graphs.
The algorithm solves all the \ls{1} subproblems in SCCs and adds the inter-SCC univocal extensions to the solution.
Since each SCC intersects with at most two inter-SCC univocal extensions of which one starts at its entry point and one ends at its exit point, the safe walks within each SCC can be checked for being subwalks of the inter-SCC univocal extensions by trivial alignment.
The total length of the inter-SCC univocal extensions that intersect with that SCC (limited to the part within the SCC) is $O(n_i)$ since the SCC is not a cycle and the intersecting parts are univocal or R-univocal.
By the same argument, the intersecting parts contain each arc at most once.
Therefore, the unique starting point of an alignment can be found in constant time after $O(n_i)$ time preprocessing (by building an array that maps from arc ids in the SCC to their positions in the intersecting parts of the inter-SCC univocal extensions).
Afterwards the alignment is checked in time linear in the length of the safe walk in the SCC, resulting in a total runtime of $O(m_in_i + o_i)$ where the SCC has $n_i$ nodes, $m_i$ arcs and the total length of its safe walks is $o_i$.
The overhead for finding the SCCs and extending the inter-SCC arcs is $O(m+n)$ time, and solving the subproblems including filtering subwalks takes an additional $O((m_1n_1 + o_1) +(m_2n_2 + o_2) + \dots)$ time which is subset of $O(mn + o)$ (where $o$ is the size of the output) since the $m_i$, $n_i$ and $o_i$ add up to $m$, $n$ and $o$ respectively.
For \ls{\infty}ty by \Cref{thm:notsc-infty-linear} we can use the $O(mn + o)$ algorithm for strongly connected graphs in addition to linear preprocessing and postprocessing.
Thus, we get the following theorem.

\begin{theorem}
    In a graph with $n$ nodes and $m$ arcs that admits a solution to the single (or infinite) linear model, all maximal \ls{1} (or \ls{\infty}) walks can be enumerated in $O(mn + o)$ time where $o$ is the size of the output.
\end{theorem}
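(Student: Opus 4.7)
The plan is to handle the two cases separately, leveraging the two structural theorems established just before the target statement, namely Theorem~\ref{thm:notsc-1-linear} for the single linear model and Theorem~\ref{thm:notsc-infty-linear} for the infinite linear model, combined with the strongly connected enumeration result of Theorem~\ref{thm:improved-linear-runtimes}.

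For the \emph{infinite linear} case, I would simply apply the reduction of Theorem~\ref{thm:notsc-infty-linear}: add the arc $(t,s)$ to obtain a strongly connected graph $G' = G \cup \{(t,s)\}$ with at most $m+1$ arcs and $n$ nodes, run the enumeration algorithm of Theorem~\ref{thm:improved-linear-runtimes} for the $\infty$-st case on $G'$ (where $f(m,n) = O(m)$ since $k \geq m$ suffices for $k = \infty$), and then postprocess the output by cutting each reported maximal safe walk at every occurrence of $(t,s)$ and taking the maximal pieces. Since $(t,s) \notin G$, the safety equivalence of Theorem~\ref{thm:notsc-infty-linear} immediately gives correctness, and the cutting step runs in time linear in the output. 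The total cost is $O(mn + n \cdot m + o) = O(mn + o)$.

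For the \emph{single linear} case, I would follow Theorem~\ref{thm:notsc-1-linear} directly. First, compute the SCCs of $G$ and the sequence decomposition $(S_1, e_1, S_2, \dots, e_{\ell-1}, S_\ell)$ in $O(m+n)$ time, and identify the entry and exit points of each $S_i$. Next, produce the safe walks of type~\labelcref{thm:notsc-1-linear:inter} by computing the inter-SCC univocal extension $U'(e_i)$ of each inter-SCC arc $e_i$: this is done by greedily extending $e_i$ leftwards through single-node SCCs and inter-SCC arcs until entering a non-trivial SCC, then walking R-univocally (ignoring inter-SCC arcs) until we either reach a join or revisit the entry point a second time, and symmetrically on the right. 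Each arc of $G$ is visited a constant number of times across all such extensions, giving $O(m+n)$ time plus the output. Then, for each SCC $S_i$ that is not a cycle, with entry point $s'$ and exit point $t'$, invoke Theorem~\ref{thm:improved-linear-runtimes} on $S_i$ to enumerate its maximal 1-$s'$-$t'$ safe walks in $O(m_i n_i + o_i)$ time.

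The main obstacle is the final \emph{filtering} step: a walk reported as \ls{1} within some $S_i$ may already be a subwalk of an inter-SCC univocal extension intersecting $S_i$, and we must output only the \emph{maximal} safe walks. As observed in the proof of Theorem~\ref{thm:notsc-1-linear}, at most two inter-SCC extensions intersect $S_i$ (one ending at its entry point, one starting at its exit point), and their intersections with $S_i$ are R-univocal / univocal paths of total length $O(n_i)$ that are arc-disjoint with themselves. Hence we can build, in $O(n_i)$ time, an array mapping each arc of $S_i$ to its position (if any) inside these intersections; an inner walk of length $\ell$ can then be tested for being a subwalk of an extension in $O(\ell)$ time by locating its first arc and verifying the remaining arcs in order. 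Summing the per-SCC costs yields $\sum_i O(m_i n_i + o_i) = O(mn + o)$, and adding the $O(m+n)$ preprocessing and the $O(|\text{extensions}|) \subseteq O(mn)$ cost of emitting type-\labelcref{thm:notsc-1-linear:inter} walks gives the claimed $O(mn+o)$ bound, completing both cases.
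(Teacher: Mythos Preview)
Your proposal is correct and follows essentially the same approach as the paper. For the $\infty$-st case you use the $(t,s)$-arc reduction of Theorem~\ref{thm:notsc-infty-linear} plus linear postprocessing, and for the $1$-st case you decompose into SCCs via Theorem~\ref{thm:notsc-1-linear}, solve each non-cycle SCC with Theorem~\ref{thm:improved-linear-runtimes}, compute the inter-SCC univocal extensions, and filter intra-SCC walks against them using the same arc-to-position array idea the paper uses; the cost accounting $\sum_i O(m_i n_i + o_i) \subseteq O(mn+o)$ is also identical.
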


\begin{remark}
For $k$ $s$-$t$ arc-covering walks where $1<k<\infty$, our characterisation and algorithms do not simply extend to non-strongly connected graphs. Note that a reduction similar to \ls{1} does not suffice as the graph can be any DAG of SCCs, which is not necessarily a path. 

Hence the SCCs may have up to $k$ {\em different} entry and $k$ {\em different} exit points. This is essentially an entirely different model for linear-safe walks from multiple sources and multiple sinks, which would require further case analysis and generalisations to 
our current characterisations of linear models. 
This becomes particularly complex because of the possible existence of such sources or sinks in different components of the hydrostructure, or wings of the walk. Even though we deem this to be feasible, it may require an entirely new approach and thus it is beyond the scope of this work.
\end{remark}

\subsection{Implementation}
\label{apx:linear:implementation}

Note that the \ls{k} problems are equivalent for all $k \geq m$.
The characterisation can be directly adapted to get a verification algorithm for evaluating whether a given walk is \ls{k}. A trivial walk $W$ is easily identified in $O(|W|)$ time by checking for univocal walks. The hydrostructure can be computed in linear time (\Cref{thm:hydro_algo}) and thereafter the $s$ and $t$ membership queries are answered in constant time. The $s$-$t$ suffix-prefix criteria can be verified in $O(|W|)$ time. Thus, the verification algorithm for \ls{k}ty requires $O(m+f(m,n))$ time, where time required to compute the minimum walk-cover is $f(m,n)$.

For enumerating all the maximal \ls{k} walks we require the set of all \cs{1} walks, which are computable in $O(mn)$ time~\cite{DBLP:journals/talg/CairoMART19}.
We compute the hydrostructure for each non-trivial $\heart(W)$ as well as each trivial $W$, and check the conditions from \Cref{thm:linear,thm:linear-nt-wings}.
If a walk is \ls{k}, we report it.
Otherwise, we use the two-pointer algorithm to report its maximal safe substrings as in \Cref{thm:prelim}~\labelcref{thm:two-pointer-algirithm}.

For each $W$, this algorithm performs safety checks linear in the length of $W$.
So since the total length of all maximal \cs{1} walks is $O(mn)$ by \Cref{thm:prelim}~\labelcref{item:omnitig-properties}, at most $O(mn)$ walks need to be checked for safety.
A safety check includes building the hydrostructure in $O(m)$ time, and checking the minimum walk-cover of the \river{}.
The latter check can be performed in $O(m)$ for $k = 1$ since the check simplifies to checking if the \river{} is a path, and for $k \geq m$ the check is always false.
Otherwise, we use the $O(mn)$ flow algorithm mentioned in \Cref{sec:preliminaries}.

Additionally, the output size of this algorithm is limited.
A trivial avertible walk $W$ has maximal \ls{k} subwalks of total length $O(|W|)$.
All other walks have length at most $O(n)$ each by \Cref{thm:prelim}~\labelcref{item:omnitig-properties}, and therefore all maximal \ls{k} subwalks of these have a total length of $O(n^2)$.
Since the total length of trivial avertible walks is $O(mn)$, and there are $O(n)$ other walks (by \Cref{thm:max-omnitig-bounds}), this results in an $O(mn + n^3) = O(n^3) \leq O(m^2n)$ bound for the total length of maximal \ls{k} walks.
Resulting, we get the following theorem.

\begin{theorem}
    The problems \ls{k} can be solved in $O(m^2n + mnf(m,n))$ time, or in $O(m^2n)$ for $k = 1$ or $k \geq m$.
\end{theorem}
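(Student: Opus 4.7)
The plan is to combine the characterisations of \Cref{thm:linear,thm:linear-nt-wings} with the two-pointer framework of \Cref{thm:prelim}~\labelcref{thm:two-pointer-algirithm}, using the fact (already noted after \Cref{thm:linear}) that every \ls{k} walk is a subwalk of a maximal \cs{1} walk. So the first step is to compute the list of all maximal \cs{1} walks via the $O(mn)$-time algorithm of~\cite{DBLP:journals/talg/CairoMART19}, whose total length is $O(mn)$ by \Cref{thm:prelim}~\labelcref{item:omnitig-properties}. This reduces enumeration of maximal \ls{k} walks to running a safety-verification subroutine along each maximal \cs{1} walk and taking maximal safe substrings, noting that safety is downward-closed under taking subwalks so the two-pointer algorithm applies.

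Next I would describe the verification subroutine $A$ for a single candidate walk $W$. Decide in $O(|W|)$ time whether $W$ is trivial (by scanning for join/split arcs), and in that case locate its trivial heart; otherwise locate the non-trivial heart and its wings $W^l, W^r$. Build the hydrostructure of $W$ (if trivial) or of $\heart(W)$ (if non-trivial) in $O(m)$ time by \Cref{thm:hydro_algo}. Given the hydrostructure, conditions~\labelcref{thm:linear:r-} and~\labelcref{thm:linear:r+} reduce to $O(1)$ membership tests once $s,t$ have been located; condition~\labelcref{thm:linear:suffix-prefix} is a single linear scan of $W$ that identifies the proper $s$-suffix and $t$-prefix and checks whether any arc of the heart escapes them; and condition~\labelcref{thm:linear:river} reduces to deciding whether the minimum walk-cover of $\river$ exceeds $k$, which by assumption takes $f(m,n)$ time. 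For non-trivial hearts the wing conditions of \Cref{thm:linear-nt-wings}~\labelcref{thm:linear-nt-wings:st} are further $O(m)$ set-membership and reachability checks using the already-computed hydrostructure. Hence $A$ runs in $O(m+f(m,n))$ time.

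Now I would put the pieces together. Applying \Cref{thm:prelim}~\labelcref{thm:two-pointer-algirithm} with $A$ on a maximal \cs{1} walk $W$ costs $O(|W|\cdot(m+f(m,n)))$. Summed over all maximal \cs{1} walks, the total length bound $O(mn)$ yields an overall runtime of
\[
    O\!\left(mn\cdot(m+f(m,n))\right) \;=\; O(m^2n + mnf(m,n)),
\]
plus the output cost, which is $O(n^3)\subseteq O(m^2n)$ as argued in the paragraph preceding the theorem. For the two special regimes: if $k=1$, then the $\river$ walk-cover test collapses to checking whether $\river$ is a single path (traverse its arcs and verify in/out-degrees and sink/source counts) in $O(m)$ time; if $k\geq m$ the condition is vacuously satisfied, so again $f(m,n)=O(m)$. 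In both cases the bound becomes $O(m^2n)$.

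The main obstacle I expect is a clean, uniform implementation of the verification subroutine $A$ that handles all four cases of \Cref{thm:linear} together with the wing conditions of \Cref{thm:linear-nt-wings}, in particular the $s$-$t$ suffix-prefix condition~\labelcref{thm:linear:suffix-prefix} for trivial walks and the subtle case analysis on whether $s\in W^l$ and $t\in W^r$ for non-trivial walks. Once $A$ is shown to run in $O(m+f(m,n))$ time per call and to correctly realise the characterisation for every subwalk encountered by the two pointers, the runtime bound follows immediately from the length bound on \cs{1} walks.
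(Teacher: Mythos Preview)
Your proposal is correct and follows essentially the same approach as the paper: compute all maximal \cs{1} walks in $O(mn)$ time, run the two-pointer algorithm on each with an $O(m+f(m,n))$ verification subroutine based on the hydrostructure and \Cref{thm:linear,thm:linear-nt-wings}, and use the $O(mn)$ total-length bound from \Cref{thm:prelim}~\labelcref{item:omnitig-properties}. One small slip: for $k\geq m$ the walk-cover condition~\labelcref{thm:linear:river} is always \emph{false} (any subgraph with at most $m$ arcs is trivially coverable by $k\geq m$ walks), not ``vacuously satisfied''; the conclusion that the check is $O(1)$ remains correct.
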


\subsection{Handling Trivial Avertible Walks in the Implementation}
\label{apx:linear:avertible-linear-implementation}

We prove that we can enumerate the maximal safe subwalks of maximal trivial avertible \cs{1} walks in the linear models in $O(mn)$ time plus $O(n)$ executions of the two-pointer algorithm using the incremental hydrostructure, which matches the time bounds for computing \ls{k} from non-trivial \cs{1} walks for the respective model.
We first restrict our description to $E$-covering linear models.
Consider that for a trivial avertible walk $W^lWW^r$ where $W^l$ is its left wing, $W^r$ is its right wing and $W$ is its heart by \Cref{thm:bridge-like-check}~\labelcref{thm:bridge-like-check:subwalks} only subwalks of $W^lW$ or $WW^r$ can be bridge-like.
Therefore, the \ls{k}ty of other subwalks of $W^lWW^r$ can be decided without the hydrostructure by checking the location of $s$ and $t$ within the subwalk.
This can be done for all such subwalks in amortised $O(|W^lWW^r|)$ time using the two-pointer algorithm, so in total $O(mn)$ time for all walks.

For those subwalks that are subwalks of $W^lW$ or $WW^r$, we describe only $W^lW$ since the argument for $WW^r$ is symmetric.
Consider that $W^lW$ is trivial, so by the conditions on the heart in the linear models (see \Cref{thm:linear}~\labelcref{thm:linear:suffix-prefix}), it may have non-\ls{k} subwalks only if $W^lW$ contains $s$.
It does not matter if it contains $t$, since any subwalk of $W^lW$ has its last arc in its heart, so this last arc cannot be in a $t$-prefix.
So we only need to build the hydrostructure on $W^lW$ if it contains $s$, as it is safe otherwise.
These $W^lW$ can be enumerated in total $O(mn)$ time.

Assuming no multiarcs, all the $W^lW$ that contain $s$ form a tree because they are $R$-univocal. Since they all are different, with distinct leaf arcs they also have distinct head nodes. Thus, there exist total $O(n)$ such walks, for which we use the two-pointer algorithm with the incremental hydrostructure to find its maximal safe subwalks. Note that if $G$ has multiarcs, $W^l$ is clearly multiarc-free, and each parallel arc would form a distinct $W$ (containing a single arc), where each such walk is equivalent in terms of safety. Thus, we use the two-pointer algorithm for all of them together, executing it only $1$ time.
As a result we get the following theorem.

\begin{theorem}
    \label{thm:linear-avertible-algo}
    The maximal \ls{k} walks of a strongly connected graph $G$ that are subwalks of trivial avertible \cs{1} walks can be enumerated in $O(mn)$ time plus $O(n)$ executions of the two-pointer algorithm using the incremental hydrostructure variant for the specific model.
\end{theorem}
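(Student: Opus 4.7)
The plan is to split each maximal trivial avertible \cs{1} walk $W^l W W^r$ (with $W = \heart(W^l W W^r)$) into the subwalks that might still be bridge-like and those that cannot, and to attack each regime within the target budget. By the structural properties of a trivial avertible walk, only subwalks contained in $W^l W$ or in $W W^r$ can be bridge-like; every other subwalk inherits an empty \river{}, \sea{} and \cloud{}, so Conditions \labelcref{thm:linear:river,thm:linear:r-,thm:linear:r+} of \Cref{thm:linear} become trivially false. For those ``middle'' subwalks the only possible source of safety is Condition \labelcref{thm:linear:suffix-prefix}, which depends purely on the positions of $s$ and $t$. A plain two-pointer scan over $W^l W W^r$ decides all of these subwalks in amortised $O(|W^l W W^r|)$ time, and summing over all maximal trivial avertible \cs{1} walks yields $O(mn)$ in total by \Cref{thm:prelim}~\labelcref{item:omnitig-properties}.

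For the remaining subwalks I focus on those contained in $W^l W$; the case $W W^r$ is symmetric. Such a subwalk is trivial with its heart ending inside $W$, so its last arc is an arc of $W$ and therefore cannot lie inside any $t$-prefix of the subwalk. Consequently \Cref{thm:linear}~\labelcref{thm:linear:suffix-prefix} can only fail when $s$ actually occurs in $W^l W$, so the hydrostructure need only be built on those $W^l W$ which contain $s$. On each such walk I run the two-pointer algorithm from \Cref{thm:prelim}~\labelcref{thm:two-pointer-algirithm} using the incremental hydrostructure variant of the underlying model, which processes the walk in time proportional to its length and pays for the hydrostructure itself only once.

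To finish, I need to bound the number of such walks $W^l W$ by $O(n)$. In the multiarc-free case every left wing is $R$-univocal and ends at $\wt(W)$, so reversing orientations shows that the collection of all $W^l$ forms an in-arborescence; distinct maximal \cs{1} walks give distinct leaves and hence distinct head nodes $\wh(W)$, bounding the count by $n$. Restricting to those containing $s$ can only decrease it. Multiarcs can appear only between $\wh(W^l)$ and $\wt(W^r)$ in the form of several parallel single-arc hearts $W$; these are safety-equivalent, so a single two-pointer execution handles the whole group and the $O(n)$ bound is preserved. The main obstacle I anticipate is precisely this combinatorial accounting: proving that distinct maximal \cs{1} walks yield distinct $W^l W$ terminations and verifying that the multiarc grouping does not inflate the number of executions; once both are established, combining the $O(mn)$ scan for the non-bridge-like regime with $O(n)$ incremental two-pointer runs gives the claimed bound.
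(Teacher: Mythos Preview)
Your approach mirrors the paper's exactly: separate the potentially bridge-like subwalks (those lying inside $W^lW$ or $WW^r$) from the rest, dispatch the rest by an $O(mn)$ position-of-$s,t$ scan, and reserve the incremental two-pointer executions for those $W^lW$ that contain $s$ (symmetrically for $WW^r$ and $t$), handling multiarcs by grouping safety-equivalent parallel hearts.

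The one real gap is in your $O(n)$ count. You assert that the collection of \emph{all} left wings $W^l$ forms an in-arborescence and deduce that there are at most $n$ maximal trivial avertible \cs{1} walks altogether; this is false, since there can be $\Theta(m)$ such walks (e.g.\ in a bidirected complete bipartite graph every arc is its own biunivocal heart with empty wings, so each arc gives a distinct $W^lW$). The tree structure is produced precisely by the restriction to those $W^lW$ containing $s$: every node occurring in an R-univocal walk has in-degree~$1$ in $G$, so the union of all such $W^lW$ is a subgraph in which every node has in-degree at most~$1$; together with the shared node $s$ this forces the union to be a single directed tree on at most $n$ nodes, hence with at most $n-1$ arcs, and therefore at most $n-1$ distinct walks $W^lW$. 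Your sentence ``restricting to those containing $s$ can only decrease it'' is thus backwards---the restriction is what makes the tree argument valid, not an afterthought applied to an already-established bound. Once you reorder the argument this way, the proof goes through as in the paper.
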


For $F$-covering, we need to make two fixes.
Again, we only describe the algorithm for $W^lW$, since for $WW^r$ it is symmetric.
In the case where $W^lW$ does not contain $s$, in contrast to the $E$-covering models above, this does not guarantee its safety in all cases (see \Cref{thm:linear}~\labelcref{thm:linear:suffix-prefix} or \Cref{thm:coveringlinear}~\labelcref{thm:coveringlinear:suffix-prefix}).
From those walks we exclude those that contain $t$ using $O(n)$ two-pointer executions by the same argument as for $s$ (as described above).
Then we also exclude those that do not contain an $F$-arc in their heart, since these are safe if they do not contain $s$ or $t$.
Of the remaining walks $W^lW$, only subwalks of $W^l$ can be safe, as the whole graph can be covered without using $W$ by \Cref{lem:vapor-only-clouds-covering}.
And since $W^l$ is maximal there can be at most $n$ of such subwalks $W^l$ uniquely defined by their last node.

For those walks $W^lW$ that contain $s$ and where $W$ has parallel arcs (as described above), the safety is now only equivalent for two parallel arcs if they both are in $F$ or both are not in $F$.
Hence, in such a case, we execute the two-pointer algorithm twice, once for arcs in $F$ and once for arcs not in $F$.
We get the following theorem.

\begin{theorem}
    \label{thm:coveringlinear-avertible-algo}
    The maximal \lcs{k} walks of a strongly connected graph $G$ that are subwalks of trivial avertible \cs{1} walks can be enumerated in $O(mn)$ time plus $O(n)$ executions of the two-pointer algorithm using the incremental hydrostructure variant for the specific model.
\end{theorem}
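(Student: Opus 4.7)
The plan is to mirror the argument sketched for \Cref{thm:linear-avertible-algo}, adapting each step to account for the two differences introduced by $F$-covering: (i) the characterisation of safety of a trivial walk now requires an $F$-arc that is not $s$-$t$ suffix-prefix covered rather than just any arc (cf.\ \Cref{thm:linear}~\labelcref{thm:linear:suffix-prefix} vs.\ the $F$-covering analogue), and (ii) parallel arcs in $W$ are no longer interchangeable unless they have the same $F$-membership. We process a maximal trivial avertible \cs{1} walk $W^lWW^r$ and partition its subwalks into those that are subwalks of neither $W^lW$ nor $WW^r$, and those that are; the second class is handled symmetrically, so we focus on subwalks of $W^lW$.

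First I would handle the subwalks that are not subwalks of $W^lW$ or $WW^r$. By \Cref{thm:bridge-like-check}~\labelcref{thm:bridge-like-check:subwalks} these are avertible, so their \lcs{k}ty depends only on the location of $s$ and $t$ inside them and on which arcs are in $F$; no hydrostructure is needed. A single two-pointer sweep on $W^lWW^r$ decides their safety in amortised $O(|W^lWW^r|)$ time, which sums to $O(mn)$ over all maximal trivial avertible \cs{1} walks by \Cref{thm:prelim}~\labelcref{item:omnitig-properties}.

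Next I would restrict attention to subwalks of $W^lW$. I first sift out walks $W^lW$ that contain $t$: as in the proof of \Cref{thm:linear-avertible-algo}, they form an $R$-univocal family with $O(n)$ members (since parallel arcs of $W$ at the leaf are safety-equivalent up to their $F$-membership), and I handle each with one two-pointer execution using the incremental hydrostructure, doubling to two executions per walk if $W$ carries parallel arcs split by $F$-membership. For the $W^lW$ that contain neither $s$ nor $t$, the new $F$-covering criterion forces me to further split according to whether $\heart(W^lW)$ contains an $F$-arc: if it does not, then by the $F$-covering analogue of \Cref{thm:linear}~\labelcref{thm:linear:suffix-prefix} combined with \Cref{lem:vapor-only-clouds-covering} the whole graph can be $F$-covered avoiding $W$, so only subwalks of $W^l$ itself can be safe, and there are at most $n$ distinct such maximal $W^l$ (indexed by their last node), each dispatched by one two-pointer execution. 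If $\heart(W^lW)$ does contain an $F$-arc, I treat $W^lW$ exactly as the $s$-containing case below.

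Finally, for the $W^lW$ that do contain $s$, I enumerate them in total $O(mn)$ time as in \Cref{thm:linear-avertible-algo} and invoke the two-pointer algorithm with the incremental hydrostructure. The only real subtlety is the handling of parallel arcs in $W$: in $F$-covering, two parallel arcs of $W$ are safety-equivalent iff they agree on $F$-membership, so I run the two-pointer algorithm at most twice, once for arcs in $F$ and once for arcs not in $F$. Counting across all cases, the $R$-univocal structure of $W^l$ guarantees $O(n)$ distinct prefixes, hence $O(n)$ two-pointer invocations in total. The main obstacle is bookkeeping: one must verify that the $O(n)$ bound survives every split (by $s$-containment, by $t$-containment, by $F$-presence in the heart, and by $F$-membership of the terminal parallel arc) and that all $F$-specific safety conditions really are decidable from the incremental hydrostructure with the $F$-walk-cover information tracked in \Cref{sec:algorithms}; this is the place where one has to carefully invoke the $F$-covering analogues of \Cref{thm:linear,thm:linear-nt-wings} rather than their $E$-covering versions.
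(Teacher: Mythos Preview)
Your plan follows the paper's structure closely, and most of the case analysis is sound. There is, however, a genuine gap in the branch ``$W^lW$ contains neither $s$ nor $t$ and $\heart(W^lW)$ \emph{does} contain an $F$-arc''. You propose to ``treat $W^lW$ exactly as the $s$-containing case below'', i.e., run the two-pointer algorithm with the incremental hydrostructure. But the $O(n)$ bound in the $s$-containing case relies on the fact that all those $W^lW$ share the node $s$ and are R-univocal, so together they form a tree with $O(n)$ leaves. No such common anchor exists in your branch: there can be $\Theta(m)$ maximal trivial avertible \cs{1} walks (only $O(n)$ of the at most $m$ maximal \cs{1} walks are non-trivial or bridge-like trivial, by \Cref{thm:max-omnitig-bounds}), and nothing prevents $\Theta(m)$ of the corresponding $W^lW$ from carrying an $F$-arc in the heart while avoiding both $s$ and $t$. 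Your final accounting (``the R-univocal structure of $W^l$ guarantees $O(n)$ distinct prefixes'') therefore does not cover this branch, and the claimed $O(n)$ bound on two-pointer executions fails.

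The paper's fix is simpler and avoids the issue entirely: such a $W^lW$ is already \lcs{k}. Since $s,t\notin W^lW$, the proper $s$-suffix and $t$-prefix of $W^lW$ are empty, so the $F$-arc in $\heart(W^lW)$ is not $s$-$t$ suffix-prefix covered, and condition~\labelcref{thm:coveringlinear:suffix-prefix} of \Cref{thm:coveringlinear} yields safety directly. Hence these walks need no two-pointer execution at all and are reported within the $O(mn)$ budget. With this observation in place, only the $s$-containing, $t$-containing, and ``no $F$-arc in the heart'' branches actually require two-pointer runs, and each of those is genuinely $O(n)$ via the R-univocal tree argument you cite. Your handling of the ``no $F$-arc'' branch (only subwalks of $W^l$ can be safe, at most $n$ of them) and of the parallel-arc split by $F$-membership matches the paper.
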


\section{Omitted Hydrostructure Implementation Details}
\label{apx:implementation}

\subsection{Basic Construction Algorithm}
\label{apx:implementation:hydrostructure}

We prove the correctness of \Cref{thm:hydro_algo} using \Cref{alg:hydrostructure}.

\begin{algorithm}[tbh]
	\caption{Forward Reachability}
	\label{alg:hydrostructure}
	\KwIn{Graph $G$, non-empty walk $W$}
	\KwOut{$R^+(W)$}
	\DontPrintSemicolon
	\BlankLine
	\If{$W$ is an open path}{
	$R' \gets \{x \in V \cup E \mid \exists \text{ $\wt{}(W)$-$x$ walk in } G \setminus \wh{}(W)\}$ \tcp*{compute e.g.~via BFS}
	\If{$ \not\exists \text{ arc } e \in R': e \notin W \land \head{}(e) \in W$\label{alg:hydrostructure:if}}{
	    \Return $R'$\; \label{alg:hydrostructure:returnr}
	}}
    \Return $G$ \; \label{alg:hydrostructure:returng}
\end{algorithm}

\hydrostructurebasicalgo*
\begin{proof}
    Each node and arc can be annotated with its membership in $R^+(W)$ and $R^-(W)$ using \Cref{alg:hydrostructure} and its symmetric variant in $O(m)$ time.
    It remains to prove the correctness of \Cref{alg:hydrostructure}.
    
    \begin{itemize}
        \item
        Suppose \Cref{alg:hydrostructure} reaches \Cref{alg:hydrostructure:returng} and returns $G$. 
        If $W$ is not an open path, then by \Cref{lem:cloud-cases} this is correct.
        Otherwise, if $W$ is an open path, there is an arc $e \in R'$ such that $\tail(e) \notin W \land \head(e) \in W \land e \neq \wt(W)$.
        Then, there is a walk from $\wt(W)$ via $e$ to $\wh(W)$ that does not contain $W$ as subwalk, so $W$ is avertible, and by \Cref{lem:cloud-cases} it follows that outputting $G$ is correct, because $G = \vapor(W) \subseteq R^+(W) \subseteq G$.
        
        \item
        Suppose \Cref{alg:hydrostructure} reaches \Cref{alg:hydrostructure:returnr} and returns $R'$. 
        Then for that to be correct, by \Cref{lem:cloud-cases} $W$ must be bridge-like and $R'$ must be equal to $R^+(W)$.
        
        Assume for a contradiction that $W$ is avertible.
        Then by \Cref{lem:cloud-cases} it holds that $\vapor(W) = G$, and therefore $R^+(W) = G$.
        Hence, by definition of $R^+(\cdot)$, there is a $\wt(W)$-$\wh(W)$ walk $\mathcal{W}$ that does not have $W$ as subwalk.
        Such a walk leaves $W$ before reaching $\wh(W)$ to avoid having $W$ as subwalk.
        But then it must enter $W$ again at an arc $e$ other than $\wt(W)$ before it reaches $\wh(W)$ for the first time.
        Therefore, $e \in R'$ and it holds that $e \notin W$ and $\head(e) \in W$.
        But then the condition in \Cref{alg:hydrostructure:if} would be false and \Cref{alg:hydrostructure} would not reach \Cref{alg:hydrostructure:returnr}, a contradiction.
        
        Therefore, $W$ is bridge-like, and it remains to prove that $R' = R^+(W)$.
        $(\subseteq)$
        Let $x \in R'$.
        Then there is a $\wt(W)$-$x$ walk in $G \setminus \wh(W)$, so $x \in R^+(W)$.
        $(\supseteq)$
        Let $x \in R^+(W)$.
        Then there is a $\wt(W)$-$x$ walk in $G$ that does not contain $W$ as subwalk.
        This walk cannot contain $\wh(W)$, since otherwise one of its prefixes until $\wh(W)$ would prove $\wh(W) \in R^+(W)$, contradicting \Cref{lem:clouds}~\labelcref{lem:clouds:separation}.
        So $x \in R'$.
        \qedhere
    \end{itemize}
\end{proof}

\subsection{Omitted Incremental Hydrostructure Proofs}
\label{apx:implementation:incremental-hydrostructure-proofs}

\incrementalcloudsrepresentation*
\begin{proof}
    $(\subseteq)$
    Let $x \in R^+(aZb)$.
    If $x \in aZ$, then it trivially holds, so consider $x \notin aZ$.
    Then, since $aZb$ is bridge-like, $x \in \sea(aZb)$.
    So by \Cref{lem:clouds}~\labelcref{lem:clouds:connectedseacloud}, there is an $a$-$x$ path $P$ in $\sea(aZb) \cup \vapor(aZb)$.
    Let $v$ be the last element of $P$ in $\vapor(aZb)$; it exists since $\head(a) \in \vapor(aZb)$, and since $\vapor(aZb)$ is a path by \Cref{lem:cloud-cases}, $v$ is a node.%
    Let $e'$ be the outgoing arc of $v$ in $P$ and $e$ its sibling in $Zb$; notice that $e' \neq b$, since $b \in \cloud(aZb)$.
    By definition of \Rsib{\cdot}, it holds that the suffix of $P$ starting from $e'$ is in \Rsib{e} and, as such, $x \in \Rsib{e}$.

    $(\supseteq)$
    By definition, $aZ \subseteq R^+(aZb)$.
    Let $x \in \Rsib{e}$, for some arc $e$ from $Zb$.
    Since $aZb$ is bridge-like, by \Cref{lem:cloud-cases}, $Z$ is an open path, and therefore $aZ$ is a path.
    As a result, there is a path $P$ from $a$ to $\tail{}(e)$ via a prefix of $Z$ that does not contain $e$.
    Since $x \in \Rsib{e}$, $P$ can be extended to $x$ without using $e$.
    So there is a walk from $a$ to $x$ without using $e \in Zb$, so $x \in R^+(aZb)$.
\end{proof}

\incrementalcloudscomputation*
\begin{proof}\leavevmode
    \begin{itemize}
        \item[\labelcref{lem:incremental-clouds:subset}]
        Let $x \in \Rsib{e}$.
        Then by \Cref{lem:simplified-clouds-computation}, $x \in R^+(aZb)$, so there is a walk from $a$ to $x$ which, by \Cref{lem:clouds}~\labelcref{lem:clouds:separation}, does not include $b$.
        Moreover, by \Cref{lem:clouds}~\labelcref{lem:clouds:connectedseacloud}, there is a walk from each split-sibling of $b$ to $a$ that does not include $b$.
        So $x \in \Rsib{b}$.
        
        \item[\labelcref{lem:incremental-clouds:reachability}]
        $(\subseteq)$
        Assume for a contradiction, that $x \in \Rsib{b}$ but not in $\Rsib{e} \cup \{x \in G \mid \exists \text{ $e$-$x$ walk in } G \setminus (\Rsib{e} \cup \{b\})\}$.
        Since $x \in \Rsib{b}$, there is a $\tail{b}$-$x$ walk $W$ in $G \setminus \{b\}$.
        Moreover, since by \Cref{lem:cloud-cases} it holds that $Z$ is an open path, it follows that $aZ$ does not contain $b$.
        Therefore, we can prepend a suffix of $aZ$ to $W$ (without repeating the last node of $Z$), resulting in an $e$-$x$ walk $W'$ in $G \setminus \{b\}$.
        Since $x \notin \Rsib{e}$ and $W'$ starts in $e$, $W'$ must contain a suffix starting in some occurrence of $e$ that contains no elements of $\Rsib{e}$.
        But then that suffix of $W'$ is an $e$-$x$ walk in $G \setminus (\Rsib{e} \cup \{b\})$, a contradiction.
        
        $(\supseteq)$
        If $x \in \Rsib{e}$ then $x \in \Rsib{b}$ by \Cref{lem:incremental-clouds}~\labelcref{lem:incremental-clouds:subset}.
        Otherwise there is an $e$-$x$ walk in $G \setminus (\Rsib{e} \cup \{b\})\}$.
        Since $aZb$ is bridge-like, it holds that each split sibling of $b$ is in $\sea(aZb)$.
        By \Cref{lem:clouds}~\labelcref{lem:clouds:connectedseacloud} there is a walk from each split sibling of $b$ to $e$ in $\sea(aZb) \cup Z$.
        Notice that by \Cref{lem:clouds}~\labelcref{lem:clouds:separation}, $b \in \cloud(aZb)$, so these walks do not contain $b$ since $Z$ is $\vapor(aZb)$.
        So there exists an $\tail{}(b)$-$x$ walk via $e$ in $G \setminus \{b\}$, so $x \in \Rsib{b}$.
        \qedhere
    \end{itemize}
\end{proof}

\subsection{Supporting Theorems for the Incremental Algorithms}

\begin{theorem}
    \label{thm:max-omnitig-bounds}
    A strongly connected graph $G$ with $n$ nodes and $m$ arcs has at most $m$ maximal \cs{1} walks, of which at most $O(n)$ are bridge-like trivial and at most $O(n)$ are non-trivial.
\end{theorem}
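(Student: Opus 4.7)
The plan is to dispatch the three bounds separately. The bound of $m$ on the total number of maximal \cs{1} walks and the bound of $O(n)$ on the non-trivial ones are already recorded in \Cref{thm:prelim}~\ref{item:omnitig-properties} (from \cite{DBLP:journals/talg/CairoMART19} and \cite{cairo2020macrotigs}), so I would simply invoke them. The substantive new work is the $O(n)$ bound on bridge-like trivial maximal \cs{1} walks, and I describe the plan for that below.

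First I would observe that a maximal trivial \cs{1} walk $W$ is uniquely determined by its heart $H$: by definition $H$ is the sub-walk between the last split arc and the first join arc of $W$, and the remainder of $W$ is the unique R-univocal (resp.~univocal) extension of $H$ on the left (resp.~right). So it suffices to bound the number of hearts that arise from bridge-like trivial \cs{1} walks. Next, since $W$ is bridge-like, by the very definition there exist $x, y \in G$ such that every $x$-$y$ walk contains $W$ as a subwalk, and hence contains every arc $e$ of $W$; removing any such $e$ eliminates all $x$-$y$ walks, so $G \setminus e$ is not strongly connected and $e$ is a strong bridge of $G$. In particular every arc of $H$ is a strong bridge.

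Finally I would exploit the structure of biunivocal walks in $G$. Each arc of $G$ has a unique maximal biunivocal extension — obtained by extending on both sides through internal nodes of in-degree and out-degree one — so the maximal biunivocal walks partition the arcs of $G$ (a biunivocal cycle would be an isolated connected component, which is ruled out because $G$ is strongly connected and not a cycle). Moreover, a biunivocal walk can contain a split (resp.~join) arc only as its first (resp.~last) element, since any internal split or join arc would turn one of the walk's internal nodes into a split or join node, contradicting biunivocality. Therefore each maximal biunivocal walk contains at most one heart — namely itself, in the case where its first arc is a split arc and its last arc is a join arc. Combining with the previous paragraph, distinct bridge-like trivial \cs{1} walks occupy distinct maximal biunivocal walks, each of which contains at least one strong bridge. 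Using the classical fact that a strongly connected digraph has $O(n)$ strong bridges (see, e.g., \cite{georgiadis2017strong}) together with the partition property, this yields the desired $O(n)$ bound. The main obstacle is the structural observation that each maximal biunivocal walk carries at most one heart; once that is in place, the counting is immediate from the definitions and \Cref{lem:cloud-cases}.
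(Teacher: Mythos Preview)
Your proof is correct and takes essentially the same route as the paper: both reduce the $m$ and non-trivial $O(n)$ bounds to \Cref{thm:prelim}~\ref{item:omnitig-properties}, and both handle the bridge-like trivial case by showing that the heart of such a walk contains a strong bridge and then invoking the $O(n)$ bound on strong bridges (the paper cites \cite{italiano2012finding}). The only real difference is that you spell out the injectivity step---why distinct maximal bridge-like trivial walks give distinct strong bridges---via the partition of $E$ into maximal biunivocal walks, whereas the paper leaves this implicit; your extra observation that \emph{every} arc of a bridge-like walk is a strong bridge is correct but stronger than needed.
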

\begin{proof}
    Let $W$ be a trivial bridge-like walk.
    Then there is a pair of elements $x, y \in G$ such that each $x$-$y$ walk has $W$ as subwalk.
    Since $W$ is trivial this is equivalent to each $x$-$y$ walk containing some edge $e \in \heart(W)$.
    Therefore the removal of $e$ would disconnect $y$ from $x$, which, since $G$ is strongly connected, proves that $e$ is a strong bridge.
    Moreover, by~\cite{italiano2012finding}, there are at most $O(n)$ strong bridges, so there are at most $O(n)$ maximal bridge-like trivial walks.
    The bound on non-trivial \cs{1} walks follows from \Cref{thm:prelim}~\labelcref{item:omnitig-properties}.
\end{proof}

\begin{lemma}[restate = italianopath, name = ]
    \label{lem:italianopath}
    Let $W$ be a trivial $a$-$b$ walk in a strongly connected graph with $n$ nodes and $m$ arcs and let $e \in \heart(W)$ be an arc with $e \notin \{a, b\}$.
    After $O(m)$ time preprocessing, the query \emph{``is there an $a$-$b$ path in $G \setminus e$''} can be answered in $O(1)$-time.
\end{lemma}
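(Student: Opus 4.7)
The plan is to reduce every query to a single precomputed bit recording whether $W$ is bridge-like or avertible, a dichotomy that is independent of the queried arc $e$. In the preprocessing phase I would compute the hydrostructure of $W$ in $O(m)$ time via \Cref{thm:hydro_algo} and determine which side of the dichotomy we are on using \Cref{lem:cloud-cases} (bridge-like iff $\vapor(W)$ is the open path $Z$, avertible iff $\vapor(W) = G$). Every subsequent query then returns the stored bit in $O(1)$ time: NO if $W$ is bridge-like, YES otherwise.

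The bridge-like case is immediate from the hydrostructure. \Cref{lem:clouds}~\labelcref{lem:clouds:separation} places $a \in \sea(W)$ and $b \in \cloud(W)$, and \Cref{lem:clouds}~\labelcref{lem:clouds:forwards} then forces every $a$-$b$ walk in $G$ to contain $W$ as a subwalk. Since $e$ is an arc of $W$, no such walk can avoid $e$, so the answer is indeed NO regardless of which $e \in \heart(W) \setminus \{a, b\}$ is queried.

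The main obstacle is the avertible case, where I must exhibit, for every admissible $e$, an $a$-$b$ walk in $G$ that avoids $e$. Avertibility supplies some $a$-$b$ walk $P$ that does not contain $W$, and the task reduces to showing that $P$ cannot traverse $e$ at all. The crucial structural fact is that since $W$ is trivial with last split arc $w_j$ and first join arc $w_i$ satisfying $j \leq i$, every arc $w_{k'}$ of $W$ with $k' < i$ fails to be a join arc, so $\head(w_{k'}) = w_{k'+1}$ has in-degree exactly $1$ in $G$ with $w_{k'}$ as its unique in-arc; symmetrically, every $w_{k'}$ with $k' > j$ forces $\tail(w_{k'}) = w_{k'-1}$ to have out-degree exactly $1$ with $w_{k'}$ as its unique out-arc. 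If $P$ traversed $e = w_k$ at some position, iterated backward tracing through the forced unique in-arcs --- valid at every step because $k \leq i$ --- would pin down the arcs $w_{k-2}, w_{k-4}, \ldots, w_1 = a$ immediately preceding $e$ in $P$, reaching all the way to $P$'s mandatory first arc $a$. A symmetric forward trace --- valid because $k \geq j$ --- would pin down $w_{k+2}, w_{k+4}, \ldots, w_\ell = b$ as the elements immediately following $e$ in $P$. Together these traces place a full copy of $W$ as a subwalk of $P$, contradicting $P \not\supseteq W$. Hence $P$ must avoid $e$, so $P$ is already an $a$-$b$ walk in $G \setminus e$ and therefore contains an $a$-$b$ path.
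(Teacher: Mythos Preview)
Your approach is correct and takes a genuinely different route from the paper's. The paper reduces each query to a fault-tolerant SCC query via \Cref{thm:prelim}~\ref{thm:prelim:georgiadis}: it shows that a $b$-$a$ path in $G \setminus e$ always exists (using exactly your tracing idea, since any walk through $e$ must contain all of $W$), so ``$\head(a)$ and $\tail(b)$ lie in the same SCC of $G \setminus e$'' collapses to the desired $a$-$b$ reachability question. Your extra insight is that the answer does not depend on $e$ at all---it is YES iff $W$ is avertible---so a single hydrostructure computation suffices. This is more self-contained (no external data structure from~\cite{georgiadis2017strong}), but the paper's preprocessing is once per \emph{graph} rather than once per \emph{walk}; that distinction is what makes the downstream \Cref{thm:bridge-like-check}~\labelcref{thm:bridge-like-check:check} achieve $O(|W|)$ per walk after a single shared $O(m)$.

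One small gap to close: your traces implicitly assume $P$ has at least $k-1$ elements before $e$ and $\ell-k$ after it. The fix is that the in-degree-$1$ property of $w_2,\dots,w_{k-1}$ forces $w_1\cdots w_k$ to be a simple path (a repetition would give a cycle of in-degree-$1$ nodes, impossible in a strongly connected non-cycle graph); hence if the backward trace hit position~$1$ of $P$ before reaching $w_1$, you would get $a = w_{k-p+1}$ with $k-p+1 > 1$, a contradiction. The symmetric argument handles the forward direction.
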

\begin{proof}
    The query in \Cref{thm:prelim}~\ref{thm:prelim:georgiadis} (for $x=\head{a}$ and $y=\tail{b}$)
    is equivalent to answering whether there exists an $a$-$b$ path in $G\setminus e$, because we can prove that there is always a $b$-$a$ path that does not contain $e$ if $W$ is trivial.
    A $b$-$a$ path exists since $G$ is strongly connected.
    And this path does not contain $e$, as $W$ is trivial so every path through $e\in \heart(W)$ contains the whole $W$ starting from $a$.
\end{proof}

\begin{theorem}[restate = bridgelikecheck, name = ]
    \label{thm:bridge-like-check}
    Let $W=W^lZW^r$ be a trivial walk where $\heart(W)=Z$.
    \begin{enumerate}[label = (\alph*), nosep]
        \item \label{thm:bridge-like-check:check} Using $O(m)$ preprocessing time on $G$, we can verify if $W$ is avertible in $O(|W|)$ time.
        \item \label{thm:bridge-like-check:subwalks} If $W$ is avertible, then its only maximal bridge-like subwalks are $W^lZ$ and $ZW^r$.
    \end{enumerate}
\end{theorem}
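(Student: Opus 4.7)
For part (a), I plan to show that a trivial walk $W = W^l Z W^r$ is bridge-like if and only if every arc $e \in Z$ is critical for $\wt(W)$-$\wh(W)$ connectivity, meaning no $\wt(W)$-$\wh(W)$ path exists in $G \setminus e$. The ``only if'' direction is immediate from the characterisation in \Cref{lem:cloud-cases}: a bridge-like $W$ has every $\wt(W)$-$\wh(W)$ walk containing $W$ and thus every arc of $Z$. For the ``if'' direction, assume each arc of $Z$ is critical. Any $\wt(W)$-$\wh(W)$ walk $P$ traverses every arc of $Z$; because internal nodes of $Z$ have both in-degree and out-degree $1$ in $G$ (they lie strictly between the last split $w_j$ and first join $w_i$ of $W$), $P$ must contain $Z$ as a contiguous subwalk. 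Next, since every arc in $W^l$ is a non-join arc (the first join $w_i$ lies in $Z$), an in-degree-$1$ argument backward from $\wt(Z)$ forces $P$ to contain $W^l$ contiguously before $Z$; symmetrically, every arc in $W^r$ is non-split, so out-degree-$1$ forward from $\wh(Z)$ forces $W^r$ after $Z$. Hence $P$ contains $W$, so $W$ is bridge-like. The algorithm iterates over arcs $e \in Z$, answering each ``$\wt(W)$-$\wh(W)$ path in $G \setminus e$?'' query in $O(1)$ time via \Cref{lem:italianopath} after $O(m)$ preprocessing; the walk is bridge-like iff every answer is ``no'', for total time $O(m + |W|)$.

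For part (b), assume $W$ is avertible. Observe first that both $W^l$ and $W^r$ must be non-empty, otherwise the one-sided forcing argument from part (a) alone would make $W$ bridge-like, a contradiction. I claim $W^l Z$ is bridge-like: any $\wt(W^l Z)$-$\wh(W^l Z)$ walk $Q$ ends with arc $w_i$, and since every arc $w_1, w_2, \ldots, w_{i-1}$ is a non-join arc (no join arc appears before the first join $w_i$), the in-degree-$1$ argument backward from $w_i$ forces $Q$ to contain $w_1, \ldots, w_i = W^l Z$ as a contiguous suffix. Symmetrically, $Z W^r$ is bridge-like because the out-degree-$1$ argument forward from $w_j$ through the non-split arcs $w_{j+1}, \ldots, w_\ell$ forces any $w_j$-$w_\ell$ walk to contain $Z W^r$ as a contiguous prefix.

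For maximality, I show that every subwalk of $W$ strictly containing $W^l Z$ (resp.\ $Z W^r$) is avertible. Pick an arc $e \in Z$ and a $\wt(W)$-$\wh(W)$ walk $P^*$ in $G \setminus e$, which exists by part (a) since $W$ is avertible. By the out-degree-$1$ forcing along $W^r$, the walk $P^*$ contains $w_{i+1}, \ldots, w_\ell$ as its forced contiguous suffix, so $P^*$ visits every $w_q$ with $q > i$. Truncating $P^*$ at any such $w_q$ yields a $\wt(W)$-$w_q$ walk in $G \setminus e$, which cannot contain the subwalk $w_1, \ldots, w_q$ (since the latter includes $e$). Thus any strict superwalk of $W^l Z$ inside $W$ is avertible. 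A symmetric argument using the forced in-degree-$1$ suffix of a symmetric walk handles strict superwalks of $Z W^r$. Finally, any bridge-like subwalk of $W$ not containing all of $Z$ lies within $w_1, \ldots, w_{i-1}$ or within $w_{j+1}, \ldots, w_\ell$, so it is strictly contained in either $W^l Z$ or $Z W^r$. Together, these establish that $W^l Z$ and $Z W^r$ are the unique maximal bridge-like subwalks.

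The main technical obstacle is handling subwalks $V_{p,q} = w_p, \ldots, w_q$ with $p < j$ and $q > i$, which extend strictly into both wings at once: the truncation of $P^*$ begins at $\wt(W)$ rather than at $w_p$, so a fresh construction is required. I would start a walk with $w_p$, follow $W$ forward to $\tail(w_j)$, deviate to a split-sibling arc $b' \neq w_j$ (which exists since $w_j$ is split), route through $G \setminus e$ to $\head(w_i) = \tail(w_{i+1})$ entering via an incoming arc $e'' \neq w_i$ (which exists since $\head(w_i)$ is a join node), and finish with $w_{i+1}, \ldots, w_q$. The deviations at $w_j$ and at $\head(w_i)$ prevent $V_{p,q}$ from appearing contiguously at the ends of the walk, and the intermediate routing can be chosen to avoid reintroducing $V_{p,q}$ by relying on the strong connectivity of $G \setminus e$ (or, if $e$ happens to be a strong bridge, by replacing $e$ with a different witness arc of $Z$). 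Verifying that such routing always exists and avoids any spurious occurrence of $V_{p,q}$ is the step that requires the most care.
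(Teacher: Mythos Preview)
Your argument for part~(a) is correct and matches the paper's approach (the paper in fact observes that checking a \emph{single} arc of $Z$ suffices, since $Z$ is biunivocal, but iterating over all of them is still within the $O(|W|)$ budget).

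For part~(b), there is a genuine gap. Your truncation argument rests on the claim that $P^*$ ``contains $w_{i+1},\ldots,w_\ell$ as its forced contiguous suffix'' by \emph{out-degree-$1$} forcing along $W^r$. But out-degree~$1$ gives only \emph{forward} forcing: once $P^*$ is at a node of $W^r$ it must continue along $W^r$. It does \emph{not} say that $P^*$ must enter $W^r$ at $w_{i+1}$ to reach $w_\ell$; there may be join nodes strictly inside $W^r$ (only $w_i$ is guaranteed to be the \emph{first} join arc, later arcs of $W$ may also be join arcs), so $P^*$ can enter $W^r$ arbitrarily late and skip $w_q$ entirely. Hence truncation at $w_q$ is not justified. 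Moreover, as you note yourself, the case $1<p<j$ and $i<q<\ell$ is not covered by the superwalk-of-$W^lZ$ or superwalk-of-$ZW^r$ cases, and your final-paragraph construction leaves unresolved exactly the hard part: why the intermediate routing can be chosen to avoid $V_{p,q}$.

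The paper sidesteps both difficulties with one idea: instead of trying to avoid $V_{p,q}$, avoid $Z$ (which suffices since $Z\subseteq V_{p,q}$), and route the $\wt(W')$--$\wh(W')$ walk through $\wt(W)$ and $\wh(W)$. Concretely: go $w_p\ldots w_{j-1}$, deviate via a split-sibling of $w_j$, take any \emph{path} to $w_1$; then take the avertibility witness $w_1\to w_\ell$; then symmetrically any path from $w_\ell$ to a join-sibling of $w_i$, then $w_{i+1}\ldots w_q$. Each of the three pieces provably avoids $Z$ (the outer pieces because containing $Z$ would force, via R-univocality or univocality, a repetition of $w_1$ resp.\ $w_\ell$ in a path), and the concatenation avoids $Z$ because the junction points $w_1,w_\ell$ lie outside $Z$ (both wings being nonempty). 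This simultaneously handles all $p<j$, $q>i$ without any truncation or case split.
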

\begin{proof}
    \begin{itemize}
        \item[\labelcref{thm:bridge-like-check:check}]
        Using \Cref{lem:italianopath}, we can check if an arc in $Z$ is part of every $\wt(W)$-$\wh(W)$ path in $O(1)$ time after $O(m)$ preprocessing.
        Since $Z$ is a path, the property of a single arc in $Z$ being part of every $\wt(W)$-$\wh(W)$ path is equivalent to the property of $Z$ being subwalk of every $\wt(W)$-$\wh(W)$ walk.
        And since $W$ is trivial, that is equivalent to each $\wt(W)$-$\wh(W)$ walk containing $W$ as subwalk.
        Resulting, since we can find the heart of a trivial walk in $O(|W|)$ time, the claim holds.
        
        \item[\labelcref{thm:bridge-like-check:subwalks}]
        Let $W' = W'^jZW'^s$ be a subwalk of $W$, with $W'^j, W'^s$ not empty.
        Then $\wt(W'^j)$ is a split node.
        Let $e$ be one of its outgoing arcs $e$ that is not in $Z$.
        By strong connectivity, there is an $e$-$\wt(W)$ path, which would repeat $\wt(W)$ if it would have $Z$ as subwalk, and therefore does not have $Z$ as subwalk.
        Prepending this path with $W'^j$ creates a $\wt(W')$-$\wt(W)$ path that does not have $Z$ as subwalk.
        Symmetrically, there is a $\wh(W)$-$\wh(W')$ path without $Z$ as subwalk.
        Additionally, since $W$ is avertible, there is a $\wt(W)$-$\wh(W)$ walk that does not have $Z$ as subwalk.
        Since neither of these three walks starts or ends with a node or arc in $Z$, they can be joined into a single $\wt(W')$-$\wh(W')$ walk that does not have $Z$ as subwalk, which by \Cref{lem:cloud-cases} proves that $W'$ is avertible.
        \qedhere
    \end{itemize}
\end{proof}

\section{Safety for Subset Covering}
\label{sec:subsetCovApx}

The subset covering models reduce the covering constraint to only a subset $F\subseteq E$ instead of the entire $E$. Note that this implies that any walk which is a solution in the $E$-covering model is also a solution in the $F$-covering model, but we can have walks which are not $E$-covering but $F$-covering. Thus, every $F$-covering safe walk is $E$-covering safe. Furthermore, if $F$ is empty then no walk is safe, so in this section we assume $F \neq \emptyset$.

Note that, in contrast to the $E$-covering models, in the $F$-covering models, a single arc is not trivially safe.
So in order to characterise their safety we extend the definition of hydrostructure for single arcs, by splitting the arc using a dummy node. Thus, a single arc can be treated as a walk with two arcs, for which the hydrostructure is already defined.

We call an arc in $F$ an $F$-arc and a part of the graph \emph{$F$-covered} if an $F$-arc belongs to it.
Also, for a bridge-like walk, we define the \emph{$F$-covered SCCs}, which are the maximal SCCs of its \river{}, and the sea- and cloud-related SCCs (see \Cref{lem:clouds}~\ref{lem:clouds:connectedseacloud}) that are $F$-covered.
Similarly, for an avertible walk, the \emph{$F$-covered SCC} is $G$.
Furthermore, the $F$-covered sea and cloud-related SCCs are counted as one if they share all their $F$-arcs.

\paragraph{Circular Models.}

A trivial walk $W$ is circular safe if $\heart(W)$ is $F$-covered. 
For bridge-like walks, the issue with $F$-covering circular walks is that a set of cycles can cover the entire $F$ if it is covered by some $F$-covered SCCs, making the walks unsafe. This is not possible if an arc in $F$ exists outside the $F$-covered SCCs, or the number of such SCCs are more than $k$.

\begin{theorem}[name = Circular $F$-Covering Safety, restate = coveringcircular]
    For $F\subseteq E$ a \cs{1} walk $W$ that contains at least two arcs is \ccs{k} iff for the hydrostructure on $W$ (for trivial) or $\heart(W)$ (for non-trivial) it holds that
    \begin{enumerate}[label = (\alph*), nosep]
        \item $k$ is less than the number of $F$-covered SCCs, or
        \item there exists an $F$-covered arc in the $\river$, that is not in any SCC of the $\river$, or
        \item $W$ has an $F$-covered trivial heart.
    \end{enumerate}
\end{theorem}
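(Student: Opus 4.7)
The plan is to establish both directions of the iff by exploiting the hydrostructure on $W$ (for trivial $W$) or on $\heart(W)$ (for non-trivial $W$). Because $W$ is already \cs{1}, the hydrostructure is well-behaved: for bridge-like $W$ the $\vapor$ is a path by \Cref{th:circularSafe} and the sea- and cloud-related induced subgraphs are strongly connected by \Cref{lem:clouds}~\labelcref{lem:clouds:connectedseacloud}, while for avertible $W$ the convention is that the only F-covered SCC is $G$ itself. The overall structure will closely parallel \Cref{thm:kcircsafe}, but with every ``covering'' replaced by ``$F$-covering'' and with the additional flexibility that the $k$ walks may distribute the F-arcs across multiple F-covered SCCs.

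For the ($\Leftarrow$) direction, I would case-split on which of (a), (b), (c) holds and argue that every F-covering collection of at most $k$ circular walks must then contain $W$ as a subwalk. Case (c) is immediate from the biunivocality of the trivial heart: covering any F-arc of the heart forces a walk to traverse the whole heart contiguously, and its univocal extension is $W$ itself. For case (b), pick an F-arc in the $\river$ that is not contained in any river-SCC; any walk through it has no way to loop back within the $\river$, so by \Cref{lem:clouds}~\labelcref{lem:clouds:connectivity} the walk must enter from the $\cloud$ and exit into the $\sea$, and since it is circular it must subsequently cross from the $\sea$ back to the $\cloud$, which by \Cref{lem:clouds}~\labelcref{lem:clouds:forwards} contains $\heart(W)$ and hence (by the wings being forced univocal extensions) all of $W$. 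Case (a) uses pigeonhole: having strictly more F-covered SCCs than walks forces some walk to cover F-arcs in at least two distinct F-covered SCCs, and a short case analysis on the pair among sea-related/cloud-related/river-SCCs using \Cref{lem:clouds}~\labelcref{lem:clouds:connectivity} shows that any circular walk touching two such SCCs must at some point cross from $\sea$ to $\cloud$ through $\vapor$, again containing $W$.

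For the ($\Rightarrow$) direction, I would assume all three conditions fail and construct an F-covering collection of at most $k$ circular walks in which none contains $W$: pick one circular walk per F-covered SCC that lives entirely in the induced subgraph of that SCC and covers every F-arc inside it (possible because each such subgraph is strongly connected and contains at least one F-arc). A walk confined to the sea-related SCC omits the last arc of $\heart(W)$, one confined to the cloud-related SCC omits its first arc, and one inside a river-SCC avoids $\heart(W)$ altogether, so none has $W$ as a subwalk. I expect the main obstacle to be the bookkeeping that every F-arc of $G$ is absorbed by some walk in the collection: F-arcs in $\sea$ or $\cloud$ live in the respective related SCC; F-arcs in $\vapor$ outside the biunivocal middle of the heart land in the sea- or cloud-related SCC (invoking \Cref{lem:wings-in-sea-cloud} for the wings in the non-trivial case); F-arcs in the biunivocal middle of the heart exist only for trivial $W$ and are ruled out by the failure of (c); F-arcs in the $\river$ lie inside some river-SCC by the failure of (b); and the number of F-covered SCCs is at most $k$ by the failure of (a). The avertible case collapses neatly: $G$ is the unique F-covered SCC, and \Cref{lem:vapor-only-clouds-covering} for trivial $W$ (with a symmetric argument for non-trivial avertible $W$) yields a single circular walk that covers $G$ while avoiding the heart.
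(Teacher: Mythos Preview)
Your proposal is correct and follows essentially the same route as the paper: both directions proceed by the same case split, with pigeonhole for (a), the river-connectivity argument for (b), and the biunivocality of the trivial heart for (c); the $(\Rightarrow)$ direction is the same construction of one circular walk per $F$-covered SCC, with the same bookkeeping that every $F$-arc lands in some such SCC. One small slip: there is no ``non-trivial avertible $W$'' case to handle, since the hypothesis that $W$ is \cs{1} together with \Cref{th:circularSafe} forces $\heart(W)$ to be bridge-like whenever $W$ is non-trivial, so your parenthetical ``symmetric argument'' there is vacuous and can simply be dropped.
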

\begin{proof}
    $(\Rightarrow)$
    Assume that $k$ is greater than or equal to the number of $F$-covered SCCs, there exists no $F$-covered arc in the \river{}, that is not in any SCC of the \river{}, and $W$ does not have an $F$-covered trivial heart.
    We distinguish between $W$ being trivial avertible, trivial bridge-like or non-trivial.
    \begin{itemize}
        \item
        If $W$ is trivial and avertible but does not have an $F$-covered trivial heart, then by \Cref{lem:vapor-only-clouds-covering} all arcs in $F$ can be covered by a single circular walk without traversing $\heart(W)$.
        Thus, $W$ is not \ccs{k}.
        
        \item
        If $W$ is trivial and bridge-like or non-trivial (in which case $\heart(W)$ is bridge-like by \Cref{th:circularSafe}), then all arcs in $F$ are in $F$-covered SCCs. Indeed arcs in the \river{} are in $F$-covered SCCs by hypothesis; arcs in the \cloud{} or \sea{} are relatively in the $F$-covered cloud-related SCC or sea-related SCC; arcs in the \vapor{} of non-trivial walks are both in the sea-related SCC and cloud-related SCC; and for trivial walks the arcs in the \vapor{} that are neither in the cloud-related SCC nor in the sea-related SCC are in the trivial heart by definition, which does not contain arcs in $F$ by hypothesis.
        Therefore, since all arcs in $F$ are in $F$-covered SCCs and we have at most $k$ $F$-covered SCCs of which none contains the whole walk $W$, it follows that $F$ can be covered with $k$ walks not having $W$ as subwalk, hence $W$ is not \ccs{k}.
    \end{itemize}

    $(\Leftarrow)$
    \begin{itemize}
        \item
        If $W$ has an $F$-covered trivial heart then it is \ccs{k}.
        
        \item
        If $k$ is less than the number of $F$-covered SCCs, then at least one solution walk needs to intersect two SCCs.
        We need to distinguish three cases: (A) if these are sea-related and cloud-related (in which case the sea-related SCC and the cloud-related SCC do not share all their $F$-arcs), (B) if one is in the \river{} and the other is sea-related or cloud-related, or (C) if both are in the \river{}.
        In all cases, \Cref{lem:clouds}~\labelcref{lem:clouds:connectivity} implies that this solution walk needs to traverse from \sea{} to \cloud{}, which by \Cref{lem:clouds}~\labelcref{lem:clouds:forwards} implies the \ccs{k}ty of $W$.
        In case (C) this is because the \river{} is not strongly connected if there is more than one maximal SCC.
        
        \item
        If there exists an $F$-covered arc in the \river{}, that is not in any SCC of the \river{}, then the \river{} is not strongly connected.
        Therefore, to cover this arc with a circular walk, the \river{} needs to be exited and entered, which by \Cref{lem:clouds}~\labelcref{lem:clouds:connectivity} implies that a solution walk covering this arc needs to traverse from \sea{} to \cloud{}, which by \Cref{lem:clouds}~\labelcref{lem:clouds:forwards} implies the \ccs{k}ty of $W$.
        \qedhere
    \end{itemize}
\end{proof}

\paragraph{Linear Models.}

The characterisation for safe walks in the linear models can be translated to the $F$-covering model using the following definitions.
We define the $R^+_s(W)$ as the subgraph reachable from $s$ without using $W$ as a subwalk.
Similarly, we define $R^-_t(W)$ as the subgraph that reaches $t$ without using $W$ as a subwalk.
We refer to the set $R^+_s(W) \cap R^-_t(W)$ as the \emph{st-induced subgraph of $W$} and similarly $R^+_s(W) \cap R^-_t(W) \cap \river(W)$ as the \emph{st-induced \river{} of $W$}.
Note that if $s\in R^-(W)$ (or $t\in R^+(W)$) then $R_s^+(W)$ (or $R_t^-(W)$) covers the whole graph.
Also, if both $s,t\in \river(W)$ then the st-induced subgraph of $W$ contains $\vapor(W)$ disconnected from some part, if any, of $\river(W)$.
Further, we define the \emph{$F$ walk-cover} of a subgraph as a set of walks that cover all $F$-arcs in that subgraph, where the number of walks is referred as the \emph{size} of the $F$ walk-cover. Similar to a walk-cover (see \Cref{sec:preliminaries}), the $F$ walk-cover can be computed in $O(mn)$ time\footnote{In the minimum flow, we keep {\em unit} demand only for the edges in $F$.} using flows. We have the following  simple characterisation of safe walks in this model.

\begin{theorem}[name = $k$-st $F$-covering Safe Trivial Walks and Non-Trivial Hearts, restate = coveringlinear]
    \label{thm:coveringlinear}
    A \cs{1} walk $W$ that contains at least two arcs and that is trivial, or is a non-trivial heart, is \lcs{k} iff
    \begin{enumerate}[label = (\alph*), nosep]
        \item The $F$-arcs in the st-induced \river{} of $W$ cannot be covered with $k$ walks, or \label{thm:coveringlinear:river}
        \item $F \cup \{s, t\} \not\subseteq$ a single WCC in the st-induced subgraph of $W$, or \label{thm:coveringlinear:r}
        \item $W$ is trivial and some arc in $F \cap \heart(W)$ is not $s$-$t$ suffix-prefix covered in $W$. \label{thm:coveringlinear:suffix-prefix}
    \end{enumerate}
\end{theorem}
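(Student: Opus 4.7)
The plan is to generalise the proof of \Cref{thm:linear} to the $F$-covering setting via three adaptations: replacing $R^+(W)$ and $R^-(W)$ by the $s$- and $t$-indexed variants $R^+_s(W)$ and $R^-_t(W)$; replacing walk-covers by $F$ walk-covers restricted to the st-induced \river{}; and restricting the suffix-prefix argument to $F$-arcs. Since $W$ is \cs{1}, either $W$ is trivial or $\heart(W) = W$ is bridge-like by \Cref{th:circularSafe}, and we invoke this throughout together with \Cref{lem:clouds}.

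For the $(\Leftarrow)$ direction, assume one of (a), (b), (c) holds; we show that every collection of $k$ $s$-$t$ walks whose union covers $F$ must have $W$ as a subwalk of some walk. Under (c), the biunivocality of $\heart(W)$, combined with the witness $F$-arc lying outside both the proper $s$-suffix and the proper $t$-prefix of $W$, forces any walk covering this arc to trace $W$ entirely, as in the corresponding case of \Cref{thm:linear}. Under (b), either $s$ and $t$ lie outside $R^+_s(W) \cap R^-_t(W)$ (so every $s$-$t$ walk already contains $W$), or some $F$-arc lies in a different WCC from $s,t$ in the st-induced subgraph, and covering it requires using an arc outside $R^+_s(W) \cap R^-_t(W)$, which by definition inserts $W$ into the prefix or suffix of that walk. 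Under (a), the key observation is that an $s$-$t$ walk without $W$ as subwalk enters \river{} in at most one maximal sub-walk: a second visit would force a \sea{}-to-\cloud{} segment which contains $W$ by \Cref{lem:clouds}~\labelcref{lem:clouds:forwards}; moreover this \river{}-sub-walk lies inside the st-induced \river{}, because the walk's prefix from $s$ (resp.\ suffix to $t$) witnesses that each of its arcs is in $R^+_s(W)$ (resp.\ $R^-_t(W)$). Thus $k$ such walks supply at most $k$ walks of an $F$ walk-cover of the st-induced \river{}, contradicting~(a).

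For the $(\Rightarrow)$ direction, assume none of (a), (b), (c) holds; we construct $k$ $s$-$t$ walks whose union covers $F$ without any containing $W$ as subwalk. From the negation of~(a), fix walks $R_1,\dots,R_{k'}$ with $k' \le k$ inside the st-induced \river{} covering all its $F$-arcs. By the negation of~(b), $s$, $t$, and every $F$-arc lie in a single WCC of the st-induced subgraph, so each $R_i$ can be framed into a full $s$-$t$ walk $P_i$ by prepending an $s$-to-entry walk and appending an exit-to-$t$ walk inside $R^+_s(W) \cap R^-_t(W)$ (truncating when $s$ or $t$ already lies in \river{}). The $F$-arcs in \sea{} and \cloud{} are absorbed by letting at least one $P_i$ traverse the sea- and cloud-related SCCs, which are strongly connected by \Cref{lem:clouds}~\labelcref{lem:clouds:connectedseacloud} and fully contained in the st-induced subgraph. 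For trivial $W$, the $F$-arcs of $\heart(W)$ are $s$-$t$ suffix-prefix covered by the negation of~(c), so each is covered by appending the proper $s$-suffix or the proper $t$-prefix of $W$ to some $P_i$; for a non-trivial heart $W$, every $F$-arc in $\vapor(W)$ lies on the interior path of $W$ and is covered by an $s$-$t$ walk that enters $\vapor{}$ at $a$ and exits back to \sea{} at an interior split node, or enters from \cloud{} at an interior join node and exits at $b$, using the non-triviality of $\heart(W)$ and \Cref{lem:clouds}~\labelcref{lem:clouds:connectivity}.

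The main obstacle is the $(\Rightarrow)$ direction, where the constructed $s$-$t$ walks must simultaneously be $F$-covering and avoid $W$ as subwalk. The two pivotal facts making the construction go through are that any walk contained in $R^+_s(W) \cap R^-_t(W)$ cannot contain $W$ as subwalk by definition, and that \Cref{lem:clouds}~\labelcref{lem:clouds:connectivity} restricts transitions between hydrostructure components to the enumerated incidence pairs; together these let us stitch \river{}-walks, sea/cloud SCC traversals, and heart suffix/prefix excursions into valid $s$-$t$ walks, but the case analysis on whether $W$ is trivial or a non-trivial heart, on the location of $s$ and $t$, and on which hydrostructure component each $F$-arc inhabits must be carried out systematically.
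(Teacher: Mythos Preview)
Your overall approach mirrors the paper's: contrapositive for $(\Rightarrow)$, case split on (a)/(b)/(c) for $(\Leftarrow)$, and your $(\Leftarrow)$ arguments are correct and in fact more carefully argued than the paper's (in particular your observation for (a), that each $W$-avoiding $s$-$t$ walk meets the \river{} in at most one maximal subwalk lying inside the st-induced \river{}, is exactly the right mechanism).

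There is, however, a genuine error in your final paragraph's ``pivotal fact'': it is \emph{not} true that a walk whose elements all lie in $R^+_s(W)\cap R^-_t(W)$ cannot contain $W$ as subwalk. When $s\in R^-(W)$ and $t\in R^+(W)$ (which the negations of (a)--(c) do not preclude), the paper notes that $R^+_s(W)=R^-_t(W)=G$, so the st-induced subgraph is all of $G$ and certainly contains walks with $W$ as subwalk. The correct implication is the converse, which you already use for (a): if an $s$-$t$ walk avoids $W$, then each of its elements lies in the st-induced subgraph. Consequently, in your $(\Rightarrow)$ construction you cannot justify $W$-avoidance merely by ``staying inside $R^+_s(W)\cap R^-_t(W)$''; you must argue it directly from the hydrostructure. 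The right reason your stitched walks avoid $W$ is that none of them makes a \sea{}-to-\cloud{} transition (the \river{} segments, the sea/cloud-related SCC traversals, and the partial \vapor{} excursions you describe are all designed precisely so that no $a$-to-$b$ traversal occurs), and then \Cref{lem:clouds}~\labelcref{lem:clouds:forwards} gives the conclusion. This is implicit in how you built the pieces, but the justification you wrote down is the wrong direction of the implication and would need to be replaced.
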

\begin{proof}
    Assume that all three conditions are false.
    \begin{itemize}
        \item 
        If $W$ is a non-trivial heart, then given \ref{thm:coveringlinear:r} is false, there exist $s$-$t$ walks not having $W$ as its subwalk. Further, for each $f \in F$ which shares the WCC with $s$ and $t$, there is an $s$-$t$ walk containing $f$ (as $f$ is in the st-induced subgraph of $W$) without having $W$ as subwalk. 
        And by \ref{thm:coveringlinear:river} using up to $k$ $s$-$t$ walks all $F$-arcs can be covered without any such walk having $W$ as subwalk.

        \item
        If $W$ is a bridge-like trivial walk, then similar to the non-trivial $\heart$ case all $F$-arcs outside $\heart(W)$ can be covered with $k$ $s$-$t$ walks that do not have $W$ as subwalk.
        Additionally, since by \ref{thm:coveringlinear:suffix-prefix} all arcs in $F \cap \heart(W)$ are $s$-$t$ suffix-prefix covered in $W$, these $k$ walks also cover the $F$-arcs inside $\heart(W)$.
        
        \item
        If $W$ is an avertible trivial walk, then all $F$-arcs in its heart are $s$-$t$ suffix-prefix covered in $W$, so by \Cref{lem:vapor-only-clouds-covering} there exists an $s$-$t$ walk that covers all $F$-arcs without having $W$ as subwalk.
    \end{itemize}
    
    Assume that one condition is true at a time.
    \begin{itemize}
        \item
        If the $F$-arcs in the st-induced \river{} of $W$ cannot be covered with $k$ walks then at least one walk needs to exit and reenter the river, making $W$ safe.
        
        \item
        If $F \cup \{s, t\}$ is not completely contained in a single WCC of the st-induced subgraph of $W$, then there is no $s$-$t$ walk without $W$ as subwalk (if $\{s, t\}$ are not in the st-induced subgraph of $W$), or for an $F$-arc $f$ there is no $s$-$t$ walk that contains $f$ without $W$ as subwalk (if $f$ is not in the st-induced subgraph of $W$ or not in same WCC as $\{s,t\}$ in the st-induced subgraph of $W$).
        
        \item 
        If $W$ is trivial and some arc in $F \cap \heart(W)$ is not $s$-$t$ suffix-prefix covered in $W$, then to cover that arc an $s$-$t$ walk needs to enter $W$ at its start and leave it at its end, making $W$ safe.
        \qedhere
    \end{itemize}
    
\end{proof}

\begin{theorem}[name = Wings of Non-Trivial $k$-st $F$-covering Safe Hearts, restate = coveringlinearntwings]
    \label{thm:coveringlinear-nt-wings}
    A non-trivial walk $W^lWW^r$, where $W$ is its heart, $W^l$ its left and $W^r$ its right wing and where $W$ contains at least two arcs is \lcs{k} iff $W$ is \lcs{k} and
    \begin{enumerate}[label = (\alph*), nosep]
        \item The $F$-arcs in the st-induced \river{} of $W$ cannot be covered with $k$ walks, or
        \label{thm:coveringlinear-nt-wings:river}
        
        \item Either both of $s \in W^l$ and $t \in W^r$ are false, or exactly one is true with: \label{thm:coveringlinear-nt-wings:st}
        \begin{itemize}[nosep]
            \item if $s \in W^l$, then $t \notin R^+(W) \cup W^r$ and $F\not\subseteq R^-_t(W) \cup $ suffix of $W_l$ starting in $s$, and
            \item if $t \in W^r$, then $s \notin R^-(W) \cup W^l$ and $F\not\subseteq R^+_s(W) \cup $ prefix of $W_r$ ending in $t$.
        \end{itemize}
    \end{enumerate}
\end{theorem}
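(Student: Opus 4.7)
The plan is to adapt the proof of \Cref{thm:linear-nt-wings} to the $F$-covering setting: every time the original argument invokes covering ``all arcs'', I will replace it by covering ``all $F$-arcs'', and wherever plain reachabilities are used I will substitute the $s$- and $t$-restricted reachabilities $R^+_s(W)$ and $R^-_t(W)$. One implication is immediate: since $W$ is a subwalk of $W^lWW^r$, every $s$-$t$ walk that contains $W^lWW^r$ also contains $W$, so if $W^lWW^r$ is \lcs{k} then so is $W$. It therefore remains, assuming $W$ is \lcs{k}, to show the equivalence between safety of $W^lWW^r$ and \labelcref{thm:coveringlinear-nt-wings:river} $\lor$ \labelcref{thm:coveringlinear-nt-wings:st}.

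For the $(\Leftarrow)$ direction, consider any $F$-covering collection of up to $k$ $s$-$t$ walks. If \labelcref{thm:coveringlinear-nt-wings:river} holds, then some walk must leave and re-enter the \river{} while covering an $F$-arc of the st-induced \river{}; by \Cref{lem:clouds}~\labelcref{lem:clouds:connectivity,lem:clouds:forwards} this forces a traversal from $\sea(W)$ to $\cloud(W)$, hence a subwalk equal to $W$, and combined with \Cref{lem:wings-in-sea-cloud} the whole $W^lWW^r$ appears as subwalk. If instead \labelcref{thm:coveringlinear-nt-wings:st} holds, I split into sub-cases: when neither $s \in W^l$ nor $t \in W^r$ the wings are forced as ordinary univocal extensions of the heart; when only $s \in W^l$ holds, the sub-condition $t \notin R^+(W) \cup W^r$ prevents a covering walk from terminating inside $W^r$ after traversing $W$, while $F \not\subseteq R^-_t(W) \cup\,$(suffix of $W^l$ from $s$) supplies an $F$-arc $f$ that can only be covered by a walk traversing the entire prefix of $W^l$ before $s$, which by the R-univocality of $W^l$ and the bridge-likeness of $W$ drags along all of $W^l$, $W$, and $W^r$; the case where only $t \in W^r$ holds is symmetric.

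For the $(\Rightarrow)$ direction I will argue the contrapositive: assuming neither \labelcref{thm:coveringlinear-nt-wings:river} nor \labelcref{thm:coveringlinear-nt-wings:st} holds, I exhibit an $F$-covering collection of at most $k$ $s$-$t$ walks in which no walk contains $W^lWW^r$ as subwalk. The negation of \labelcref{thm:coveringlinear-nt-wings:river} provides up to $k$ walks covering the $F$-arcs of the st-induced \river{} without traversing $W$. The negation of \labelcref{thm:coveringlinear-nt-wings:st} then supplies at least one of the following shortcuts: an occurrence of $t$ in $R^+(W) \cup W^r$ that lets a walk terminate right after $W$ without completing $W^r$; an occurrence of $s$ in $W^l$ together with $F \subseteq R^-_t(W) \cup$ (suffix of $W^l$ from $s$), letting walks start at $s$ and bypass the prefix of $W^l$ preceding $s$; or the symmetric situation for $t$. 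Combining these shortcuts with walks through $R^+_s(W) \cap R^-_t(W)$ for any remaining $F$-arcs outside the \river{} yields the desired collection.

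The main obstacle is the middle case of the $(\Leftarrow)$ argument: proving that the $F$-arc supplied by $F \not\subseteq R^-_t(W) \cup$ (suffix of $W^l$ from $s$) truly forces the entire $W^lWW^r$ into some walk of every covering collection. The crux is that such an $F$-arc $f$ cannot be reached by any $s$-$t$ walk avoiding $W$ (else $f \in R^-_t(W)$), cannot lie in the suffix of $W^l$ from $s$ by assumption, and, by the R-univocality of $W^l$ together with the bridge-likeness of $W$, is reachable only by a walk that enters $W^l$ at its first (join) node, traverses all of $W^l$ including the prefix before $s$, then traverses $W$ contiguously, and finally $W^r$ until reaching $t$ beyond $\wh(W^r)$ thanks to $t \notin R^+(W) \cup W^r$. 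Making this chain of implications rigorous, and dualising it cleanly for the case $t \in W^r$, is the delicate bookkeeping step of the proof.
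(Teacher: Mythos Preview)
Your proposal follows essentially the same approach as the paper's proof: both argue condition \labelcref{thm:coveringlinear-nt-wings:river} via the \river{} walk-cover forcing a $\sea$-to-$\cloud$ traversal (invoking \Cref{lem:clouds}~\labelcref{lem:clouds:forwards,lem:clouds:connectivity} and \Cref{lem:wings-in-sea-cloud}), and condition \labelcref{thm:coveringlinear-nt-wings:st} via the univocality of the wings together with a case analysis on the location of $s$, $t$, and the witnessing $F$-arc---your write-up is in fact more explicit about the two directions than the paper's rather informal sketch. One small point: in your contrapositive for $(\Rightarrow)$, the enumeration of ``shortcuts'' does not visibly include the case where \emph{both} $s \in W^l$ and $t \in W^r$ hold simultaneously (which also negates \labelcref{thm:coveringlinear-nt-wings:st}); make sure you treat that case, as it is the easiest one but must still be named.
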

\begin{proof}
Similar to the non-trivial hearts, not being able to cover the $F$-arcs of the st-induced \river{} of $W$ with $k$ walks ensures that one walk needs to exit and re-enter the river, making $W$ safe along with its wings as such a walk goes from $\sea(W)$ to $\cloud(W)$ which completely contain the wings (by \Cref{lem:wings-in-sea-cloud}). 

Also, $s\notin W^l$ and $t\notin W^r$ makes the wings safe if $W$ is safe as they are R-univocal and univocal, respectively. %
They can also be safe with $s\in W^l$, if we are required to separately cover the prefix of the left wing before $s$,
and the location of $t$ and the $F$-arcs does not allow the walk to end prematurely before covering the entire $W^lWW^r$. Note that if $F$ is entirely contained in $R_t^-(W)$ and suffix of $W_l$ starting in $s$, $F$ would be covered starting $W_l$ only from $s$ itself, negating the above requirement making the wings not safe.
    Further, covering the entire $W^lWW^r$ requires such a walk to have a subwalk from $\sea(W)$ to $\cloud(W)$ without ending before completing the right wing. This can be ensured by $t\notin R^+(W)\cup W^r$, else it can also end in $\vapor(W)$, $\sea(W)$ (as $W$ is non-trivial) or before completely covering $W^r$.
    By symmetry we have the corresponding criteria for $t\in W^r$.
\end{proof}

\paragraph{Implementation.}
The computation of $F$-covered SCCs, $R^+_s(W)$ and $R^-_t(W)$ clearly requires linear time in the size of the graph, i.e., $O(m)$. The $F$ walk-cover can be computed using flows in $O(mn)$ time as described earlier. Furthermore, the number of executions of the two-pointer algorithm are still $O(n)$ (see \Cref{thm:coveringlinear-avertible-algo}). Thus, the time bounds for verification and simple enumeration algorithms does not change in the subset covering model.
In the improved enumeration algorithm we need to additionally address maintaining $R^+_s(W)$, $R^-_t(W)$, and the size of the walk cover of the st-induced \river{} of $W$ during the incremental computation procedure. Since by definition $R^+_s(W)$ (and $R^-_t(W)$) are the same as $R^+(W)$ (and $R^-(W)$) except for the source (and target) of the walks, we can use the same procedure (described in \Cref{sec:algorithms}) to compute them incrementally in total $O(m)$ time. Moreover, the st-induced \river{} of $W$ ($R^+_s(W)\cap R^-_t(W)\cap \river(W)$) depends only on $\river(W)$ and the location of $s$ and $t$ in it, so we can again use the same arguments (as described in \Cref{sec:algorithms}) to prove that each arc enters and leaves this subgraph at most once during the incremental computation. Thus, the size of the $F$ walk-cover can be maintained in $O(m^2)$ time similar to the size of the walk-cover, resulting in the same bound as $E$-covering model.

\begin{theorem}
\label{thm:Fcov}
The bounds of \Cref{thm:improved-linear-runtimes} also hold for the corresponding $F$-covering models.
\end{theorem}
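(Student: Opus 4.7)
The plan is to show component-by-component that every ingredient of the algorithm behind Theorem~\ref{thm:improved-linear-runtimes} admits an $F$-covering analogue with the same asymptotic cost, so that the overall $O(mn + n\cdot f(m,n) + |out|)$ bound carries over unchanged. I would split the argument into three stages: static computation used for verification, incremental maintenance of the $s,t$-restricted reachabilities, and incremental maintenance of the $F$-walk-cover.

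First, for verification on a single candidate walk $W$, the characterisations of Theorems~\ref{thm:coveringlinear} and \ref{thm:coveringlinear-nt-wings} depend only on (i) the $F$-covered SCCs, (ii) the restricted sets $R^+_s(W)$ and $R^-_t(W)$, (iii) a minimum $F$-walk-cover of the st-induced \river, and (iv) the $s$-$t$ suffix-prefix coverage status of the $F$-arcs in $\heart(W)$. Items (i), (ii), (iv) are computable in $O(m)$ time by a standard SCC pass, by two BFS calls that mirror Algorithm~\ref{alg:hydrostructure} but originate at $s$ (resp.~terminate at $t$) in $G\setminus \wh(W)$ (resp.~$G\setminus \wt(W)$), and by a single linear scan of $W$. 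For (iii) I would invoke the minimum-flow reduction with unit demand only on $F$-arcs, which runs in $O(mn)$. Together these reproduce the $O(m+f(m,n))$ verification bound with the same split $f(m,n)=O(m)$ when $k=1$ or $k\geq m$ and $f(m,n)=O(mn)$ otherwise.

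Next, I would run the two-pointer algorithm on each of the $O(n)$ bridge-like maximal \cs{1} walks guaranteed by Theorems~\ref{thm:prelim}\ref{item:omnitig-properties} and \ref{thm:max-omnitig-bounds}; trivial avertible ones are already covered by Theorem~\ref{thm:coveringlinear-avertible-algo}. The key observation is that $R^+_s(\cdot)$ differs from $R^+(\cdot)$ only in the starting point of the sibling-reachability exploration, so Lemmas~\ref{lem:simplified-clouds-computation} and \ref{lem:incremental-clouds} apply verbatim: the sibling-reachability layers only grow as the window advances, and each node and arc joins $R^+_s(\cdot)$ at most once during one two-pointer execution; the symmetric statement holds for $R^-_t(\cdot)$. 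By intersection, both the st-induced subgraph and the st-induced \river inherit a ``each element enters and leaves at most once'' monotonicity. This yields an aggregate $O(m)$ maintenance cost per two-pointer execution, hence $O(mn)$ over all $O(n)$ bridge-like walks for the reachability structures.

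The main obstacle is maintaining the size of a minimum $F$-walk-cover of the st-induced \river through these insertions and deletions. Because each arc of this subgraph enters and leaves at most once per two-pointer execution, I would invoke the folklore fully-dynamic max-flow update already used in Theorem~\ref{thm:improved-linear-runtimes}: each update costs $O(m)$, giving $O(m^2)$ per \cs{1} walk and matching $f(m,n)=O(m^2)$ in the general $k$ case; for $k=1$ or $k\geq m$ the check collapses to the trivial predicate ``does the st-induced \river contain an $F$-arc outside an SCC (for $k=1$) or simply any $F$-arc (for $k=1$)'' or is always false (for $k\geq m$), giving $f(m,n)=O(m)$, as in the $E$-covering case. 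The delicate part is certifying that restricting unit demand to $F$-arcs does not break the $O(m)$-per-update cost: each arc insertion/deletion changes the demand on at most one arc and therefore triggers pushing or cancelling at most one unit of flow, which is precisely the invariant that powers the analysis of the unrestricted walk-cover. Summing across the $O(n)$ bridge-like walks and adding the $|out|$ output term reproduces the bound of Theorem~\ref{thm:improved-linear-runtimes}.
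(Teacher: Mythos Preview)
Your proposal is correct and follows essentially the same approach as the paper: both argue that $R^+_s(\cdot)$ and $R^-_t(\cdot)$ are computed and incrementally maintained exactly like $R^+(\cdot)$ and $R^-(\cdot)$ (differing only in the start/end node), that the st-induced \river{} therefore inherits the ``each element enters and leaves at most once'' property, and that the $F$-walk-cover is maintained by the same fully-dynamic flow argument with unit demand restricted to $F$-arcs. Your treatment is slightly more explicit than the paper's one-paragraph justification (in particular on why the per-update flow cost stays $O(m)$), but the structure and all key steps coincide; the one small slip is your garbled description of the $k=1$ predicate, which should simply be whether the $F$-arcs in the st-induced \river{} lie on a single walk, but this does not affect the argument.
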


\section{Safety for Subset Visibility}
\label{sec:subsetVisApx}
In subset visibility we limit the solution of the problem to its visible arcs. We thus define $vis_F(W)$ as the substring of a walk $W$ which belongs to $F$. Note that this does not change the solution of the problem but only its representation, which is now limited to the visible set $F$. 
So if a walk $W$ is a solution in the $E$-visible %
model for a problem, then we have $vis_F(W)$ is a solution in the $F$-visible model for the problem, where $F\subseteq E$. 
Abusing notation we also simply say that $W$ is a solution for the  $F$-visible model of the problem.

However, note that the safe solutions ignore the non-visible part of the walk to compute safety. For example \Cref{th:circularSafe} %
relaxes to the visible part of $\vapor(\heart(W))$ to being a single sequence of arcs (possibly disconnected). Henceforth, we shall continue to refer to such a sequence as a visible path despite being disconnected. Thus, relaxing the criteria for safety implies that even though the $F$-visible solution for a problem is the same as the $E$-visible solution, we have that every $E$-visible safe walk is $F$-visible safe but not vice-versa.
As a result, in order to characterise the safety in the $F$-visible model, we relax the definition of the hydrostructure 
by relaxing the forward and backward reachability of a walk as follows:

\begin{definition}
For a walk $W$ where $\wh{}(W),\wt{}(W)\in F$, we define the \emph{restricted} forward and backward  $F$-visible reachability as
    \begin{itemize}[nosep]
        \item $R_F^+(W) = \{x \in G \mid \exists \text{ $\wt{}(W)$-$x$ walk } W' \text{s.t. } vis_F(W) \text{ is not a subwalk of $vis_F(W')$}\}$,
        \item    $R_F^-(W) = \{x \in G \mid \exists \text{ $x$-$\wh{}(W)$ walk } W' \text{s.t. } vis_F(W) \text{ is not a subwalk of $vis_F(W')$}\}$.
    \end{itemize}
\end{definition}

Notice that the modified hydrostructure is merely more relaxed in terms of the \vapor{}. For the $F$-visible model, \Cref{lem:cloud-cases} relaxes the criteria of the \vapor{} being more than a single path, to having a single visible path (see \Cref{fig:f-visible-hears-wings}). Similarly, the properties of the hydrostructure are adapted by considering $vis_F(W)$ instead of $W$. 

\begin{figure}[htb]
    \centering
    \includegraphics[trim=1 3 1 3, clip, scale=1.4]{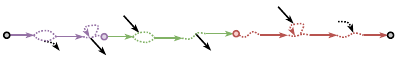}
    \caption{A non-trivial visible walk with visible heart in green, and visible wings in violet and red, where the dotted arcs are invisible.
    }
    \label{fig:f-visible-hears-wings}
\end{figure}

The \textit{visible} wings of a walk $W^l_{F}$ and $W^r_{F}$ and the visible heart $\heart_F(W)$ require \textit{visible splits} and \textit{visible joins} in terms of only visible arcs possibly having invisible arcs (or even subgraphs) between them. Thus, for arcs $e,f\in F$ we consider $(e,f)$ as adjacent if there exists a path from $\head{}(e)$ to $\tail{}(f)$  in $G\setminus F$. %

\paragraph{Implementation.} 

In order to present the verification and enumeration algorithms for any model, it is sufficient to describe the computation of the hydrostructure. The only difference from \Cref{sec:hydrostructure} is that we shall now be using {\em visible} adjacency and hence {\em visible} splits and joins. In order to compute these efficiently we pre-compute the all-pairs reachability among endpoints of arcs in $F$ in $G\setminus F$. This can be computed performing a traversal (BFS or DFS) from the $k$ endpoints of arcs in $F$ in $O(mk)$ time, where $k\leq \min(n,|F|)$. Thus, all the previous algorithms are adaptable in the $F$-visible model with an added expense of $O(mn)$.

\begin{theorem}
\label{thm:Fvis}
The bounds of \Cref{thm:improved-linear-runtimes} also hold for the corresponding $F$-visible models with an added $O(mn)$ term.
\end{theorem}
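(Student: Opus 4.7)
The plan is to reduce the $F$-visible algorithms to the $E$-visible (i.e., standard) ones by precomputing a small amount of ``visibility data'' and then running the existing algorithms on a modified notion of adjacency. First, I would precompute, for every endpoint $v$ of an arc in $F$, the set of arc-endpoints of $F$-arcs reachable from $v$ using only arcs in $E\setminus F$. A single BFS/DFS from $v$ in the subgraph induced by $E\setminus F$ takes $O(m)$ time, and since there are at most $2|F|\leq 2m$ such endpoints but only $O(\min(n,|F|))$ distinct ones that matter (we only need to know, for each pair of $F$-arcs $(e,f)$, whether $\head(e)$ can reach $\tail(f)$ via invisible arcs), the total preprocessing cost is $O(mn)$. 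This data structure lets us evaluate the relaxed notions of visible-split, visible-join, and visible-adjacency in constant time per query.

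Next, I would verify that the construction of the modified hydrostructure (using $R_F^+$ and $R_F^-$ in place of $R^+$ and $R^-$) has the same asymptotic cost as in \Cref{thm:hydro_algo}. The algorithm of \Cref{alg:hydrostructure} traverses $G$ from $\wt(W)$ while treating $\wh(W)$ as removed; in the visible version, we instead treat the occurrence of $vis_F(W)$ as forbidden. Using the precomputed visible-adjacency information, a single graph traversal still suffices to compute $R_F^+(W)$ and $R_F^-(W)$ in $O(m)$ time. Correspondingly, the four components $\sea,\cloud,\vapor,\river$ are defined by the same set-theoretic formulas, and all the structural properties of \Cref{lem:clouds} transfer when ``subwalk'' is replaced by ``subwalk of the visible projection,'' because the proofs rely only on the restricted reachability definitions, not on the specific meaning of $W$.

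Third, I would show that the characterisations (\Cref{th:circularSafe,thm:kcircsafe,thm:linear,thm:linear-nt-wings,thm:coveringcircular,thm:coveringlinear,thm:coveringlinear-nt-wings}) translate verbatim after replacing $W$ by $vis_F(W)$ and interpreting ``path'' as ``visible path'' in \Cref{lem:cloud-cases}. Consequently, the verification algorithms remain linear in the size of the walk plus the hydrostructure, and the simple and improved enumeration algorithms run in the stated bounds because (i) the number and total length of maximal \cs{1} walks used as scaffolds in the two-pointer algorithm are still bounded as in \Cref{thm:prelim}\labelcref{item:omnitig-properties} when adapted visibly, and (ii) the incremental maintenance of $R_F^+,R_F^-$ and the river along a bridge-like visible walk reuses exactly the incremental reachability argument from \Cref{sec:algorithms}, because each node/arc still enters $R_F^+(\cdot)$ and leaves $R_F^-(\cdot)$ at most once. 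Hence the analysis of \Cref{thm:improved-linear-runtimes} applies, giving the same $O(mn + n\cdot f(m,n) + |out|)$ bound.

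The main obstacle is the precomputation step: naively recomputing visible adjacency inside each hydrostructure construction would blow up the runtime, so I have to argue carefully that one global $O(mn)$ preprocessing suffices and that every later query about visible splits, visible joins, and visible reachability can be answered in $O(1)$ (or amortised $O(1)$) using only this precomputed information. Once this is established, summing the precomputation with the analogues of \Cref{thm:improved-linear-runtimes} yields the additive $O(mn)$ overhead claimed in the theorem.
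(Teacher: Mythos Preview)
Your proposal is correct and follows essentially the same approach as the paper: precompute visible adjacency by running a BFS/DFS in $G\setminus F$ from each endpoint of an $F$-arc (at most $\min(n,|F|)$ distinct such endpoints, costing $O(mn)$ in total), and then observe that every subsequent algorithm---hydrostructure construction, verification, two-pointer enumeration, incremental maintenance---runs unchanged once visible splits, visible joins, and visible adjacency are available in $O(1)$ per query. The paper's own justification is just a short paragraph stating exactly this; your write-up is considerably more detailed in spelling out why the characterisations and the incremental computation carry over, but the underlying argument is identical.
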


\end{document}